\providecommand{\U}[1]{\protect\rule{.1in}{.1in}}
\newtheorem{theorem}{Theorem}
\newtheorem{example}[theorem]{Example}
\newtheorem{lemma}[theorem]{Lemma}
\newtheorem{remark}[theorem]{Remark}
\newenvironment{proof}[1][Proof]{\noindent\textbf{#1.} }{\ \rule{0.5em}{0.5em}}
\newcommand{\bra}[1]{\left\langle #1 \right|}
\newcommand{\ket}[1]{\left|#1\right\rangle}
\newcommand{\tr}{\text{Tr}}
\newcommand{\bal}{\begin{align*}}
\newcommand{\eal}{\end{align*}}
\newcommand{\bpm}{\begin{pmatrix}}
\newcommand{\epm}{\end{pmatrix}}
\newcommand{\balign}{\begin{align*}}
\newcommand{\ealign}{\end{align*}}
\newcommand{\1}{\openone}
\def\cH{\mathcal{H}}
\def\cN{\mathcal{N}}
\def\cD{\mathcal{D}}
\def\cE{\mathcal{E}}
\newcommand{\be}{\begin{equation}}
\newcommand{\ee}{\end{equation}}
\newcommand{\eps}{{\varepsilon}}
\newcommand{\bea}{\begin{eqnarray}}
\newcommand{\eea}{\end{eqnarray}}
\DeclareMathOperator{\id}{id}
\def\reff#1{(\ref{#1})}
\DeclareRobustCommand\openone{\leavevmode\hbox{\small1\normalsize\kern-.33em1}}
\let\originalleft\left
\let\originalright\right
\def\left#1{\mathopen{}\originalleft#1}
\def\right#1{\originalright#1\mathclose{}}
\begin{document}

\title{Quantum rate distortion coding \protect\\ with auxiliary resources}

\author{Mark M. Wilde, Nilanjana Datta, Min-Hsiu Hsieh, and Andreas Winter\thanks{Mark M. Wilde is with the
School of Computer Science, McGill University, Montr\'{e}al, Qu\'{e}bec,
Canada H3A 2A7. Nilanjana Datta
is with the Statistical Laboratory, University of Cambridge,
Wilberforce Road, Cambridge CB3 0WB, United Kingdom.
Min-Hsiu Hsieh is with
Centre for Quantum Computation and Intelligent Systems (QCIS), Faculty
of Engineering and Information Technology (FEIT), University of
Technology, Sydney (UTS), PO Box 123, Broadway NSW 2007, Australia.
Andreas Winter is with ICREA \&{} F\'{\i}sica Te\`{o}rica: Informaci\'{o} i Fenomens Qu\`{a}ntics,Universitat Aut\`{o}noma de Barcelona, ES-08193 Bellaterra (Barcelona), Spain,
the Department of Mathematics, University of Bristol, Bristol BS8 1TW, UK,
and the Centre for Quantum Technologies, National University of Singapore, 2 Science Drive 3, Singapore 117542, Singapore.
}}

\maketitle

\begin{abstract}
We extend quantum rate distortion theory by considering auxiliary resources
that might be available to a sender and receiver performing lossy quantum data compression.
The first setting we consider
is that of quantum rate distortion coding with the help of a classical side channel. Our result
here is that the regularized entanglement of formation characterizes the quantum rate
distortion function, extending earlier work of Devetak and Berger.
We also combine this bound with the entanglement-assisted bound from our prior work
to obtain the best known bounds on the quantum rate distortion function for an isotropic
qubit source. The second setting we consider is that of quantum rate distortion coding with quantum side
information (QSI) available to the receiver. In order to prove results in this setting, we first state and prove
a quantum reverse Shannon theorem with QSI (for
tensor-power states), which extends the known
tensor-power quantum reverse Shannon theorem. The achievability part of this
theorem relies on the quantum state redistribution protocol, while the converse
relies on the fact that the protocol can cause
only a negligible disturbance to the joint state of the
reference and the receiver's QSI. This quantum reverse
Shannon theorem with QSI naturally leads to quantum rate-distortion theorems
with QSI, with or without entanglement assistance.

\end{abstract}

\begin{IEEEkeywords}quantum rate distortion, quantum side information,
entanglement of purification, isotropic qubit source, quantum reverse Shannon theorem
\end{IEEEkeywords}

\section{Introduction}

Schumacher proved that the optimal rate of data compression of a memoryless,
quantum information source is given by its von Neumann entropy \cite{Schumacher:1995dg}. This data compression limit was evaluated under the requirement that the data compression scheme is {\em{lossless}}, in the sense that the information emitted by the source is recovered with arbitrary precision in the limit of asymptotically many copies of the source. However,
the lack of sufficient storage could make it necessary to compress a source beyond its von Neumann entropy. By the converse of Schumacher's theorem, this would mean that the information recovered after the compression-decompression scheme would suffer a certain amount of distortion compared to the original information. In other words, the data compression scheme would be {\em{lossy}}. 

The theory of lossy quantum data compression is called {\em{quantum rate distortion theory}}, in analogy with its classical counterpart developed by Shannon \cite{Shannon:tf}. It deals with the trade-off between the rate of compression and the allowed distortion. The trade-off is characterized by a rate distortion function which is defined as the minimum rate of data compression for a given distortion, with respect to a suitably defined distortion measure.

In the first paper to discuss quantum rate distortion theory, Barnum considered a symbol-wise entanglement fidelity as a distortion measure \cite{B00}. With respect to it, he obtained a lower bound on the quantum rate distortion function in terms of an entropic quantity, namely, the coherent information \cite{SN96}. Even though this was the first result in quantum rate distortion theory, it is unsatisfactory since the coherent information can become negative, whereas the rate distortion function, by its very definition, is always non-negative.

In \cite{DHW11}, we obtained an expression for the quantum rate distortion function in terms of the entanglement of purification \cite{THLD02}, which, in contrast to the coherent information, is always non-negative. However, our result too is not entirely satisfactory since the expression is given in terms of a regularized formula and hence cannot be effectively computed. Furthermore, there is recent evidence that the entanglement of purification is
a non-additive quantity \cite{CW12}, which if true would lead to further complications in evaluating the expression. The search for a single-letter formula for the quantum rate distortion function hence remains an important open problem.

It is often convenient to consider data compression in the communication paradigm, in which a sender (say, Alice) compresses the information emitted by the quantum information source and sends it to a receiver (say, Bob) who then decompresses it. In this setting, one considers Alice and Bob to have additional, auxiliary resources which they can employ to assist them in their compression-decompression task. One such auxiliary resource is prior shared entanglement between Alice and Bob. In \cite{DHW11}, we proved that in its presence, the quantum rate distortion function is characterized by a single-letter expression in terms of the quantum mutual information. This expression obviously provides a single-letter lower bound on the unassisted quantum rate distortion function, since, for any given distortion, the extra resource could in principle allow for improved compression. Furthermore, this result demonstrates that the coherent information (at least in the form suggested by Barnum \cite{B00}) is irrelevant for the task of unassisted quantum rate distortion coding because half the quantum mutual information is a lower bound on the unassisted quantum rate distortion function and it is also an upper bound on the coherent information.

\section{Summary of Results}

In this paper, we consider rate-distortion coding in the presence of other auxiliary resources, e.g., access to a noiseless classical side channel or quantum side information.
Doing so not only provides new bounds on the unassisted quantum rate distortion function, but it
also offers new scenarios that are unique to the quantum setting.

Alice and Bob are said to have a noiseless, forward classical side channel if Alice is allowed unlimited classical communication to Bob. Quantum rate distortion in the presence of such an auxiliary resource was studied for the special case of an isotropic qubit source -- i.e., one that produces a maximally mixed state on a qubit system -- by Devetak and Berger \cite{Devetak:2002it}. We consider the general case of an arbitrary, memoryless quantum information source, and prove that the corresponding rate distortion function is given in terms of a regularized entanglement of formation~\cite{BDSW96}. This classically-assisted rate distortion function serves as an alternate lower bound to the unassisted quantum rate distortion function, and we show that it can be tighter than the above entanglement-assisted lower bound in some cases.

Quantum rate distortion coding in the presence of quantum side information (QSI) corresponds to the following setting: Suppose a third party (say, Charlie) maps the source state $\rho$ via some isometry to a bipartite state $\rho_{AB}$, and distributes the systems $A$ and $B$ to Alice and Bob, respectively. The goal is for Alice to transfer her system $A$ to Bob, up to some given distortion, using as little quantum communication as possible. The rate distortion function is then defined as the minimum rate of quantum communication required for this task, evaluated in the limit in which Alice and Bob share asymptotically many copies of the state $\rho_{AB}$. We obtain an expression for the rate distortion function under the assumption that the protocol causes asymptotically negligible disturbance to the joint state of Bob and a reference system $R$ that purifies the state $\rho_{AB}$. This assumption may be motivated naturally in light of the fact that Bob may want to reuse his quantum side information in some future information processing task.
Furthermore, if we allow Alice and Bob to have sufficient prior shared entanglement in addition to Bob's QSI, then, under the above assumption, the rate distortion function is given by a single-letter expression in terms of the quantum conditional mutual information.
A classical analogue of the above problem was solved by Wyner and Ziv \cite{WZ76}.
Our results on quantum rate distortion with QSI also generalize Luo and Devetak's results on classical rate distortion in the presence of QSI \cite{LD09} and our prior results in \cite{DHWW12} on quantum-to-classical rate distortion coding with QSI.

The techniques used to prove these results are generalizations of those which we employed in \cite{DHW11} to obtain expressions for the unassisted and entanglement-assisted quantum rate distortion functions. The main ingredient in the proof of the achievability part for the entanglement-assisted case in \cite{DHW11} is the quantum reverse Shannon theorem \cite{BDHSW09,BCR09}, which quantifies the minimum rate of quantum communication required from Alice to Bob in order to asymptotically simulate a memoryless quantum channel, when they share entanglement. (It has been pointed out in both \cite{SV96} and \cite{W02} how a reverse Shannon theorem immediately leads to a rate distortion protocol.)
Analogously, to establish the achievability of our rate distortion functions in the presence of QSI, we employ a generalization of the quantum reverse Shannon theorem to the case in which Bob has QSI as an auxiliary resource. This theorem constitutes a result which is interesting in its own right, and the protocol of quantum state redistribution \cite{DY08,YD09} plays a key role in the proof of the achievability part. 
The achievability of the rate distortion function in the 
unassisted case was proved by using Schumacher compression~\cite{DHW11}, which can be viewed as a special reverse Shannon theorem where the goal is to simulate the identity channel. Analogously, the achievability of our rate distortion function in the presence of a classical side channel is proved by exploiting a variant of Schumacher compression.

The converse proofs of the results in this paper have certain 
similarities, using various identities and entropic inequalities, e.g., the quantum-data processing inequality \cite{SN96}, superadditivity of the quantum mutual information, and the Alicki-Fannes inequality \cite{AF04}. However, a non-trivial aspect of the QSI converse proofs is that they exploit the assumption that the protocol causes only a negligible disturbance to the joint state of the reference system and Bob's QSI. We can then invoke Uhlmann's theorem \cite{U73} as in Refs.~\cite{WHBH12,DHWW12} to demonstrate the existence of a map which does not act at all on Bob's quantum side information, such that the overall map on the source state is close to the original map that acts in part on Bob's quantum side information. As such, this feature of the converse proof is unique to quantum information theory and simply is not present in related classical results \cite{WZ76}.

The paper is organized as follows. In Section~\ref{prelim}, we introduce necessary notation and definitions, especially for the entropic quantities arising in the statements of the theorems. In Section~\ref{quantum}, we introduce the basic concepts of quantum rate distortion theory, and unify our results from \cite{DHW11} on unassisted and entanglement-assisted quantum rate distortion functions in order to obtain a trade-off between the rate of compression and the rate of entanglement consumption. We analyse quantum rate distortion in the presence of a classical side channel in Section~\ref{classical} and in the presence of QSI in 
Section~\ref{sec-qrd-qsi}. The latter section also contains our theorem on the rate distortion function in the presence of both QSI and prior shared entanglement. The necessary generalizations of the quantum reverse Shannon theorem required for our proofs are stated and proved in Section~\ref{qrst}. We conclude in Section~\ref{sec:conclusion} with a summary and some open questions.

\section{Notations and Definitions}
\label{prelim}
\label{sec-def}
Let $\mathcal{B}(\mathcal{H})$ denote the algebra of linear
operators acting on a finite-dimensional Hilbert space $\mathcal{H}$, and let
$\mathcal{D}(\mathcal{H})$ denote the set of positive operators of unit trace
(density operators) acting on $\mathcal{H}$. For any given pure state $|\psi\rangle
\in\mathcal{H}$ we denote the projector $|\psi\rangle\langle\psi|$ simply as
$\psi$. The trace distance between two operators $M$ and $N$ is given by
$$\left\Vert {M-N}\right\Vert_{1}\equiv\text{Tr}(|M-N|),$$ where $|C|\equiv
\sqrt{C^{\dag}C}$. Throughout this paper we restrict our considerations to
finite-dimensional Hilbert spaces, and we take the logarithm to base two.

We denote the Hilbert space associated
to a quantum system~$A$ by $\cH_A$, and the quantum systems
corresponding to $n$ copies of a pure state $\psi_{ABC}^{\otimes n}$
by $A^n$, $B^n$ and $C^n$. For a multiparty state~$\rho_{AB}$, we unambiguously refer to its reduced states
on systems $A$ and $B$ simply as $\rho_A$ and $\rho_B$, respectively, so that it is implicit that
$\rho_A = \tr_B(\rho_{AB})$ and $\rho_B = \tr_A(\rho_{AB})$.
Moreover, we denote a completely positive trace-preserving (CPTP) map
$\mathcal{N}:\mathcal{B}(\mathcal{H}_{A})\rightarrow\mathcal{B}(\mathcal{H}%
_{B})$ simply as $\mathcal{N}_{A\rightarrow B}$. Similarly we denote an
isometry $U:\mathcal{B}(\mathcal{H}_{A})\rightarrow\mathcal{B}(\mathcal{H}%
_{B}\otimes\mathcal{H}_{E})$ simply as $U_{A\rightarrow BE}$. The identity map
on states in $\mathcal{D}(\mathcal{H}_{A})$ is denoted as $\mathrm{{id}}_{A}$.

The von Neumann entropy of a state $\rho\in\mathcal{D}(\mathcal{H}_{A})$ is
defined as $$H(\rho)\equiv H(A)_\rho \equiv -\text{Tr}(\rho\log\rho),$$ and satisfies
the bound $H(\rho) \le \log{\rm{dim }} (\cH_A)$. The conditional entropy
$H(A|B)_{\rho}$,  the quantum mutual information $I(A;B)_{\rho}$, and the conditional quantum mutual
information $I(A;B|C)_{\rho}$ of a tripartite state $\rho_{ABC}$ are defined as follows:
\bea
H(A|B)_{\rho} & \equiv & H(AB)_\rho - H(B)_\rho, \nonumber\\
I(A;B)_{\rho} & \equiv & H(A)_\rho - H(A|B)_{\rho}, \nonumber\\
I(A;B|C)_{\rho} & \equiv & H(A|C)_\rho - H(A|BC)_{\rho}.
\eea
The entanglement of formation \cite{BDSW96} of a bipartite state $\rho_{AB}$ is 
defined as
\be\label{eof} 
E_F(\rho_{AB}) \equiv \min_{\{p(x), |\psi^x_{AB}\rangle\}} \sum_x p(x) \, H(A)_{\psi^x},
\ee
where $\{p(x), |\psi^x_{AB}\rangle\}$ is an ensemble of pure states such that
$\rho_{AB} = \sum_x p(x) \, |\psi^x_{AB}\rangle\langle \psi^x_{AB}|$. 
The entanglement of purification \cite{THLD02} of a bipartite state $\rho_{AB}$ is 
given by the following expression:%
\be\label{eop}
E_p(\rho_{AB}) = \min_{\cN} H\left(({\rm{id}}_B \otimes \cN_{E\to E^\prime})(\sigma_{BE}(\rho)) \right),
\ee
where $\sigma_{BE}(\rho)= \tr_A (\phi^\rho_{ABE})$, $ \phi^\rho_{ABE}$ is some purification of $\rho_{AB}$, and the minimization is over all CPTP maps $\cN_{E\to E^\prime}$ acting on the system $E$.

We also employ the following lemmas in our proofs.

\begin{lemma}
[Quantum data processing inequality \cite{SN96}]\label{dataproc} If \
$\omega_{AB^{\prime}} = (\mathrm{{id}}_{A} \otimes\mathcal{N}_{B\to B^{\prime
}}) (\sigma_{AB})$, where $\mathcal{N}_{B\to B^{\prime}}$ is a CPTP map, then
\begin{equation}
I(A;B)_{\sigma}\ge I(A;B^{\prime})_{\omega}.
\end{equation}

\end{lemma}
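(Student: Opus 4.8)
The plan is to reduce the statement to strong subadditivity of the von Neumann entropy by passing to a Stinespring dilation of the channel. First I would realize the CPTP map $\mathcal{N}_{B\to B'}$ through an isometry $U_{B\to B'E}$ into the output system together with an environment $E$, so that $\mathcal{N}_{B\to B'}(\cdot) = \tr_E\big(U(\cdot)U^\dag\big)$. Applying $\id_A \otimes U_{B\to B'E}$ to $\sigma_{AB}$ produces a dilated state $\tau_{AB'E} = (\id_A\otimes U)\,\sigma_{AB}\,(\id_A\otimes U^\dag)$ whose partial trace over $E$ is exactly $\omega_{AB'}$.

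The key observation is that an isometry acting on $B$ alone preserves every entropy that enters $I(A;B)_\sigma$. Concretely, since $U$ is an isometry I have $H(A)_\tau = H(A)_\sigma$, $H(B'E)_\tau = H(B)_\sigma$, and $H(AB'E)_\tau = H(AB)_\sigma$: tracing out $B'E$ returns $\sigma_A$ because $U^\dag U = \id_B$, while $\tau_{B'E}=U\sigma_B U^\dag$ and $\tau_{AB'E}=(\id_A\otimes U)\sigma_{AB}(\id_A\otimes U^\dag)$ share the spectra of $\sigma_B$ and $\sigma_{AB}$, respectively, since conjugation by an isometry leaves the nonzero eigenvalues unchanged. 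Consequently $I(A;B)_\sigma = I(A;B'E)_\tau$, and because $\omega_{AB'}=\tau_{AB'}$ the claim becomes $I(A;B'E)_\tau \ge I(A;B')_\tau$, i.e., that discarding the environment $E$ cannot increase the information $A$ shares with the surviving systems.

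Expanding this last inequality gives $I(A;B'E)_\tau - I(A;B')_\tau = H(AB')_\tau + H(B'E)_\tau - H(AB'E)_\tau - H(B')_\tau$, which is non-negative precisely by strong subadditivity applied to the tripartition $(A,B',E)$. That closes the reduction, since the chain $I(A;B)_\sigma = I(A;B'E)_\tau \ge I(A;B')_\tau = I(A;B')_\omega$ is exactly the assertion of the lemma.

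The main obstacle is, of course, strong subadditivity itself (equivalently, monotonicity of the quantum relative entropy under CPTP maps): every step above is bookkeeping with entropies, and all of the analytic content is concentrated in that single inequality, a deep theorem of Lieb and Ruskai. An entirely equivalent and even shorter route would be to write $I(A;B)_\sigma = D(\sigma_{AB}\,\|\,\sigma_A\otimes\sigma_B)$ and invoke monotonicity of the relative entropy directly under $\id_A\otimes\mathcal{N}_{B\to B'}$, using that this map sends $\sigma_A\otimes\sigma_B$ to $\omega_A\otimes\omega_{B'}$; I would mention this as the natural alternative, noting that it merely trades one deep input for its logical twin.
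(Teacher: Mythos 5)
Your proof is correct. One point of comparison worth making explicit: the paper itself offers no proof of this lemma at all---it is stated as an imported result with a citation to \cite{SN96}---so your argument supplies something the paper deliberately leaves out. Your route is the standard one: dilate $\mathcal{N}_{B\to B'}$ to an isometry $U_{B\to B'E}$, note that conjugation by $\id_A\otimes U$ preserves all relevant entropies (your spectral argument for $H(B'E)_\tau=H(B)_\sigma$ and $H(AB'E)_\tau=H(AB)_\sigma$ is the right justification), so that $I(A;B)_\sigma=I(A;B'E)_\tau$, and then reduce the claim to the statement that discarding $E$ cannot increase mutual information. Your expansion
\[
I(A;B'E)_\tau-I(A;B')_\tau=H(AB')_\tau+H(B'E)_\tau-H(AB'E)_\tau-H(B')_\tau\geq 0
\]
is indeed exactly strong subadditivity for the tripartition $(A,B',E)$, and the final identification $\tau_{AB'}=\omega_{AB'}$ closes the chain. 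You are also right to flag that all the analytic content sits in that single inequality, and that the alternative phrasing via $I(A;B)_\sigma=D(\sigma_{AB}\,\|\,\sigma_A\otimes\sigma_B)$ together with monotonicity of relative entropy under $\id_A\otimes\mathcal{N}$ (which maps $\sigma_A\otimes\sigma_B$ to $\sigma_A\otimes\omega_{B'}=\omega_A\otimes\omega_{B'}$) is an equivalent one-line substitute; either deep input is an acceptable foundation, and nothing in the paper's usage of the lemma depends on which one is chosen.
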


\begin{lemma}
[Alicki-Fannes Inequality \cite{AF04}]\label{Thm:AF}Suppose two
states $\rho_{AB}$ and $\sigma_{AB}$ are close in trace distance:%
\[
\left\Vert \rho_{AB}-\sigma_{AB}\right\Vert _{1}\leq\eps, \quad {\hbox{for some }} \eps \ge 0.
\]
Then their respective conditional entropies are close:%
\begin{equation}
\left\vert H(A| B)_{\rho}-H(A| B)_{\sigma}\right\vert
\leq4\eps\log\left\vert A\right\vert +2 \, h_{2}(\eps),
\end{equation}
where $h_2(\eps)\equiv - \eps \log \eps - (1-\eps) \log (1-\eps)$ is the binary entropy.
\end{lemma}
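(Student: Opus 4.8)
The plan is to prove this dimension-robust continuity bound by the standard argument of Alicki and Fannes, whose whole point is to avoid the naive route of applying the Fannes inequality separately to the two terms in $H(A|B) = H(AB)_\rho - H(B)_\rho$; that route would produce an error term scaling with $\log\paren{\abs{A}\abs{B}}$ and $\log\abs{B}$, whereas we want dependence on $\abs{A}$ alone. First I would decompose the traceless Hermitian difference as $\rho_{AB} - \sigma_{AB} = \Delta_{+} - \Delta_{-}$, where $\Delta_\pm \ge 0$ have orthogonal support. Since $\tr\paren{\rho_{AB}-\sigma_{AB}} = 0$ we have $\tr\Delta_{+} = \tr\Delta_{-} =: t$, and $\norm{\rho_{AB}-\sigma_{AB}}_1 = \tr\Delta_+ + \tr\Delta_- = 2t$, so that $t = \half\norm{\rho_{AB}-\sigma_{AB}}_1 \le \eps/2$. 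Assuming $t>0$ (the case $t=0$ is trivial since then $\rho_{AB}=\sigma_{AB}$), I set the normalized states $\omega_{AB} = \Delta_+/t$ and $\tau_{AB} = \Delta_-/t$.

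The key observation is that $\rho_{AB} + t\,\tau_{AB} = \sigma_{AB} + t\,\omega_{AB}$, so both $\rho_{AB}$ and $\sigma_{AB}$ occur as the dominant branch of one and the same mixture $\xi_{AB} = \frac{1}{1+t}\paren{\rho_{AB} + t\,\tau_{AB}} = \frac{1}{1+t}\paren{\sigma_{AB} + t\,\omega_{AB}}$. I would then use the following sandwich for conditional entropy under a binary mixture $\xi = \lambda\,\alpha + \paren{1-\lambda}\beta$: introducing a classical flag register $X$ recording the branch, concavity of conditional entropy (equivalently, that conditioning on the flag cannot increase entropy) gives the lower bound $H(A|B)_\xi \ge \lambda H(A|B)_\alpha + \paren{1-\lambda}H(A|B)_\beta$, while $H(A|B)_\xi = H(A|BX) + I(A;X|B)$ together with $I(A;X|B) \le H(X) \le h_{2}(\lambda)$ gives the matching upper bound with an additive $h_{2}(\lambda)$ correction. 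Applying the lower bound to the decomposition of $\xi_{AB}$ through $\rho_{AB}$ and the upper bound to the decomposition through $\sigma_{AB}$, the common term $H(A|B)_\xi$ cancels upon subtraction and, after multiplying through by $1+t$, yields
\be
H(A|B)_\rho - H(A|B)_\sigma \le t\,\brac{H(A|B)_\omega - H(A|B)_\tau} + \paren{1+t}\,h_{2}\!\paren{\tfrac{t}{1+t}}.
\ee

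To finish, I would invoke the bound $\abs{H(A|B)} \le \log\abs{A}$, valid for every state (the upper bound is $H(A|B)\le H(A)\le \log\abs{A}$, and the lower bound $H(A|B)\ge -\log\abs{A}$ follows by purifying to a system $E$ and writing $H(A|B) = -H(A|E)$), so that $H(A|B)_\omega - H(A|B)_\tau \le 2\log\abs{A}$ independently of $\abs{B}$. Combined with $t\le\eps/2$ this controls the first term by $\eps\log\abs{A}$, and the entropy-correction term is bounded using monotonicity of $h_{2}$ on $\brac{0,\half}$; swapping the roles of $\rho_{AB}$ and $\sigma_{AB}$ gives the same bound for $H(A|B)_\sigma - H(A|B)_\rho$, hence for the absolute value. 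The generous constants $4$ and $2$ in the statement leave ample slack to absorb the $\paren{1+t}h_2\paren{t/(1+t)}$ term into $2h_{2}(\eps)$ and the first term into $4\eps\log\abs{A}$. The one genuinely delicate point — and the reason the estimate is not a one-line corollary of Fannes — is the interpolation step: representing $\rho_{AB}$ and $\sigma_{AB}$ as two branches of a single mixture is exactly what confines the dimensional dependence to the $\log\abs{A}$ coming from $\abs{H(A|B)}\le\log\abs{A}$, rather than to $\log\abs{B}$. The remaining work is routine bookkeeping of the binary-entropy terms.
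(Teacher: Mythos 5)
The paper itself gives no proof of this lemma---it is stated as an imported result with the citation \cite{AF04}---and your proposal is a correct reconstruction of precisely that standard Alicki--Fannes argument: the orthogonal positive/negative-part decomposition with $t\le\eps/2$, the catalytic mixture $\xi_{AB}$ containing both $\rho_{AB}$ and $\sigma_{AB}$ as branches, the classical flag register yielding the $h_2$ sandwich for conditional entropy of a mixture, and the duality bound $\left\vert H(A|B)\right\vert \le \log\left\vert A\right\vert$ that confines the dimension dependence to system $A$; the constants indeed close with room to spare. The only point left implicit is the regime $\eps>1/2$, where monotonicity of $h_2$ cannot be invoked, but there the claim holds trivially since $\left\vert H(A|B)_{\rho}-H(A|B)_{\sigma}\right\vert \le 2\log\left\vert A\right\vert \le 4\eps\log\left\vert A\right\vert$.
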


In our converse proofs, we often deal with information quantities evaluated on $n$ systems. For simplicity,
rather than listing the precise bound from the Alicki-Fannes inequality,
we often just state an upper bound as $n \varepsilon^\prime$, where it is implicit that
$\varepsilon^\prime =  c_1\epsilon\log d + c_2 \, h_{2}(\epsilon)/n$ for positive constants
$c_1$ and $c_2$, and $d$ being the dimension of a single system.

Note that it is easy to derive continuity inequalities for the
quantum mutual information and the conditional quantum mutual information from the Alicki-Fannes inequality
simply by employing the triangle inequality.

\begin{lemma}
[Superadditivity of q.~mutual information \cite{DHW11}]\label{super}
The quantum mutual information is superadditive in the sense that, for any CPTP map
$\mathcal{N}_{A_{1}A_{2} \to B_{1}B_{2}}$,%
\[
I\left(  R_{1}R_{2};B_{1}B_{2}\right)  _{\sigma}\geq I\left(  R_{1}%
;B_{1}\right)  _{\sigma}+I\left(  R_{2};B_{2}\right)  _{\sigma},
\]
where
\[
\sigma_{R_{1}R_{2}B_{1}B_{2}} = \mathcal{N}_{A_{1}A_{2} \to B_{1}B_{2}}
\left(  \phi_{R_{1}A_{1}} \otimes\varphi_{R_{2}A_{2}} \right)  ,
\]
and $\phi_{R_{1}A_{1}}$ and $\varphi_{R_{2}A_{2}}$ are pure bipartite states.
\end{lemma}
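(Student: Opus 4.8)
The plan is to prove superadditivity of the quantum mutual information for this restricted class of product inputs by a direct entropic argument, exploiting the product structure of the input state and the fact that $\mathcal{N}$ acts jointly on the $A_1A_2$ systems. Writing $\sigma_{R_1R_2B_1B_2} = \mathcal{N}_{A_1A_2\to B_1B_2}(\phi_{R_1A_1}\otimes\varphi_{R_2A_2})$, I would begin by expanding both sides in terms of von Neumann entropies. The left-hand side is $I(R_1R_2;B_1B_2)_\sigma = H(R_1R_2)_\sigma + H(B_1B_2)_\sigma - H(R_1R_2B_1B_2)_\sigma$, while the right-hand side is the sum of $I(R_i;B_i)_\sigma = H(R_i)_\sigma + H(B_i)_\sigma - H(R_iB_i)_\sigma$ for $i=1,2$.

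The key simplification is that the reference systems $R_1R_2$ are untouched by $\mathcal{N}$, so their marginal remains the product $\phi_{R_1}\otimes\varphi_{R_2}$; hence $H(R_1R_2)_\sigma = H(R_1)_\sigma + H(R_2)_\sigma$, which lets the reference entropies on the two sides cancel cleanly. After this cancellation, the inequality reduces to showing
\begin{equation}
H(B_1B_2)_\sigma - H(R_1R_2B_1B_2)_\sigma \ge \brac{H(B_1)_\sigma - H(R_1B_1)_\sigma} + \brac{H(B_2)_\sigma - H(R_2B_2)_\sigma},
\end{equation}
i.e., a statement purely about conditional entropies. First I would recast this using the conditional-entropy notation: it is equivalent to $-H(B_1B_2|R_1R_2)_\sigma \ge -H(B_1|R_1)_\sigma - H(B_2|R_2)_\sigma$, or $H(B_1B_2|R_1R_2)_\sigma \le H(B_1|R_1)_\sigma + H(B_2|R_2)_\sigma$.

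From here the argument is to establish this conditional-entropy subadditivity in two moves. Conditioning can only decrease entropy, so $H(B_1B_2|R_1R_2)_\sigma \le H(B_1|R_1R_2)_\sigma + H(B_2|R_1R_2)_\sigma$ by the chain rule together with strong subadditivity (equivalently, the standard fact that $H(B_1B_2|R_1R_2) = H(B_1|R_1R_2) + H(B_2|B_1R_1R_2) \le H(B_1|R_1R_2) + H(B_2|R_1R_2)$). Then I would argue that each term can be further bounded as $H(B_i|R_1R_2)_\sigma \le H(B_i|R_i)_\sigma$, again since additional conditioning does not increase the conditional entropy. Chaining these gives exactly the desired bound. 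I expect the main obstacle to be bookkeeping rather than a deep idea: one must verify that each invocation of strong subadditivity and of ``conditioning reduces entropy'' is applied to the correct marginals of $\sigma$, and confirm that the product structure of the input is genuinely what forces the reference entropies to decouple (so that the inequality holds for these product inputs even though it can fail for general correlated inputs). Equivalently, one could package the two conditioning steps into a single application of strong subadditivity of the von Neumann entropy, which may be the cleanest route to present.
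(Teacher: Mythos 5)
Your overall strategy---cancel the reference entropies using $\sigma_{R_1R_2}=\phi_{R_1}\otimes\varphi_{R_2}$ and then finish with strong subadditivity---is the right one, and your reduction to
\[
H(B_1B_2)_\sigma - H(R_1R_2B_1B_2)_\sigma \ \ge\ \left[H(B_1)_\sigma - H(R_1B_1)_\sigma\right] + \left[H(B_2)_\sigma - H(R_2B_2)_\sigma\right]
\]
is correct. The gap is in the very next step: you recast the left side as $-H(B_1B_2|R_1R_2)_\sigma$ and each bracket as $-H(B_i|R_i)_\sigma$, but in fact
\[
H(B_1B_2)_\sigma - H(R_1R_2B_1B_2)_\sigma = -H(R_1R_2|B_1B_2)_\sigma, \qquad H(B_i)_\sigma - H(R_iB_i)_\sigma = -H(R_i|B_i)_\sigma,
\]
i.e., the conditioning goes the other way. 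Since $\sigma$ is in general a mixed state ($\mathcal{N}$ is CPTP, not an isometry), $H(B_1B_2)_\sigma \neq H(R_1R_2)_\sigma$, so the two recastings genuinely differ. The inequality you then set out to prove, $H(B_1B_2|R_1R_2)_\sigma \le H(B_1|R_1)_\sigma + H(B_2|R_2)_\sigma$, is strictly weaker than what is needed: given the product reference marginal, it is equivalent to $I(R_1R_2;B_1B_2)_\sigma \ge I(R_1;B_1)_\sigma + I(R_2;B_2)_\sigma - I(B_1;B_2)_\sigma$, which falls short of the lemma by exactly the term $I(B_1;B_2)_\sigma \ge 0$, a term that is strictly positive whenever $\mathcal{N}$ correlates its two outputs. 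A telltale sign that something was lost: your reduced inequality holds for \emph{all} four-party states (your own proof of it never uses the product structure of the input), whereas superadditivity of mutual information is false for general states.

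The repair is immediate: run your two-move argument with the roles of $R$ and $B$ interchanged. With the correct recasting, the lemma is equivalent to $H(R_1R_2|B_1B_2)_\sigma \le H(R_1|B_1)_\sigma + H(R_2|B_2)_\sigma$, which follows from the chain rule and two applications of ``conditioning reduces entropy'' (strong subadditivity):
\begin{align*}
H(R_1R_2|B_1B_2)_\sigma &= H(R_1|B_1B_2)_\sigma + H(R_2|R_1B_1B_2)_\sigma \\
&\le H(R_1|B_1)_\sigma + H(R_2|B_2)_\sigma.
\end{align*}
Combining this with $H(R_1R_2)_\sigma = H(R_1)_\sigma + H(R_2)_\sigma$ (the only place the product input is used) gives
\[
I(R_1R_2;B_1B_2)_\sigma = H(R_1R_2)_\sigma - H(R_1R_2|B_1B_2)_\sigma \ \ge\ \sum_{i=1,2}\left[H(R_i)_\sigma - H(R_i|B_i)_\sigma\right] = I(R_1;B_1)_\sigma + I(R_2;B_2)_\sigma.
\]
(Note that the paper itself does not reprove this lemma---it cites Ref.~\cite{DHW11}---so the comparison above is with the standard argument.)
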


We make frequent use of the following lemma in our converse proofs. For completeness, we give its proof.

\begin{lemma}
[Duality of Conditional Quantum Entropy]\label{dual_cond_ent}

For a tripartite pure state
$\psi_{ABC}$,%
\[
H\left(  A|B\right)  _{\psi}=-H\left(  A|C\right)  _{\psi}.
\]
\end{lemma}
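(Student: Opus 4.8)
The plan is to reduce the claim to the single elementary fact that, on a pure bipartite state, the two marginals have identical nonzero spectra and hence identical von Neumann entropies. First I would invoke the Schmidt decomposition of $\psi_{ABC}$ across the bipartite cut $(AB)\,|\,C$: this shows that the reduced states $\psi_{AB}$ and $\psi_C$ share the same nonzero eigenvalues, so that $H(AB)_{\psi} = H(C)_{\psi}$. Applying the identical reasoning to the complementary cut $(AC)\,|\,B$ yields $H(AC)_{\psi} = H(B)_{\psi}$. This symmetry of entropies across complementary subsystems of a global pure state is the only nontrivial ingredient in the argument.

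With these two identities in hand, I would simply unfold the definition of the conditional entropy:
\begin{align*}
H(A|B)_{\psi} &= H(AB)_{\psi} - H(B)_{\psi} = H(C)_{\psi} - H(B)_{\psi}, \\
H(A|C)_{\psi} &= H(AC)_{\psi} - H(C)_{\psi} = H(B)_{\psi} - H(C)_{\psi}.
\end{align*}
The two right-hand sides are manifestly negatives of one another, so $H(A|B)_{\psi} = -H(A|C)_{\psi}$, as claimed. Equivalently, one sees that $H(A|B)_{\psi} + H(A|C)_{\psi} = 0$.

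There is essentially no genuine obstacle here; the entire content is packaged into the spectral symmetry of complementary reductions of a pure state, which is standard. The only point deserving explicit mention is the justification of $H(AB)_{\psi} = H(C)_{\psi}$ and $H(AC)_{\psi} = H(B)_{\psi}$ from the Schmidt decomposition, after which the result follows by bookkeeping alone. I would therefore present the Schmidt-decomposition step carefully and treat the remaining substitution as immediate.
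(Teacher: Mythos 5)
Your proof is correct and follows essentially the same route as the paper's: both rest on the equality of entropies of complementary reductions of a pure state ($H(AB)_{\psi}=H(C)_{\psi}$, $H(AC)_{\psi}=H(B)_{\psi}$) followed by unfolding the definition of conditional entropy. The only difference is presentational -- you justify the spectral symmetry explicitly via the Schmidt decomposition and compute both conditional entropies separately, whereas the paper chains the identities in a single three-line display.
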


\begin{proof}
\begin{align*}
H(A|B)_{\psi} & = H(AB)_{\psi} - H(B)_{\psi} \\
& = H(C)_{\psi} - H(AC)_{\psi} \\
& = -H\left(  A|C\right)  _{\psi}.
\end{align*}
\end{proof}

%

\section{Quantum Rate Distortion}
\label{quantum}

Throughout this paper we consider a memoryless quantum information source
characterized by a density matrix $\rho \in \cD(\cH_A)$. We refer to $\rho$
as the {\em{source state}} and denote a purification of it by $\psi_{RA}^\rho
=  |\psi^\rho_{RA}\rangle \langle \psi^\rho_{RA}|$, with $R$ being a purifying
reference system isomorphic to $A$. 
The state $\rho^{n}\equiv \rho^{\otimes n} \in \mathcal{D}(\mathcal{H}%
_{A}^{\otimes n})$ characterizes $n$ copies of the source. 
A \emph{source coding} (or \emph{compression-decompression})
scheme of rate~$R$ is defined by a block code,\footnote{It should be very clear 
from the context whether $R$ refers to ``rate'' or
``reference system.''} which consists of two quantum
operations---the encoding and decoding maps. The encoding $\mathcal{E}_{n}$ is
a CPTP map from $n$ copies of the source space to a Hilbert space ${\widetilde
{\mathcal{H}}_{Q^{n}}}$ of dimension $\approx 2^{nR}$:%
\[
\mathcal{E}_{n}:\mathcal{D}(\mathcal{H}_{A}^{\otimes n})\rightarrow
\mathcal{D}({\widetilde{\mathcal{H}}_{Q^{n}}}),
\]
and the decoding $\mathcal{D}_{n}$ is a CPTP map from the compressed space to 
the original Hilbert space $\mathcal{H}_{A}^{\otimes n}$:%
\[
\mathcal{D}_{n}:\mathcal{D}({\widetilde{\mathcal{H}}_{Q^{n}}})\rightarrow
\mathcal{D}(\mathcal{H}_{A}^{\otimes n}).
\]
The average distortion resulting from this compression-decompression scheme is
defined as an average \cite{Shannon:tf,B00}:%
\[
{\overline{d}}(\rho,\mathcal{D}_{n}\circ\mathcal{E}_{n})
   \equiv \frac{1}{n}\sum_{i=1}^{n}d(\rho,\mathcal{F}_{n}^{(i)}),
\]
where $\mathcal{F}_{n}^{(i)}$ is the \textquotedblleft marginal
operation\textquotedblright\ on the $i$-th copy of the source space induced by
the overall operation $\mathcal{F}_{n}\equiv\mathcal{D}_{n}\circ
\mathcal{E}_{n}$, and is defined as%
\begin{equation}
\label{fni}
  \mathcal{F}_{n}^{(i)}(\xi) 
     \equiv \mathrm{{Tr}}_{A_{1}A_{2}\cdots A_{i-1}A_{i+1}\cdots A_{n}}
                  [\mathcal{F}_{n}(\rho^{\otimes(i-1)} \otimes \xi \otimes \rho^{\otimes(n-i)})],
\end{equation}
and for any CPTP map $\cN$,
\[
d(\rho,\mathcal{N})=1-F_{e}(\rho,\mathcal{N}),
\]
with $F_{e}$ being the entanglement fidelity of $\mathcal{N}$:%
\begin{equation}
F_{e}(\rho,\mathcal{N})\equiv\langle\psi^{\rho}_{RA}|({\mathrm{{id}}}%
_{R}\otimes\mathcal{N}_{A\rightarrow A})(\psi_{RA}^{\rho})|\psi^{\rho
}_{RA}\rangle. \label{fid1}%
\end{equation}
The quantum operations $\mathcal{D}_{n}$ and $\mathcal{E}_{n}$ define an
$(n,R)$ quantum rate distortion code.

Motivated by the observation that $\overline{d}$ is the average of a linear
function of the marginal channels, and by Shannon's rate distortion theory
\cite{Shannon:tf}, which allows a general average-type distortion function
of input and output, we can generalize the above setting as follows
\cite{WA01,CW08,DHWW12}:
Let $\Delta \geq 0$ be an observable on $RB$, which we will (without loss of generality)
assume to be non-negative. This distortion observable~$\Delta$
will define the distortion function between
a reference for the input ($R$) and the output ($B$) of the compression and decompression maps
$\mathcal{E}_{n}:\mathcal{D}(\mathcal{H}_{A}^{\otimes n})
 \rightarrow \mathcal{D}({\widetilde{\mathcal{H}}_{Q^{n}}})$ and
$\mathcal{D}_{n}:\mathcal{D}({\widetilde{\mathcal{H}}_{Q^{n}}})
 \rightarrow \mathcal{D}(\mathcal{H}_{B}^{\otimes n})$, respectively,
with $R$ not necessarily equal to $B$. With this, we can define the distortion
of a channel $\mathcal{N}$ with respect to a source $\rho$, as
\[
  d(\rho,\mathcal{N}) 
  \equiv \mathrm{Tr}\left( \Delta_{RB} \left( (\id_R \otimes \mathcal{N}_{A\rightarrow B})(\psi_{RA}^\rho) \right)\right),
\]
and the average distortion on a block of $n$ as
\[\begin{split}
  {\overline{d}}(\rho,\mathcal{D}_{n}\circ\mathcal{E}_{n})
      &\equiv \frac{1}{n}\sum_{i=1}^{n}d(\rho,\mathcal{F}_{n}^{(i)}) \\
      &=      \mathrm{Tr}\left( {\overline{\Delta}} 
                      \left( (\id_R \otimes \mathcal{F}_n)\left((\psi_{RA}^\rho)^{\otimes n}\right) \right)\right).
\end{split}\]
Here, the marginal channels $\mathcal{F}_{n}^{(i)}$ are now given by
\[
  \mathcal{F}_{n}^{(i)}(\xi) 
   \equiv \mathrm{{Tr}}_{B_{1}B_{2}\cdots B_{i-1}B_{i+1}\cdots B_{n}}
            [\mathcal{F}_{n}(\rho^{\otimes(i-1)} \otimes \xi \otimes \rho^{\otimes(n-i)})],
\]
and
\[
  \overline{\Delta} 
     \equiv \frac{1}{n}\sum_{i=1}^n \1^{\otimes(i-1)} \otimes \Delta \otimes \1^{\otimes(n-i)}
\]
is the average distortion observable.

Allowing different input and output spaces, and comparing them via the
otherwise completely arbitrary observable $\Delta$ may look like a drastic 
departure from the source coding paradigm, but looking at Shannon's
rate distortion theory \cite{Shannon:tf} and its applications reveals that
it is indeed very natural -- see also the examples below.

For any $R,D\geq0$, the pair $(R,D)$ is said to be an \emph{achievable} rate
distortion pair if there exists a sequence of $(n,R)$ quantum rate distortion
codes $(\mathcal{E}_{n},\mathcal{D}_{n})$ such that%
\begin{equation}
\label{avg_dist}\lim_{n\rightarrow\infty}{\overline{d}}(\rho,\mathcal{D}%
_{n}\circ\mathcal{E}_{n})\leq D.
\end{equation}
The \emph{{quantum rate distortion function}} is then defined as
\[
R^{q}(D)=\inf\{R:(R,D) \text{ is achievable}\}.
\]
In \cite{DHW11}, we proved that the quantum rate distortion function admits the following characterization
in terms of the regularized entanglement of purification:
\begin{equation}
R^{q}\left(  D\right)  =\lim_{k\rightarrow\infty}\frac{1}{k}\left[
\min_{\mathcal{N}^{\left(  k\right)  }\ :\ \overline{d}(\rho,\mathcal{N}%
^{\left(  k\right)  })\leq D}\ E_{p}\left(  \rho^{\otimes k},\mathcal{N}%
^{\left(  k\right)  }\right)  \right]  . \label{eq:qrd-dhw11}
\end{equation}

In the presence of an auxiliary resource, the rate distortion function is defined 
analogously, the only difference being in the encoding and decoding maps.
In particular, if Alice and Bob have 
prior shared entanglement,
then, denoting the entangled systems by ${T_A}$ and ${T_B}$ 
(with $T_A$ being with Alice and $T_B$ being with Bob), the encoding
and decoding maps are respectively given by
\be\label{ea-enc}
\mathcal{E}_{n}:\mathcal{D}(\mathcal{H}_{A}^{\otimes n}\otimes \cH_{T_A})\rightarrow
\mathcal{D}({\widetilde{\mathcal{H}}_{Q^{n}}}),
\ee
and
\be
\label{ea-dec}
\mathcal{D}_{n} : \mathcal{D}({\widetilde{\mathcal{H}}_{Q^{n}}} \otimes \cH_{T_B} )\rightarrow
\mathcal{D}(\mathcal{H}_{B}^{\otimes n}).
\ee
We denote the corresponding entanglement-assisted quantum rate distortion function, 
for a given distortion $D \ge 0$ and unlimited amount of entanglement, 
as $R_{ea}^q(D)$. In \cite{DHW11}, we proved that the
entanglement-assisted quantum rate distortion function is equal to the following
single-letter expression:
$$
R^q_{ea}(D) = \frac{1}{2} \left[\min_{\mathcal{N} \ : \ d(\rho,\mathcal{N})\leq D}
\ I\left( R;B\right)_\omega  \right]
$$
where $\omega_{RB} = \mathcal{N}_{A \to B} (\psi^{\rho}_{RA})$.

\begin{remark}
\label{rem:convexity}
It should be noted that for every choice of distortion observable $\Delta$,
the quantum rate distortion function $R^q(D)$ is convex in the distortion $D$, and likewise 
the entanglement-assisted rate distortion function $R^q_{ea}(D)$. 
This is seen by observing that time-sharing codes of rate $R_j$ and
distortion $D_j$ ($j=1,2$) on fractions $\lambda$ and $1-\lambda$ of a block,
yields directly a code of rate $\lambda R_1+(1-\lambda) R_2$ and
distortion $\lambda D_1+(1-\lambda) D_2$.
\end{remark}

By the above observation and the coding theorem expressed
by (\ref{eq:qrd-dhw11}), we can conclude that the regularized expression on the RHS of
(\ref{eq:qrd-dhw11}) is convex in $D$.
This is true, even though the convexity of the expression on
the RHS of (\ref{eq:qrd-dhw11}) is not  immediately evident,
since it is known that the entanglement of purification and even its regularization are not convex in the state
on which it is being evaluated \cite{THLD02}.
In fact, all the coding theorems in this paper contain expressions
for the rate distortion function
(or for rate regions) that are convex in the distortion parameter $D$.
Indeed, one well-known way of proving convexity of an expression for
a rate-distortion function is as outlined in Lemma~14
of \cite{DHW11}, and this approach relies on convexity of the underlying
information measure with respect to a distortion channel. Thus, for any
finite $k$, the expression in (\ref{eq:qrd-dhw11}) is {\it not} convex in $D$
because the entanglement of purification is not convex, and it is only
in the regularized limit that this expression is convex in $D$. For the curious
reader, we show in Appendix~\ref{app:convexity-QRD} that the mathematical expression
on the RHS of (\ref{eq:qrd-dhw11}) is convex in~$D$.

\begin{example}
{\rm The original distortion measure based on entanglement fidelity is recovered
in the case where $A=B$, by letting the distortion observable $\Delta = \1-\psi^\rho$.}
\end{example}

\begin{example}
\label{ex:classical-rd}
{\rm Given a classical distortion function 
$d:\mathcal{X}\times\mathcal{Y} \rightarrow \mathbb{R}_{\geq 0}$, as considered 
in \cite{Shannon:tf}, for input and output alphabets $\mathcal{X}$ and $\mathcal{Y}$,
we consider $\mathcal{H}_A = \mathbb{C}^{\mathcal{X}}$ 
and $\mathcal{H}_B = \mathbb{C}^{\mathcal{Y}}$, and let
\[
  \Delta \equiv \sum_{xy} d(x,y) \ket{x}\bra{x} \otimes \ket{y}\bra{y}.
\]
In classical rate distortion theory we also consider an IID source with
single-letter marginal probability distribution $P(x)$, giving rise to the 
diagonal source density $\rho = \sum_x P(x) \ket{x}\bra{x}$ and its purification
$\ket{\psi_\rho} = \sum_x \sqrt{P(x)} \ket{x}\ket{x}$.

Now, a classical source coding scheme of rate $R$ and distortion $D$
naturally turns into a source code in the above quantum sense,
by lifting the stochastic encoding and decoding maps to CPTP maps
sending diagonal matrices to diagonal matrices; 
furthermore the quantum version still has rate $R$ (now qubits) and
the same distortion $D$. 

Conversely, given a source code in our above
sense, the relation to classical Shannon-style rate distortion coding
is a little more subtle. To start, however, we can at least say that
without loss of generality the overall map $\mathcal{F}_n$ is a classical channel from
$\mathcal{X}^n$ to $\mathcal{Y}^n$, because we can dephase the input to
$\mathcal{E}_n$ in the $x$-basis, and the output of $\mathcal{D}_n$
in the $y$-basis, without affecting rate or distortion. The compressed
system $Q^n$ may of course still use quantum states in a nontrivial way,
for instance if unlimited entanglement is available, to ``superdense-code'' \cite{PhysRevLett.69.2881}
the classical compressed information of a classical rate distortion code, 
thus halving the bit rate to the qubit rate. 
On the other hand, Theorem~3 of \cite{DHW11} shows that this is the only
improvement; indeed, the entanglement-assisted rate distortion function
is exactly half the classical one $R(D)$ in \cite{Shannon:tf}.
From this we can deduce that the unassisted quantum rate distortion 
function equals the classical one.
If we assume the opposite,
namely that the former were strictly smaller than the latter, then by invoking remote state
preparation of the compressed quantum states at an asymptotic cbit/qubit
rate of one \cite{BHLSW05}, and then ``superdense-coding'' the classical bits, 
we would get an entanglement-assisted rate distortion code of the same
distortion but rate $\tfrac{1}{2}R^q(D) < \tfrac{1}{2}R(D)$, resulting
in a contradiction.}
\end{example}

\begin{example}
\label{ex:q-c-rd}
{\rm Quantum-to-classical rate distortion \cite{DHWW12} is based on an
observable of the form
\[
  \Delta = \sum_y \Delta_y \otimes \ket{y}\bra{y}.
\]
In \cite{DHWW12} a source code for it was \emph{defined} as a measurement 
(i.e., a qc-channel) taking values in a subset of $\mathcal{Y}^n$,
and as in Example~\ref{ex:classical-rd}, these codes can be understood
as quantum rate distortion codes in the above sense. We add as an aside
that in the limit of zero distortion, this model can be traced back to 
the work of Massar and Popescu \cite{MassarPopescu} and of Massar with 
one of us \cite{WinterMassar01}.

However, given a quantum rate distortion code in our present sense,
by the same logic as in Example~\ref{ex:classical-rd}, the
overall map $\mathcal{F}_n$ is without loss of generality a qc-channel, and the
coding part in Theorem~3 of \cite{DHWW12} achieves the same classical 
communication rates as Theorem~2 in \cite{DHW11} -- with the difference that
only shared randomness rather than entanglement is required, which
then can be removed entirely. With quantum communication assisted by
entanglement, we hence get half that rate, $R_{ea}^q(D) = \tfrac{1}{2}R^{qc}(D)$.

Conversely, assuming that the (unassisted) quantum rate distortion function
were strictly smaller than $R^{qc}(D)$, leads to a contradiction along the same lines
as in Example~\ref{ex:classical-rd}: we could use entanglement to replace
the compressed qubits by cbits at asymptotic exchange rate $1$, and by
superdense coding would obtain an entanglement-assisted rate distortion code
(of the same distortion) of rate $< \tfrac{1}{2}R^{qc}(D)$, contradicting
our conclusion in the previous paragraph.

In summary, the theory of quantum-to-classical rate distortion coding is
subsumed in the above general framework.
}
\end{example}

As reviewed above, in \cite{DHW11} we obtained
expressions for the quantum rate distortion function $R^q(D)$ and
the entanglement-assisted quantum rate distortion function  $R_{ea}^q(D)$ 
in terms of entropic quantities (note that going to general distortion observables
does not change the form of these results).
By unifying these results, we obtain a rate region characterizing
the quantum communication and entanglement consumption that is necessary and sufficient
for lossy compression of an IID quantum source.
This is given by Theorem~\ref{thm:trade-off-unassisted-assisted-qrd} below.

\begin{theorem}
\label{thm:trade-off-unassisted-assisted-qrd} For a memoryless quantum
information source defined by the density matrix 
$\rho \in \cD(\cH_A)$
with a
purification $|\psi^{\rho}_{RA}\rangle$, and any given distortion
$D\geq0$, the quantum rate distortion coding region for lossy source coding with
noiseless quantum communication, with the help of rate-limited shared
entanglement at rate $E$, is given by the union of the following regions,
letting $k$ become arbitrarily large:
\begin{align}
     Q & \geq \frac{1}{2k} I(R^k;B^k E_B)_{\omega}, \nonumber\\
 Q + E & \geq \frac{1}{k} H(B^k E_B)_{\omega}, \label{eq:trade-off-unassisted-EA-qrd}
\end{align}
where the entropic quantities are with respect to the following state:
\begin{equation}
\omega_{R^k B^k E_A E_B} \equiv V_{E^k \to E_A E_B} (U^{\mathcal{N}^{(k)}}_{A^k\rightarrow
B^k E^k}((\psi_{RA}^{\rho})^{\otimes k}) ), \label{eq:trade-off-code-state}
\end{equation}
and the union is over all isometric extensions $U^{\mathcal{N}^{(k)}}$ of CPTP maps
$\mathcal{N}^{(k)}$ such that $\overline{d}(\rho , \mathcal{N}^{(k)}) \leq D$
and isometries $V_{E^k \to E_A E_B}$.
\end{theorem}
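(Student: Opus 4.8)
The plan is to prove the region by establishing its achievability (direct part) and its optimality (converse) separately, and to recognize that the two inequalities encode, respectively, the quantum-communication cost and the total (communication-plus-entanglement) cost of a single channel-simulation task. Concretely, for a fixed map $\cN^{(k)}$ with $\overline{d}(\rho,\cN^{(k)})\le D$ and a fixed splitting isometry $V_{E^k\to E_A E_B}$, I would regard $B^k E_B$ as the composite system that must be created at Bob's side; equivalently I would work with the channel $\tilde{\cN}^{(k)}_{A^k\to B^k E_B}$ obtained from $V\circ U^{\cN^{(k)}}$ by discarding $E_A$, whose output on the source purification is the marginal $\omega_{R^k B^k E_B}$ of the state in (\ref{eq:trade-off-code-state}).

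For the converse, I would start from an arbitrary protocol that is $(n,Q)$ with shared entanglement at rate $E$: Alice and Bob share a pure state $\Phi_{T_A T_B}$ with $\log\dim T_A=\log\dim T_B = nE$, Alice applies an encoding isometry $A^n T_A\to Q^n E_{\cE}$ with $\log\dim Q^n = nQ$, sends $Q^n$, and Bob applies a decoding isometry $Q^n T_B\to B^n E_{\cD}$. Absorbing the pure resource $\Phi$, the overall operation is an isometric extension of a channel $\cN^{(n)}=\mathcal{F}_n$ whose environment decomposes as $E^n\cong E_{\cE}E_{\cD}$; taking $V$ to be this identification with $E_A=E_{\cE}$ (at Alice) and $E_B=E_{\cD}$ (at Bob) puts the induced state into the form (\ref{eq:trade-off-code-state}). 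For the first bound I would use that the decoding is an isometry, so $I(R^n;B^n E_B)=I(R^n;Q^n T_B)$, that $R^n$ is uncorrelated from $T_B$ (so $I(R^n;T_B)=0$), and the dimension bound $I(R^n;Q^n\mid T_B)\le 2\log\dim Q^n=2nQ$, giving $Q\ge \tfrac{1}{2n}I(R^n;B^nE_B)$. For the second bound I would use isometric invariance again, $H(B^nE_B)=H(Q^nT_B)$, followed by subadditivity $H(Q^nT_B)\le H(Q^n)+H(T_B)\le nQ+nE$. Since achievability of $(Q,E)$ at distortion $D$ only guarantees $\overline{d}(\rho,\mathcal{F}_n)\le D+o(1)$, I would close the argument by invoking continuity of the entropic quantities (Lemma~\ref{Thm:AF}, together with the triangle inequality noted after it) and the convexity/closedness of the region in $D$ from Remark~\ref{rem:convexity}.

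For achievability, with $\cN^{(k)}$ and $V$ fixed, Alice would first apply $U^{\cN^{(k)}}$ and $V$ locally, so that she holds $B^k E_A E_B$ with reference $R^k$ in the pure state $\omega$, and would then transfer the block $B^k E_B$ to Bob using the state-splitting form of the quantum reverse Shannon theorem \cite{BDHSW09,BCR09} (run on asymptotically many copies of the $k$-block). This simulates $\tilde{\cN}^{(k)}$ to within vanishing trace distance and achieves the corner point $Q=\tfrac{1}{2k}I(R^k;B^kE_B)_\omega$ with $Q+E=\tfrac{1}{k}H(B^kE_B)_\omega$; the latter identity follows from purity of $\omega$, since $I(R^k;B^kE_B)$ and $I(E_A;B^kE_B)$ sum to $2H(B^kE_B)$. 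The rest of the region is then filled in by standard resource conversions: excess shared entanglement may simply be discarded (moving vertically), and each extra qubit of communication may be spent to generate one ebit, substituting for pre-shared entanglement and tracing the boundary $Q+E=\tfrac1k H(B^kE_B)$ for $Q\ge \tfrac{1}{2k}I(R^k;B^kE_B)$. Because the average distortion is a fixed bounded linear functional of the output state and the simulation is $\eps$-close, Bob's recovered systems $B^k$ would meet the distortion constraint up to $o(1)$. As a consistency check I would verify the two extreme choices of $V$: sending all of $E^k$ into $E_A$ recovers the entanglement-assisted corner $\tfrac12 I(R;B)$ of \cite{DHW11}, while taking $E=0$ and minimizing $\tfrac1k H(B^kE_B)$ over $V$ reproduces the regularized entanglement of purification appearing in (\ref{eq:qrd-dhw11}).

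The hard part will be the achievability accounting: pinning down the exact two-parameter trade-off of the state-splitting simulation so that a single protocol meets both inequalities simultaneously at the corner, and confirming through the purity relations that its communication and entanglement costs are precisely $\tfrac{1}{2k}I(R^k;B^kE_B)$ and $\tfrac{1}{k}H(B^kE_B)-\tfrac{1}{2k}I(R^k;B^kE_B)$. Relative to this, the converse should be largely bookkeeping; its only subtlety is identifying the correct environment split $V$ (encoder's environment with Alice, decoder's environment with Bob) so that the quantities produced by the dimension and subadditivity bounds coincide with those in the statement, and managing the passage to the $\overline d\le D$ limit by continuity.
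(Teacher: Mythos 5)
Your proposal is correct and follows essentially the same route as the paper: your converse purifies the protocol via Stinespring dilations and applies the same entropic facts (isometric invariance to identify $H(B^nE_B)$ with $H(Q^nT_B)$ and $I(R^n;B^nE_B)$ with $I(R^n;Q^nT_B)$, the product structure $I(R^n;T_B)=0$, and dimension/subadditivity bounds), which is the paper's argument up to reorganization. The only real difference is in achievability, where the paper simply cites Theorem~3b of \cite{BDHSW09}, while you re-derive the corner point from quantum state splitting (state redistribution with trivial receiver side information) and fill in the rest of the region by discarding entanglement and converting qubits to ebits --- a more self-contained variant of the same underlying protocol, and in fact the same mechanism the paper itself uses later for its QSI generalizations.
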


\begin{proof}
Our proof of these bounds
requires just a slight modification of the proofs of the converse theorems
in Ref.~\cite{DHW11}. Indeed, consider the most general protocol
for rate-limited entanglement-assisted quantum rate distortion coding. The
protocol begins with the reference and Alice sharing the state $\left(
\psi_{RA}^{\rho}\right)^{\otimes n}$. Let $R^{n}$ denote the
reference's systems, and let $A^{n}$ denote Alice's systems. Alice and
Bob share entanglement in the systems $T_{A}$ and $T_{B}$ before communication
begins, and we suppose that the logarithm of the dimension of system $T_{B}$
is no larger than $nE$. Alice acts with an encoder (some general CPTP\ map)
on her systems $A^n$, obtaining a system $W$. She then sends $W$ to Bob, who 
subsequently feeds $W$ and $T_{B}$
into a decoder to produce the system $B^{n}$. 
By Stinespring's dilation theorem \cite{book2000mikeandike,W11}, we can simulate this protocol
by one in which Alice's encoder is replaced by an isometric extension of it,
with outputs $W$ and an environment $E_{1}$, and Bob's decoder is replaced by
an isometric extension of this decoder, with outputs $B^{n}$ and an
environment~$E_{2}$. At the end of the simulation, the state on systems
$R^n B^n E_1 E_2$ is a state of the form in (\ref{eq:trade-off-code-state}).

We can now obtain a lower bound on the rate $Q$ of 
quantum communication as follows:%
\begin{align*}
nQ  \equiv \log\left(\rm{dim }\cH_W\right) &  \geq H\left(  W\right) \\
&  =H\left(  WT_{B}\right)  -H\left(  T_{B}|W\right) \\
&  \geq H\left(  B^{n}E_{2}\right)  -nE .
\end{align*}
The first equality is
the entropy chain rule. The second inequality follows because $H\left(
WT_{B}\right)  =H\left(  B^{n}E_{2}\right)  $ (considering an isometric
extension of the decoder) and because conditioning cannot increase
entropy:$\ H\left(  T_{B}|W\right)  \leq H\left(  T_{B}\right)  \leq nE$.

The other bound results as follows:
\begin{align*}
2nQ & \geq 2H(W) \\
    & = I(W; R^n T_B E_1) \\
    & \geq I(W;R^n T_B) \\
    & = I(W T_B; R^n) + I(W;T_B) - I(R^n; T_B) \\
    & \geq I(W T_B; R^n) \\
    & = I(B^n E_2; R^n).
\end{align*}
The first equality follows from the fact that $ H(W) = H(R^n T_B E_1)$ and 
$ H(W R^n T_B E_1) = 0$ for a pure state on systems $W R^n T_B E_1$. The second inequality
results from applying the quantum data processing inequality. The second equality is an identity.
The third inequality follows because systems $R^n$ and $T_B$ are in a product state (implying $I(R^n; T_B) = 0$)
and from the fact that $I(W;T_B) \geq 0$. The final equality results
because entropy is invariant under isometries (in this case, the isometric extension of the decoder
taking systems $W T_B$ to $B^n E_2$. This proves the converse part of this theorem.

The achievability part of this theorem follows simply by picking a map that meets the
distortion constraint and applying Theorem~3b of \cite{BDHSW09}.
\end{proof}

It is worth remarking that in the case there is sufficient entanglement available, 
the above theorem reduces to the
entanglement-assisted quantum rate distortion function from Theorem~3 of \cite{DHW11},
while if there is no entanglement available, then the above theorem reduces to the 
entanglement of purification characterization from Theorem~5 of \cite{DHW11}.

\section{Classically-Assisted Quantum Rate Distortion}
\label{classical}

In this section, we consider quantum rate distortion coding in the presence of classical side information. As mentioned in the Introduction, this corresponds to the scenario in which Alice is allowed unlimited, forward classical communication to Bob to assist them in their compression-decompression task. 
We refer to the corresponding rate distortion function as the {\em{classically-assisted quantum rate distortion function}} and denote it by $R_{\rightarrow}^{q}(D)$ for a given distortion $D \ge 0$. It is defined analogously to $R^q(D)$ (see Section~\ref{quantum}), except that the encoding and decoding maps are 
now given by
$$\mathcal{E}_{n}:\mathcal{D}(\mathcal{H}_{A}^{\otimes n})\rightarrow
\mathcal{D}({\widetilde{\mathcal{H}}_{Q^{n}}}\otimes \cH_{X}),
$$
and
\[
\mathcal{D}_{n}:\mathcal{D}({\widetilde{\mathcal{H}}_{Q^{n}}} \otimes \cH_{X})\rightarrow
\mathcal{D}(\mathcal{H}_{A}^{\otimes n}),
\]
where $\cH_X$ denotes the Hilbert space associated with the classical information that Alice sends to Bob. 
Like the previous rate distortion functions, also $R_{\rightarrow}^q(D)$ is convex for
any given distortion observable $\Delta$.

We prove the following theorem, which gives an expression for $R_{\rightarrow}^{q}(D)$ in terms of the entanglement of formation defined in \reff{eof}. 

\begin{theorem}
\label{thm:cl-as-qrd} For a memoryless quantum information source defined by the
density matrix $\rho\in \cD({\cH_A})$, and any given distortion $D\geq0$, the quantum rate
distortion function assisted by unlimited, forward classical communication is given by%
\begin{equation}
R_{\rightarrow}^{q}\left(  D\right)  =\lim_{k\rightarrow\infty}\frac{1}{k}%
\min_{%
\genfrac{}{}{0pt}{}{\mathcal{N}^{(k)}\mathcal{\ }:}{{\overline{d}(\rho}{,\mathcal{N}^{(k)})\leq D}}%
}\ \left[  E_{F}(\rho^{\otimes k},\mathcal{N}^{(k)})\right]  ,
\label{eq:EoF-RD}%
\end{equation}
where $\mathcal{N}^{(k)}:\mathcal{D}(\mathcal{H}_{A}^{\otimes k}%
)\rightarrow\mathcal{D}(\mathcal{H}_{B}^{\otimes k})$ is a CPTP map, and
\begin{equation}
E_{F}(\rho,\mathcal{N})\equiv E_{F}(\omega_{RB}) \label{reg}%
\end{equation}
denotes the entanglement of formation of the state%
\begin{equation}
\omega_{RB}\equiv (\mathrm{{id}}_{R}\otimes\mathcal{N}_{A\rightarrow B}%
)(\psi_{RA}^{\rho}). \label{eq:channel-on-source-state}%
\end{equation}

\end{theorem}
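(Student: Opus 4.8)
The plan is to prove the coding theorem by establishing matching converse (lower) and achievability (upper) bounds on $R_{\rightarrow}^{q}(D)$, each reducing to structural properties of the entanglement of formation, and then joining them by a regularization argument. For the converse I would take an arbitrary $(n,R)$ classically-assisted code meeting the distortion constraint and model the forward classical side channel as a classical register $X$, so that after encoding the global state is a cq-state $\theta_{R^nQ^nX}=\sum_x p(x)\,|x\rangle\langle x|_X\otimes\theta^x_{R^nQ^n}$ with $\log\dim\cH_{Q^n}=nR$, the reference $R^n$ being untouched. Writing $\omega^x_{R^nB^n}=(\id_{R^n}\otimes\mathcal D_n^x)(\theta^x_{R^nQ^n})$ for the decoder conditioned on the message $x$, the overall output is $\omega_{R^nB^n}=\sum_x p(x)\,\omega^x_{R^nB^n}=(\id_{R^n}\otimes\mathcal F_n)((\psi^\rho_{RA})^{\otimes n})$ for $\mathcal F_n=\mathcal D_n\circ\mathcal E_n$.

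The heart of the converse is then the short entropic chain
\[
nR\ \ge\ H(Q^n|X)_\theta=\sum_x p(x)\,H(Q^n)_{\theta^x}\ \ge\ \sum_x p(x)\,E_F(\theta^x_{R^nQ^n})\ \ge\ \sum_x p(x)\,E_F(\omega^x_{R^nB^n})\ \ge\ E_F(\omega_{R^nB^n}),
\]
which uses, in order: $H(Q^n|X)\le H(Q^n)\le\log\dim\cH_{Q^n}$; the bound $E_F(\sigma_{CD})\le H(D)_\sigma$ applied with $D=Q^n$ (valid since von Neumann entropy is concave, so every pure-state decomposition of $\sigma_{CD}$ has average $D$-marginal entropy at most $H(D)_\sigma$); monotonicity of $E_F$ under the local decoding CPTP map $\mathcal D_n^x$ acting only on the $Q^n$ side; and convexity of $E_F$ in the state. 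This gives $R\ge\tfrac1nE_F(\rho^{\otimes n},\mathcal F_n)\ge\tfrac1n\min_{\overline{d}\le D+\delta_n}E_F(\rho^{\otimes n},\mathcal N^{(n)})$, and letting $n\to\infty$ with $\delta_n\to0$, using convexity/continuity of the limiting expression in $D$, yields $R\ge\lim_k\tfrac1k\min_{\overline{d}\le D}E_F(\rho^{\otimes k},\mathcal N^{(k)})$.

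For achievability I would fix any block length $k$ and any CPTP map $\mathcal N^{(k)}:\cD(\cH_A^{\otimes k})\to\cD(\cH_B^{\otimes k})$ with $\overline{d}(\rho,\mathcal N^{(k)})\le D$, set $\omega=(\id\otimes\mathcal N^{(k)})((\psi^\rho_{RA})^{\otimes k})$, and take an optimal pure-state decomposition $\omega=\sum_x p(x)\,\psi^x$ attaining $E_F(\omega)$. I would then simulate $(\mathcal N^{(k)})^{\otimes m}$ for large $m$ by a variant of Schumacher compression assisted by the free forward classical channel: Alice applies the isometric extension $U_{A\to BE}$ on each block to obtain $|\phi_{RBE}\rangle^{\otimes m}$; she measures the environment systems in the basis realizing the decomposition $\{p(x),\psi^x\}$ (such a measurement exists by the Hughston--Jozsa--Wootters theorem, since $E$ purifies $\omega$), obtaining a conditionally-typical label $x^m$ that she forwards over the classical channel; finally she Schumacher-compresses the residual pure-state output system, whose conditional-typical subspace has dimension $\approx 2^{mE_F(\omega)}$, into $\approx mE_F(\omega)$ qubits sent to Bob and decoded conditioned on $x^m$. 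Conditional typicality makes the simulated channel trace-close to $(\mathcal N^{(k)})^{\otimes m}$, so by linearity of the distortion functional in the output state the distortion stays within $D+o(1)$, while the quantum rate per source symbol is $\tfrac1kE_F(\rho^{\otimes k},\mathcal N^{(k)})$; optimizing over $\mathcal N^{(k)}$ and letting $k\to\infty$ gives the matching upper bound.

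To close the gap I would observe that $a_k=\min_{\overline{d}\le D}E_F(\rho^{\otimes k},\mathcal N^{(k)})$ is subadditive, since $\mathcal N^{(k)}\otimes\mathcal N^{(l)}$ is feasible for $k+l$ copies and $E_F$ is subadditive on tensor products, so by Fekete's lemma $\lim_k a_k/k=\inf_k a_k/k$ exists and equals the common limit of both bounds. I expect the main obstacle to be the achievability analysis rather than the converse: making the measure-then-conditionally-Schumacher-compress simulation rigorous requires careful typicality and error estimates to certify both the fidelity of the channel simulation and, simultaneously, that the distortion is controlled through the joint $\delta\to0$, $m\to\infty$, $k\to\infty$ limits. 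The converse, by contrast, is essentially the displayed chain above, whose only genuinely quantum inputs are the monotonicity and convexity of the entanglement of formation together with its domination by a marginal entropy.
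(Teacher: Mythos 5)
Your proposal is correct, and its achievability half is essentially the paper's own argument: measuring the environment of the isometric extension in the basis that realizes the optimal pure-state decomposition (via HJW) is exactly the paper's quantum instrument built from the Kraus decomposition $\{A_x\}$ attaining $E_F(\omega_{RB}) = \min H(B|X)_\omega$, and your conditional-typical-subspace compression at rate $\approx \sum_x p(x) H(B)_{\rho_x} = H(B|X)_\omega$ is the same as the paper's scheme of grouping outcomes and Schumacher-compressing each group. Where you genuinely diverge is the converse. The paper dilates both the encoder and decoder to isometries, has Alice and Bob measure the environments to produce classical registers $L, M, K$, and then argues via the chain $nQ \geq H(W|LM) = H(R^n|LM) \geq H(R^n|LMK) \geq E_F(\sigma_{R^nB^n})$, where the last step uses the fact that conditioning on $LMK$ induces an explicit pure-state decomposition of the output state. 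You instead stop at the encoder output, bound $H(Q^n|X)_\theta \geq \sum_x p(x) E_F(\theta^x_{R^nQ^nX})$ by the marginal-entropy bound, and push through the decoder using two black-box properties of $E_F$: monotonicity under local CPTP maps and convexity. Your route is shorter and more modular (it never needs Uhlmann-style dilations or the duality $H(W|LM)=H(R^n|LM)$), at the cost of importing LOCC-monotonicity of $E_F$ as an external fact; the paper's route is self-contained, effectively re-deriving inline the combination of monotonicity and convexity that you invoke. Two further points in your favor: your Fekete's-lemma argument (using subadditivity of $E_F$ under tensor products and the fact that $\mathcal{N}^{(k)}\otimes\mathcal{N}^{(l)}$ inherits the averaged distortion constraint) supplies the existence of the regularized limit, which the paper leaves implicit, and your handling of the $D+\delta_n \to D$ technicality via continuity/convexity of the limit in $D$ is more careful than the paper's, which tacitly assumes the finite-$n$ code already meets distortion $D$.
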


\begin{proof}
The achievability part of the above theorem was essentially proven 
by Devetak and
Berger \cite{Devetak:2002it} for the particular case of a source of isotropic qubits, even though they did not express their result explicitly in the 
form of the entanglement of
formation. Moreover, they did not give a general converse proof. For the sake of 
completeness, we include a proof of achievability 
in addition to giving a proof of the converse.

To prove the achievability part, our approach is the same as that of Devetak and Berger \cite{Devetak:2002it}, namely, to exploit a variant of Schumacher
compression with classical communication \cite{Schumacher:1995dg}. To start with, consider $k=1$ on the RHS of \reff{eq:EoF-RD}, and fix the CPTP map
$\mathcal{N}\equiv \cN^{(1)}$ such that the minimization on the RHS of this equation is
achieved. Every Kraus decomposition of
this map $\mathcal{N}_{A\rightarrow B}$ as $\sum_{x}A_{x}\left(  \cdot\right)
A_{x}^{\dag}$, where $\sum_{x}A_{x}^{\dag}A_{x}=I$, leads to a pure-state
decomposition of the state $\omega_{RB}$:%
\[
\omega_{RB}=\sum_{x}\left(  I_{R}\otimes A_{x}\right)  (\psi_{RA}^{\rho
})(  I_{R}\otimes A_{x}^{\dag})  .
\]
In fact, all the pure-state decompositions of $\omega_{RB}$ and the Kraus decompositions
of $\mathcal{N}_{A\rightarrow B}$ are in one-to-one correspondence.
Note that each operator $(  I_{R}\otimes A_{x})  (\psi_{RA}^{\rho})(
I_{R}\otimes A_{x}^{\dag})  $ is of rank one, so that each normalized
version is a pure state. Consider the following extension of the above state:%
\begin{equation}
\omega_{RBX}=\sum_{x}(  I_{R}\otimes A_{x})  (\psi_{RA}^{\rho
})(  I_{R}\otimes A_{x}^{\dag})  \otimes\left\vert x\right\rangle
\left\langle x\right\vert _{X},\label{eq:RD-q-instrument}%
\end{equation}
where $X$ denotes a classical register. Note that the above state can be considered to result from the action of a quantum instrument on $\psi_{RA}^{\rho}$ since it has both a quantum and a classical part.
 
The entanglement of formation of the state $\omega_{RB}$ is then equal to%
\begin{equation}
E_{F}\left(  \omega_{RB}\right)  =\min_{\left\{  A_{x}\right\}  }H\left(
B|X\right)  _{\omega}, \label{eq:RD-EoF-achievability}%
\end{equation}
where the minimization is over the choice of Kraus operators. 

The protocol proceeds as follows. To start with, the reference and Alice share $n$ copies of $\psi_{RA}^\rho$, which is a purification of the source state
$\rho \in \cD(\cH_A)$. Alice determines the Kraus decomposition of the CPTP map $\mathcal{N}$
(chosen as described above) that minimizes the
conditional entropy $H\left(
B|X\right)  _{\omega}$. Henceforth, we denote the corresponding set of Kraus
operators simply as
$\left\{  A_{x}\right\}$. On every copy of the source state, she
performs the quantum instrument given by (\ref{eq:RD-q-instrument}). She
then measures the classical register $X$ of each copy of the resulting state
$\omega_{RBX}$, thus obtaining a classical sequence $x^n \equiv (x_1, x_2, \ldots, x_n)$, where $x_i$ denotes the outcome of measuring the $X$ register of the $i^{\text{th}}$ copy of $\omega_{RBX}^{\otimes n}$. From \reff{eq:RD-q-instrument} it is clear that the probability that Alice gets an outcome $x$ upon measuring an $X$ register is given by $p_{X}\left(  x\right)
\equiv\ $Tr$(A_{x}^{\dag}A_{x}\rho)$. In the large $n$
limit, with high probability, there are approximately $np_{X}\left(
x\right)  $ states in the length $n$ sequence such that the outcome of the
measurement is~$x$ (i.e., the sequence $x^{n}$ will be strongly typical
\cite{W11}\ with very high probability).
If the sequence Alice obtains is not strongly typical, she aborts the protocol. Otherwise, she groups together the states in the length $n$ sequence with the same
measurement outcome and performs Schumacher compression on each of these
blocks, compressing each block to approximately $np_{X}\left(  x\right)
H\left(  B\right)  _{\rho_{x}}$ qubits, where%
\[
\rho_{x} \equiv \frac{1}{p_{X}\left(  x\right)  }A_{x}\rho A_{x}^{\dag} \in \cD(\cH_B),
\]
and $H\left(  B\right)  _{\rho_{x}} \equiv H(\rho_x)$. She then sends these qubits to Bob, along with the classical sequence
$x^{n}$ representing her measurement outcomes.

Thus, the total rate at which she sends qubits to Bob is given by
(\ref{eq:RD-EoF-achievability}) because%
\[
\sum_{x}p_{X}\left(  x\right)  H\left(  B\right)  _{\rho_{x}}=H\left(
B|X\right)  _{\omega}.
\]
Conditional on the sequence
$x^{n}$ that he receives, Bob decompresses the qubits in each block (according to Schumacher
decompression) and finally discards the classical sequence. The result of this
protocol is that, for $n$ large enough, a state very close to 
$\omega_{RB}^{\otimes n}$ is shared between the reference and Bob. Of course,
in the above development, we analyzed the protocol by assuming that each block consists of
exactly $np_{X}\left( x\right)  $ states, but one can analyze this more carefully (see
Ref.~\cite{DS03}, for example).

One could then execute the above protocol by blocking $k$ of the states
together and by having the 
CPTP map to be of the form $\mathcal{N}%
^{(k)}: \cD(\cH_{A^{k}})\rightarrow  \cD(\cH_{B^{k}})$, (where $\cH_{A^{k}} = \cH_A^{\otimes k}$ and $\cH_{B^{k}} = \cH_B^{\otimes k}$) acting on each block of $k$ states. By letting
$k$ become large, such a protocol leads to the rate in (\ref{eq:EoF-RD}) for
classically-assisted quantum rate distortion coding.

The converse part of the theorem is proved as follows. The most general
protocol begins with many copies of the state $\psi_{RA}^\rho$ 
being shared between the reference and Alice. The most general map that Alice can perform is a quantum
instrument from $A^{n}$ to a quantum system $W$ and a classical system $M$.
Let this be described by a set of trace non-increasing maps $\left\{
\mathcal{E}_{m}\right\}$, with $\sum_m \cE_m= I$. She sends the quantum system $W$ and the classical
message $M$ to Bob. Hence the rate of quantum communication is given by $Q\equiv(1/n) \log\left({\rm{dim}} \cH_W\right)$. Bob then performs a CPTP map from $WM$ to $B^{n}$. For him, performing a CPTP map on a classical system $M$ and a
quantum system $W$ is equivalent to performing CPTP maps $\mathcal{D}_{m}$,
on the quantum system $W$, conditional on the value $m$ of the classical register $M$ (see, e.g.,
 \cite{Yard05a}\ or Section~4.4.8 of Ref.~\cite{W11}). Let $\sigma_{R^{n}B^{n}%
}$ denote the state shared by the reference and Bob at the end of the protocol, and let $\mathcal{M}%
_{A^{n}\rightarrow B^{n}}$ denote the classically-coordinated
encoding-decoding map:%
\begin{align*}
\sigma_{R^{n}B^{n}}  &  \equiv\left(  \text{id}_{R^{n}}\otimes\mathcal{M}%
_{A^{n}\rightarrow B^{n}}\right)  \left(  \left(  \psi_{RA}^{\rho}\right)
^{\otimes n}\right)  ,\\
\mathcal{M}_{A^{n}\rightarrow B^{n}}  &  \equiv\sum_{m}\mathcal{D}_{m}%
\circ\mathcal{E}_{m}.
\end{align*}


Note that the quantum instrument employed by Alice can be simulated by an isometry followed by a von Neumann measurement. Specifically, we can consider Alice to perform an isometry from $A^{n}$ to quantum
systems $W$, $M^\prime$, and an environment $E_{1}$, and then do a von
Neumann measurement on $M^\prime$ to get a classical system $M$. 
After tracing over
the environment $E_{1}$, the original instrument is recovered. 
However, without loss of generality, we could also consider Alice
to perform a von Neumann measurement of $E_{1}$, thus obtaining a 
classical system$~L$. Let $\omega$ denote the state at this point.
Observe that the joint state of $R^{n}$ and $W$ is pure, when
conditioned on the classical systems $L$ and $M$. Each decoding map
$\mathcal{D}_{m}$ can be simulated by Bob by performing an isometry $U_{m}$ from
$W$ to $B^{n}$ and an environment $E_{2}$. Bob could subsequently perform
a von Neumann measurement on $E_{2}$, thus obtaining a classical system $K$. Let $\sigma$ denote the state at the end of the protocol.
Note that the state on $R^{n}B^{n}$ is pure when conditioned on the
classical registers $MLK$. Figure~\ref{fig:classical-assist-QRD} depicts both the original general protocol and the simulation of it outlined in this paragraph.

\begin{figure*}[ptb]
\begin{center}
\includegraphics[
width=6.8in]{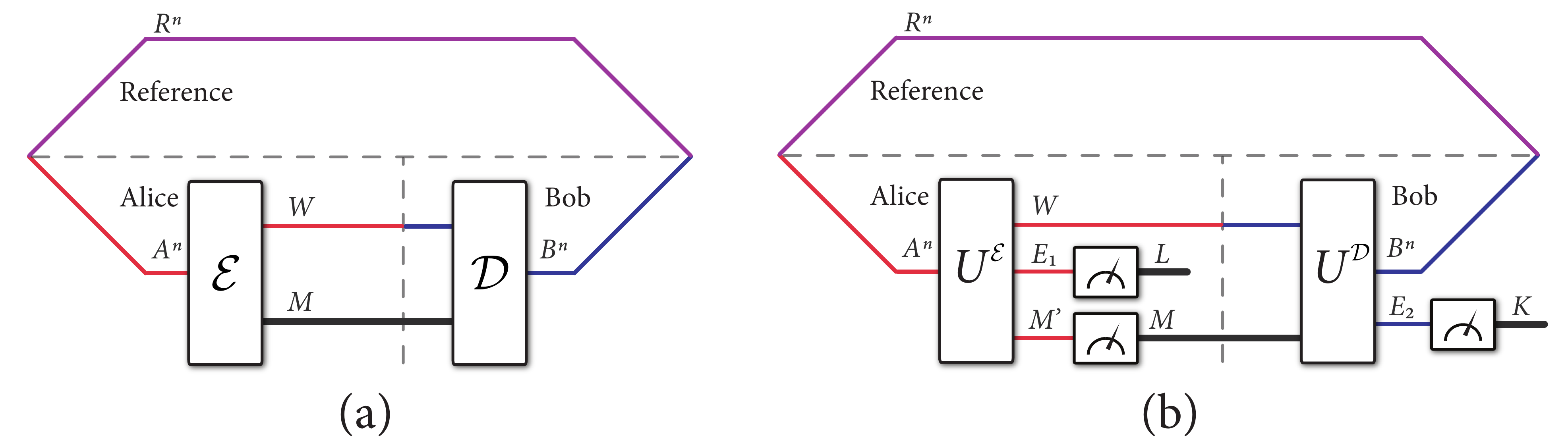}
\end{center}
\caption{\textbf{(a)} A general protocol for quantum rate distortion coding with the assistance of a forward classical side channel
from Alice to Bob. \textbf{(b)} A simulation of the general protocol in {{(a)}}, in which Alice and Bob respectively act with isometric 
extensions of the encoder and decoder in {{(a)}}.}
\label{fig:classical-assist-QRD}%
\end{figure*}

We can now prove a lower bound on the rate of 
classically-assisted lossy quantum data compression as follows:%
\begin{align*}
nQ \equiv \log \left({\rm{dim }} \cH_W \right)&  \geq H(W)_{\omega}\\
&  \geq H\left(  W|LM\right)_{\omega} \\
&  =H\left(  R^{n}|LM\right)_{\omega} \\
&  \geq H\left(  R^{n}|LMK\right)_{\sigma} \\
&  \geq E_{F}\left(  \sigma_{R^{n}B^{n}}\right) \\
&  \geq\min_{%
\genfrac{}{}{0pt}{}{\mathcal{N}^{(n)}\mathcal{\ }:}{{d(\rho}^{\otimes
n}{,\mathcal{N}^{(n)})\leq D}}%
}\ \left[  E_{F}(\rho^{\otimes n},\mathcal{N}^{(n)})\right].
\end{align*}
The second inequality follows because
conditioning on classical variables $L$ and $M$ (after Alice's simulation of
the encoding) cannot increase entropy. The first equality follows because (as
stated above) the joint state of $R^{n}$ and $W$ is pure, when conditioned on the
classical systems $L$ and~$M$. The third inequality follows again because
conditioning on the classical register $K$ cannot increase entropy (note that
this latter entropy is with respect to the state after Bob's simulation of the
decoder). Now, for the fourth inequality, as we stated above, the variables
$LMK$ induce a particular pure-state decomposition of the state on $R^{n}%
B^{n}$, and by the definition of entanglement of formation given in (\ref{eof}), the conditional
entropy of this particular pure-state decomposition cannot be larger than the
minimal one given by the entanglement of formation. The final bound follows
because the map $\sum_{m}\mathcal{D}_{m}\circ\mathcal{E}_{m}$ is a particular
CPTP map meeting the distortion constraint ${d(\rho}^{\otimes
n}{,\mathcal{N}^{(n)})\leq D}$, and thus the entanglement of formation
for the state resulting from this map cannot be larger than the entanglement
of formation of the state resulting from the optimal map meeting the
distortion constraint. Finally, we divide both sides of the above inequality by $n$ and take the limit as $n\to \infty$ to obtain the lower bound.
\end{proof}

We remark that the proof of the achievability part exploits a strategy
which is similar in spirit to that used in the proof of the reverse Shannon theorem (see Refs.~\cite{BDSW96,HHT01,BBCW11}, for example). 
In particular, we just pick the map that meets the distortion constraint and minimizes the entanglement of formation and simulate
this map using classical communication and Schumacher compression.  

%

\subsection{Bounds on the Quantum Rate Distortion Function for an Isotropic Qubit Source}

In this subsection we consider the original case of the distortion being based on the
entanglement fidelity, i.e., with distortion observable $\Delta = \1-\psi^\pi$, for
an isotropic qubit source, meaning that the source state
is a maximally mixed qubit state, $\pi \equiv \1/2$. 
The following theorem provides an exact expression for the entanglement-assisted quantum rate distortion function of an isotropic qubit source. 
\begin{theorem}
\label{thm:ea-qrd-isotropic}
For an isotropic qubit source, the entanglement-assisted quantum rate
distortion function is equal to
\[
  R_{ea}^q(D) = \begin{cases}
                  1-\tfrac{1}{2}H\left( \{1-D,\tfrac{D}{3},\tfrac{D}{3},\tfrac{D}{3}\}\right) 
                                                           & \text{ if } 0\leq D\leq \frac34, \\
                  0 & \text{ if } \frac34 \leq D \leq 1,
                \end{cases}
\]
where we have used the notation $H\left( \{\cdot\}\right)$ to denote
the Shannon entropy of the probability distribution inside the braces~$\{\cdot\}$.
\end{theorem}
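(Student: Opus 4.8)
The plan is to start from the single-letter entanglement-assisted formula already established in \cite{DHW11} and reproduced above,
$$R_{ea}^q(D) = \half \min_{\cN \,:\, d(\pi,\cN)\le D} I(R;B)_\omega, \qquad \omega_{RB} = \cN_{A\to B}(\psi^\pi_{RA}),$$
and to collapse the channel optimization to a one-parameter problem by exploiting the full unitary symmetry of the isotropic source. First I would reformulate the constraint: with the entanglement-fidelity observable $\Delta = \1 - \psi^\pi$ one has $d(\pi,\cN) = 1 - F_e(\pi,\cN)$, so the feasibility condition $d(\pi,\cN)\le D$ is exactly $F_e(\pi,\cN)\ge 1-D$. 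Note also that $\omega_R = \pi = \1/2$ for every channel, since a map on $A$ does not disturb $R$; hence $H(R)_\omega = 1$ throughout.

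The key step is a twirling argument showing that the minimum is attained on a depolarizing channel. For a qubit unitary $U$, set $\cN_U(\cdot) \equiv U^\dagger \cN\!\left(U(\cdot)U^\dagger\right)U$ and let $\bar\cN \equiv \int dU\, \cN_U$ be the Haar average, which is a depolarizing channel. I would then verify three facts. (i) \emph{Distortion is preserved}: writing $F_e(\pi,\cN) = \tfrac14 \sum_x \abs{\tr A_x}^2$ for Kraus operators $\{A_x\}$, and noting the Kraus operators of $\cN_U$ are $U^\dagger A_x U$ with $\tr(U^\dagger A_x U)=\tr(A_x)$, gives $F_e(\pi,\cN_U)=F_e(\pi,\cN)$ for all $U$, hence $F_e(\pi,\bar\cN)=F_e(\pi,\cN)$ by linearity. (ii) \emph{Local-unitary invariance of the objective}: the transpose (ricochet) identity $(\1_R\otimes M)\ket{\psi^\pi_{RA}} = (M^T\otimes \1_A)\ket{\psi^\pi_{RA}}$ with $M=U$ shows that $\omega^{(U)}_{RB}\equiv(\id_R\otimes\cN_U)(\psi^\pi_{RA})$ differs from $\omega_{RB}$ only by a local unitary on $R\otimes B$, so $I(R;B)_{\omega^{(U)}}=I(R;B)_\omega$. (iii) \emph{Convexity in the channel}: since $I(R;B)_\omega = H(R)_\omega - H(R|B)_\omega$ with $H(R)_\omega=1$ fixed, and $H(R|B)$ is concave in $\omega_{RB}$ which is linear in $\cN$, the objective is convex in $\cN$. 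Combining (ii) and (iii) through Jensen's inequality yields $I(R;B)_{\bar\omega}\le \int dU\, I(R;B)_{\omega^{(U)}} = I(R;B)_\omega$, so replacing $\cN$ by $\bar\cN$ is feasibility-preserving and cannot increase the objective. The main obstacle is precisely assembling (i)--(iii) so that they interlock correctly; the remaining analysis is elementary.

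It then remains to evaluate the objective over depolarizing channels and optimize a single scalar. A depolarizing channel with parameter $p\in[0,1]$ sends $\psi^\pi_{RA}$ to the Bell-diagonal Choi state with spectrum $\{1-p,\tfrac{p}{3},\tfrac{p}{3},\tfrac{p}{3}\}$, whose entanglement fidelity is $1-p$, so the distortion equals $p$. Since $\omega_R=\omega_B=\pi$ give $H(R)_\omega=H(B)_\omega=1$, while $H(RB)_\omega = H\!\left(\{1-p,\tfrac{p}{3},\tfrac{p}{3},\tfrac{p}{3}\}\right)$, we obtain $\half I(R;B)_\omega = 1 - \half H\!\left(\{1-p,\tfrac{p}{3},\tfrac{p}{3},\tfrac{p}{3}\}\right)$. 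Finally I would minimize over $p\le D$. The entropy $H\!\left(\{1-p,\tfrac{p}{3},\tfrac{p}{3},\tfrac{p}{3}\}\right)$ is strictly increasing on $[0,\tfrac34]$ (its derivative $\log\tfrac{3(1-p)}{p}$ vanishes only at $p=\tfrac34$, where the distribution is uniform) and decreasing afterward, so minimizing $I(R;B)$ amounts to maximizing this entropy. For $0\le D\le \tfrac34$ the feasible maximizer is $p=D$, giving $R_{ea}^q(D)=1-\half H\!\left(\{1-D,\tfrac{D}{3},\tfrac{D}{3},\tfrac{D}{3}\}\right)$; for $\tfrac34\le D\le 1$ one may take $p=\tfrac34$, where the Choi state is maximally mixed and $I(R;B)=0$, giving $R_{ea}^q(D)=0$. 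This reproduces the claimed two-case formula.
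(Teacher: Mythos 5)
Your proposal is correct and follows essentially the same route as the paper's proof: a twirling argument combining (i) preservation of the entanglement fidelity, (ii) the ricochet identity giving local-unitary invariance of $I(R;B)$, and (iii) convexity of the mutual information at fixed $R$-marginal reduces the optimization to depolarizing channels, after which the rate is evaluated and optimized over the single depolarizing parameter. The only differences are cosmetic: you twirl over the full unitary group with Haar measure where the paper averages over the twelve single-qubit Clifford unitaries, and you spell out the final monotonicity analysis of $H\left(\{1-p,\tfrac{p}{3},\tfrac{p}{3},\tfrac{p}{3}\}\right)$ (including the $D\geq\tfrac34$ case) that the paper leaves largely implicit.
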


\begin{proof}
First, recall the entanglement-assisted rate distortion function from Theorem~3 of \cite{DHW11}:%
\begin{equation}
R_{ea}^{q}(D)
  = \frac{1}{2}\min_{d\left(\rho,\mathcal{N}\right) \leq D}\ I\left(R;B\right) _{\omega},
\label{eq:ea-rate-d-func}%
\end{equation}
where the distortion measure $d\left(  \rho,\mathcal{N}\right)  $\ is related
to the entanglement fidelity:
\[
d\left(  \rho,\mathcal{N}\right) \equiv 1-F_{e}\left(  \rho,\mathcal{N} \right)  .
\]
The mutual information is with respect to the state
\[
\omega_{RB}\equiv\left(  \text{id}_{R}\otimes\mathcal{N}_{A^{\prime
}\rightarrow B}\right)  \left(  \psi_{RA^{\prime}}^{\rho}\right)  ,
\]
where $\psi_{RA^{\prime}}^{\rho}$ is a purification of the source state $\rho
$. For an isotropic qubit source, we have $\rho
=\pi\equiv \1 /2$ and $\psi_{RA^{\prime}}^{\rho} = \Phi_{RA^{\prime}}$ (a maximally entangled state). For any channel $\mathcal{N}$, it has a Kraus decomposition
as follows:%
\[
\mathcal{N}\left(  \rho\right)  =\sum_{x}A_{x}\rho A_{x}^{\dag}.
\]
We also have the well-known formula for the entanglement fidelity (see, e.g., \cite{W11}):%
\[
F_{e}\left(  \rho,\mathcal{N}\right)  =\sum_{x}\left\vert \text{Tr}(
\rho A_{x})  \right\vert ^{2}.
\]

Now suppose that there is some channel $\mathcal{N}$ achieving the minimum in
(\ref{eq:ea-rate-d-func}), with Kraus
operators $\left\{  A_{x}\right\}$. 
Consider the channel $\mathcal{N}_{i}$ defined as follows:
\[
\mathcal{N}_{i}\left(  \rho\right)  \equiv\sigma_{i}^{\dag}\mathcal{N}\left(
\sigma_{i}\rho\sigma_{i}^{\dag}\right) \sigma_{i},
\]
where $\sigma_{i}$, $i=0,1,\ldots,11$ are the Clifford unitaries on a single
qubit (given explicitly, e.g., in Appendix~A of \cite{BDSW96}). Thus, its Kraus
operators are $\left\{  \sigma_{i}^{\dag}A_{x}\sigma_{i}\right\}$ for any
fixed $i$. For an isotropic qubit source, the channel $\mathcal{N}_{i}$ has
the same entanglement fidelity as the original channel because
\begin{multline}
F_{e}\left(  \pi,\mathcal{N}_{i}\right)  =\sum_{x}\left\vert \text{Tr}\left\{
\pi\sigma_{i}^{\dag}A_{x}\sigma_{i}\right\}  \right\vert ^{2}  =\frac{1}{4}%
\sum_{x}\left\vert \text{Tr}\left\{  \sigma_{i}^{\dag}A_{x}\sigma_{i}\right\}
\right\vert ^{2} \\ =\frac{1}{4}\sum_{x}\left\vert \text{Tr}\left\{
A_{x}\right\}  \right\vert ^{2}=F_{e}\left(  \pi,\mathcal{N}\right).
\label{eq:ent-fid-calc}
\end{multline}
Let $\mathcal{N}_{\text{tw}}$ denote the \textquotedblleft twirled
version\textquotedblright\ of $\mathcal{N}$:
\[
\mathcal{N}_{\text{tw}}\left(  \rho\right)  \equiv\frac{1}{12}\sum
_{i}\mathcal{N}_{i}\left(\rho\right).
\]
A similar calculation as in (\ref{eq:ent-fid-calc}) reveals that the
\textquotedblleft twirled version\textquotedblright\ of the channel
$\mathcal{N}$ has an entanglement fidelity equal to $F_{e}\left(
\pi,\mathcal{N}\right)$. Also, it is well known that the Clifford twirled channel is
equal to a depolarizing channel (a probabilistic mixture of the identity
channel and the constant channel mapping every input state to the
maximally mixed state $\pi = \tfrac{1}{2}\1$, see, e.g., \cite{BDSW96,VollbrechtWerner01,ABE10}). 
Now, each of the channels $\mathcal{N}_{i}$ leads to the same
mutual information as the original channel $\mathcal{N}$, in the sense that%
\[
I\left(R;B\right)_{\omega} = I\left(R;B\right)_{\omega_{i}},
\]
where%
\begin{align*}
\omega & \equiv \omega_{RB}\equiv\left(  \text{id}_{R}\otimes\mathcal{N}_{A^{\prime
}\rightarrow B}\right)  \left(  \Phi_{RA^{\prime}}\right)  ,\\
\omega_{i} & \equiv \left(  \omega_{i}\right)  _{RB}  \equiv \left(  \text{id}_{R}\otimes\left(
\mathcal{N}_{i}\right)  _{A^{\prime}\rightarrow B}\right)  \left(
\Phi_{RA^{\prime}}\right)  .
\end{align*}
This is due to the fact that, for a maximally entangled state 
$\left\vert \Phi_{RA^{\prime}}\right\rangle$,
$\1_{R}\otimes\left(  \sigma_{i}\right)
_{A^{\prime}}\left\vert \Phi_{RA^{\prime}}\right\rangle=\left(  \sigma
_{i}^{T}\right)  _{R}\otimes \1_{A^{\prime}}\left\vert \Phi_{RA^{\prime}}\right\rangle$ 
(where $\sigma_i^T$ denotes the transpose of $\sigma_i$), and because the von Neumann entropy is
invariant under unitaries. However, we know that the twirled channel
cannot have a mutual information larger than the original channel's, due to
the convexity of mutual information with respect to the states 
$\left(\omega_{i}\right)_{RB}$ :%
\[
I\left(  R;B\right)_{\omega}
  =\frac{1}{12}\sum_{i}I\left(R;B\right)_{\omega_{i}}\geq I\left(  R;B\right)_{\omega_{\text{tw}}},
\]
where
\[
\omega_{\text{tw}}\equiv\left(  \text{id}_{R}\otimes\left(  \mathcal{N}_{\text{tw}}\right)_{A^{\prime}\rightarrow B}\right)  \left(  \Phi_{RA^{\prime}}\right)  .
\]
This proves that the channel optimizing the expression in
(\ref{eq:ea-rate-d-func}) for an isotropic qubit source is a depolarizing channel
$\cN_p$, hence of the form
\[
\mathcal{N}_p\left(\rho\right)  = (1-p)\rho + \frac{p}{3}
\left(
\sigma_X \,\rho\, \sigma_X + \sigma_Y \,\rho\, \sigma_Y + \sigma_Z \,\rho\, \sigma_Z \right).
\]
For these channels, a simple calculation reveals that their entanglement
fidelity for an isotropic qubit source is equal to $p$, because the
non-identity Pauli operators are traceless. Thus, for a given distortion $D$,
the channel achieving the minimum mutual information is a depolarizing channel with
$p \leq D$. The latter is given by
\[
  1-\tfrac{1}{2}H\left(\{ 1-p, \tfrac{p}{3}, \tfrac{p}{3}, \tfrac{p}{3} \}\right),
\]
thus finishing the proof.
\end{proof}

In prior work  \cite{Devetak:2002it}, Devetak and Berger showed that the classically-assisted
quantum rate distortion function in Theorem~\ref{thm:cl-as-qrd} significantly simplifies
for an isotropic qubit source. For convenience,
we restate their result as the following theorem and provide a simple proof of it below.

\begin{theorem}[Devetak and Berger \cite{Devetak:2002it}]
\label{thm:cl-a-qrd-isotropic}
The classically-assisted
quantum rate distortion function for an isotropic qubit source is equal to the following expression:
\[
R_{\rightarrow}^{q}\left(D\right)  =\left\{
\begin{array}
[c]{cc}%
h_{2}\left(  \tfrac{1}{2}+\sqrt{D\left(  1-D\right)  }\right)   & : \, 0 \leq D<\tfrac{1}%
{2}\\
0 & : \, \tfrac{1}{2}\leq D\leq1
\end{array}
\right.
\]
In the above, $h_2(p) \equiv - p \log p - (1-p) \log (1-p)$ is the binary entropy
for any probability $p$. 
\end{theorem}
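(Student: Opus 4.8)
The plan is to specialize the regularized formula of Theorem~\ref{thm:cl-as-qrd} to the isotropic source and show it collapses to a single-letter depolarizing value, reusing the Clifford-twirling machinery from the proof of Theorem~\ref{thm:ea-qrd-isotropic}. First I would fix $k=1$ and argue that the channel $\mathcal{N}$ minimizing $E_F(\omega_{RB})$ subject to $d(\pi,\mathcal{N})\le D$ may be taken depolarizing. Given any optimal $\mathcal{N}$, I form the Clifford-twirled channel $\mathcal{N}_{\mathrm{tw}}$ exactly as in Theorem~\ref{thm:ea-qrd-isotropic}: the calculation there shows the entanglement fidelity, hence the distortion, is unchanged, while the outputs $\omega_i$ of the conjugated channels $\mathcal{N}_i$ differ from $\omega_{RB}$ only by local unitaries on $R$ and $B$ (using $\1_R\otimes\sigma_i\ket{\Phi}=\sigma_i^T\otimes\1\ket{\Phi}$). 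Since $E_F$ is invariant under local unitaries and convex in the state, $E_F(\omega_{\mathrm{tw}})\le\tfrac{1}{12}\sum_i E_F(\omega_i)=E_F(\omega_{RB})$, so twirling never increases the entanglement of formation and the single-letter optimizer is depolarizing.

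Next I would evaluate $E_F$ of the resulting Bell-diagonal output. For $\mathcal{N}_p$ the state $\omega_{RB}=(\id_R\otimes\mathcal{N}_p)(\Phi_{RA})$ is Bell-diagonal with weight $1-p$ on $\Phi$ and $p/3$ on each remaining Bell state, and its distortion equals $p$, so at distortion $D$ the minimizing choice is $p=D$. Using Wootters' concurrence formula for Bell-diagonal two-qubit states, $C=\max\{0,\,2(1-p)-1\}=\max\{0,1-2p\}$ and $E_F=h_2\!\left(\tfrac12+\tfrac12\sqrt{1-C^2}\right)$; substituting $C=1-2p$ gives $\sqrt{1-C^2}=2\sqrt{p(1-p)}$ and hence $E_F=h_2\!\left(\tfrac12+\sqrt{D(1-D)}\right)$ for $0\le D<\tfrac12$, while for $D\ge\tfrac12$ the concurrence vanishes, the state is separable, and $E_F=0$. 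This reproduces the claimed closed form. The achievability (upper) bound on the regularized expression is then immediate from subadditivity of $E_F$: taking $\mathcal{N}^{(k)}=\mathcal{N}_D^{\otimes k}$ yields average distortion $D$ and $E_F(\omega_{RB}^{\otimes k})\le k\,E_F(\omega_{RB})$, so $\tfrac1k\min_{\mathcal{N}^{(k)}}E_F\le h_2\!\left(\tfrac12+\sqrt{D(1-D)}\right)$.

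The hard part will be the converse, i.e.\ showing that the regularization cannot undercut this single-letter value: one must rule out that a \emph{correlated} block channel $\mathcal{N}^{(k)}$ produces an entangled output $\omega_{R^kB^k}$ with $\tfrac1k E_F<h_2(\tfrac12+\sqrt{D(1-D)})$. My plan is to twirl $\mathcal{N}^{(k)}$ by independent single-qubit Cliffords on each of the $k$ systems; as above this fixes the average distortion and does not increase $E_F$, so the optimizer may be taken covariant, with each single-qubit marginal channel depolarizing with parameter $p_i$ obeying $\tfrac1k\sum_i p_i\le D$. I would then lower-bound $\tfrac1k E_F(\omega_{R^kB^k})$ by $\tfrac1k\sum_i E_F(\omega_{R_iB_i})=\tfrac1k\sum_i h_2\!\left(\tfrac12+\sqrt{p_i(1-p_i)}\right)$ and close the argument with Jensen's inequality, using convexity of $D\mapsto h_2(\tfrac12+\sqrt{D(1-D)})$ together with its monotone decrease, giving $\ge h_2(\tfrac12+\sqrt{D(1-D)})$. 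The genuinely delicate step is justifying the per-letter bound $E_F(\omega_{R^kB^k})\ge\sum_i E_F(\omega_{R_iB_i})$: entanglement of formation is only subadditive in general, so this cannot hold for arbitrary states and must instead be extracted from the special additivity and computability of the entanglement of formation for isotropic (Bell-diagonal) states. This additivity is precisely where the regularized entanglement of formation single-letterizes, and I expect it to be the crux requiring the most care.
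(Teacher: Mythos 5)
Your single-letter analysis matches the paper's almost exactly: the paper likewise Clifford-twirls the optimal channel, notes that twirling preserves the entanglement fidelity while convexity (plus local-unitary invariance) of $E_F$ means it can only decrease the entanglement of formation, and then reads off $h_{2}\left(\tfrac{1}{2}+\sqrt{D(1-D)}\right)$ from the known formula for the entanglement of formation of a Bell-diagonal state with dominant weight $1-D$. (For $D\geq\tfrac{1}{2}$ the paper instead exhibits an explicit measure-and-resend protocol with zero quantum rate, whereas you argue separability of the single-letter output; both are fine.) The decisive difference is what happens to the regularization in Theorem~\ref{thm:cl-as-qrd}. The paper does \emph{not} prove single-letterization at all: it explicitly cites Devetak and Berger \cite{Devetak:2002it} for the statement that the regularized entanglement-of-formation expression collapses to a single-letter form for the isotropic qubit source, and says it will not reproduce that proof. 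You, by contrast, attempt to supply this step, and this is where your argument has a genuine gap.

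The step that fails is the per-letter bound $E_F(\omega_{R^kB^k})\geq\sum_i E_F(\omega_{R_iB_i})$. As you yourself note, this is strong superadditivity of the entanglement of formation, and it is \emph{false} in general: by Shor's equivalence of the additivity conjectures and Hastings' counterexample, $E_F$ is not strongly superadditive, so no generic property of $E_F$ can rescue the step. Your hope that it can be "extracted from the special additivity of $E_F$ for isotropic (Bell-diagonal) states" conflates two different statements: additivity on tensor powers of an isotropic state, $E_F(\sigma^{\otimes k})=kE_F(\sigma)$, is much weaker than what you need, namely superadditivity for an arbitrary \emph{correlated} block state whose pairwise marginals happen to be isotropic after your per-copy twirl. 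The latter is precisely the nontrivial content of the Devetak--Berger converse, and your proposal neither proves it nor reduces it to a known result; it only names it. Everything downstream (the Jensen step using convexity and monotonicity of $D\mapsto h_{2}(\tfrac{1}{2}+\sqrt{D(1-D)})$, with the function set to zero beyond $D=\tfrac{1}{2}$) is fine conditional on that bound, but as written the converse---the part you correctly identify as the hard part---remains unproven, whereas the paper closes it by citation rather than by argument.
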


\begin{proof}
First, recall the general expression for the classically-assisted quantum rate
distortion function from Theorem~\ref{thm:cl-as-qrd}. 
Devetak and Berger have shown that this
expression assumes a single-letter form for the case of an 
isotropic qubit source \cite{Devetak:2002it}, and we
do not reproduce the proof of this statement here.

So, the expression for the
classically-assisted quantum rate distortion function in this case reduces to
\begin{equation}
\min_{d\left(  \pi,\mathcal{N}\right)  \leq D}\min_{\left\{  A_{x}\right\}
}H\left(  B|X\right)_{\omega},\label{eq:EoF-simple}%
\end{equation}
where the conditional entropy is with respect to the following state:%
\[
\omega_{RBX}\equiv\sum_{x}\left(  I_{R}\otimes A_{x}\right)  \left(
\Phi_{RA^{\prime}}\right)  (  I_{R}\otimes A_{x}^{\dag})
\otimes\left\vert x\right\rangle \left\langle x\right\vert _{X},
\]
and the operators $\left\{  A_{x}\right\}_{x}$ are the Kraus operators for a
channel $\mathcal{N}$ meeting the distortion constraint. For simplicity, let
us denote the optimal channel meeting the distortion constraint in
(\ref{eq:EoF-simple}) as $\mathcal{N}$ and the optimal Kraus decomposition for
the entanglement of formation as $\left\{  A_{x}\right\}_{x}$, so that
$d\left(  \pi,\mathcal{N}\right) \leq D$. By the same argument as in the previous
theorem, the channel with the set of Kraus operators $\left\{  \sigma_{i}^{\dag}%
A_{x}\sigma_{i}\right\}_{x}$ for a fixed $i$ causes the same distortion
$D$ to an isotropic qubit source while having the same value for the
entanglement of formation. Also, by the same argument, the twirled channel
with Kraus operators $\left\{  \sigma_{i}^{\dag}A_{x}\sigma_{i}/\sqrt
{12}\right\}_{x,i}$ causes the same distortion $D$\ as the optimal channel,
but this channel can have only a lower value of the entanglement of formation, due to
the convexity of the entanglement of formation \cite{BDSW96}. 
Now, the twirled channel is a depolarizing channel causing
distortion $\leq D$ to the source, implying that its effect on a maximally
entangled state is to produce an isotropic state, i.e., a mixture of Bell 
states of the following form:
\[
 (1-p) \Phi_{RA^{\prime}} + \tfrac{p}{3} \Psi_{RA^{\prime}}^{+}
  + \tfrac{p}{3} \Psi_{RA^{\prime}}^{-} + \tfrac{p}{3} \Phi_{RA^{\prime}}^{-},
\]
where $p \leq D$. In the above mixture, it must be the case that
$1-D$ is larger than all of the other components whenever $D<1/2$. 
In this case, it is well known that the entanglement of
formation of such a Bell mixture is equal to the following expression \cite{BDSW96}:
\[
h_{2}\left(  \tfrac{1}{2}+\sqrt{D\left(  1-D\right)  }\right)  .
\]

For $D\geq1/2$, the strategy requiring no quantum communication is very
simple, implying that there is no need to explicitly evaluate the expression
in the theorem statement. For every copy of the source, Alice just measures it in the basis
$\left\{  \left\vert 0\right\rangle ,\left\vert 1\right\rangle \right\}  $ and
sends the measurement outcome to Bob over the classical channel. Bob then
prepares the state $\left\vert 0\right\rangle $ or $\left\vert 1\right\rangle
$ depending on what he receives from Alice, and he forgets what Alice sent to
him. This procedure prepares the dephased state $\frac{1}{2}\left(  \left\vert
0\right\rangle \left\langle 0\right\vert _{R}\otimes\left\vert 0\right\rangle
\left\langle 0\right\vert _{B}+\left\vert 1\right\rangle \left\langle
1\right\vert _{R}\otimes\left\vert 1\right\rangle \left\langle 1\right\vert
_{B}\right)  $ shared between Bob and the reference, which has distortion
$1/2$ from the maximally entangled state. To achieve an even larger distortion
 with no quantum communication (if one so wishes), Bob could just depolarize
his state locally.
\end{proof}
\begin{figure}[ptb]
\begin{center}
\includegraphics[
width=3.5in
]{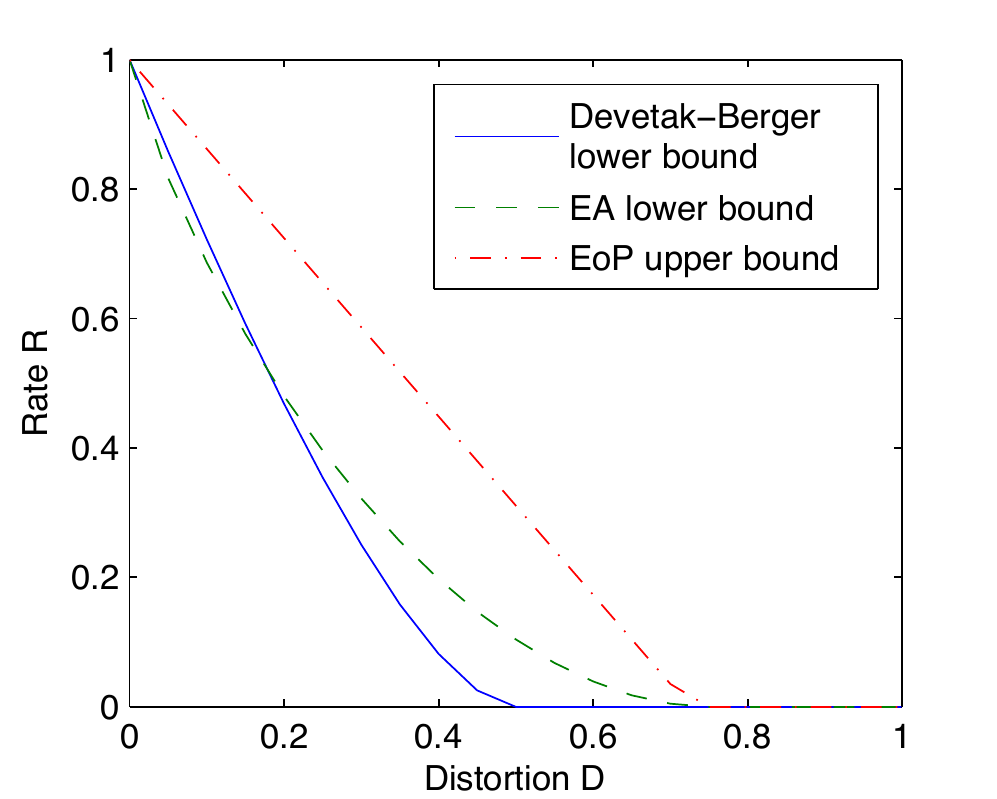}
\end{center}
\caption{Bounds on the unassisted quantum rate distortion function for an isotropic qubit source.
We have lower bounds from the classically-assisted and entanglement-assisted (EA) quantum rate distortion function of this source.
The convexified entanglement of purification (EoP) provides an upper bound on the quantum rate distortion function
of this source.}
\label{fig:RD-compare}
\end{figure}

The expressions from
Theorems~\ref{thm:ea-qrd-isotropic} and \ref{thm:cl-a-qrd-isotropic} give two lower bounds
on the unassisted quantum rate distortion function of an isotropic qubit source, plotted in Figure~\ref{fig:RD-compare}.

We can also obtain an upper bound on the quantum rate distortion function for the isotropic qubit source
by computing the unregularized entanglement of purification bound from Theorem~5 of Ref.~\cite{DHW11}.
In particular, a strategy to achieve the unregularized bound is as follows.
The protocol begins with Alice and the reference sharing
many copies of a maximally entangled state $\Phi_{RA'}$ (the reduction of each of these to Alice's systems
is a maximally mixed state). Given a distortion constraint $D$, Alice applies some
isometric extension of the following depolarizing channel to each of her systems:
$$
\mathcal{N}_D(\rho) \equiv (1-D) \rho + \tfrac{D}{3} \left(
\sigma_X \,\rho\, \sigma_X + \sigma_Y \,\rho\, \sigma_Y + \sigma_Z \,\rho\, \sigma_Z \right),
$$
resulting in the following state shared between the reference and Alice:
\begin{equation}
\left(  1-D\right)  \Phi_{RA^{\prime}}+ \tfrac{D}{3} \left( \Psi_{RA^{\prime}}^{+}%
+ \Psi_{RA^{\prime}}^{-}+ \Phi_{RA^{\prime}}^{-} \right). \label{eq:Werner}
\end{equation}
Alice then Schumacher compresses the output of the depolarizing channel and some share of the environment
(which she possesses since she performs the isometric extension), and she can do this at a rate equal
to the entanglement of purification of the above state. Furthermore, Alice and Bob can time-share
between any two strategies of this form, implying that the rates and distortions of the time-shared protocol
combine as in Remark~\ref{rem:convexity}. The authors of Ref.~\cite{THLD02}
have already numerically calculated
the entanglement of purification of the state in (\ref{eq:Werner})
in Figure~1 of their paper. As such, the convex hull of
their plot serves
equally well as an upper bound on the quantum rate distortion function of the isotropic qubit
source, and we have reproduced this plot in our Figure~\ref{fig:RD-compare}
above.\footnote{The only change we need to
make to their plot to suit our purposes is to flip it with respect to the horizontal axis, eliminate
values of the entanglement of purification beyond $D=3/4$, and take the convex hull of
the resulting curve. The maximally mixed state on Alice and the reference
is a state that meets the distortion constraint at $D=3/4$ so that at this distortion or beyond, there is
no quantum communication needed---Alice simply discards her qubits and Bob prepares the maximally mixed state at his end.}

\section{Quantum Rate Distortion in the Presence of Quantum Side Information}
\label{sec-qrd-qsi}

In this section we study quantum rate distortion in the case in which 
Bob has some  quantum side information (QSI) about the source, as an auxiliary resource. As mentioned in the Introduction, this corresponds to the following setting:
Suppose a third party (say, Charlie) maps the source state~$\rho$ via some isometry to a bipartite state $\rho_{AB}$ and distributes the systems $A$ and $B$ to Alice and Bob, respectively. The goal is for Alice to transfer her system $A$ to Bob, up to some given distortion, using as few qubits as possible. The system $B$, which is in Bob's possession, acts as the quantum side information, and he can make use of it in his decompression task. It is required that the protocol causes only a negligible disturbance to the state of the reference system and Bob's quantum side information, in case Bob might want to use his system in some subsequent quantum information processing task. The above problem is a quantum generalization of the Wyner-Ziv \cite{WZ76} problem and is also a natural extension of the work
of Luo and Devetak \cite{LD09} which dealt with classical rate distortion theory in the presence of QSI (and thus considered Alice to receive a classical system instead of a quantum one).

The rate distortion function, which we denote as $R^q_{qsi}(D)$ for any given distortion $D\ge 0$, is then the minimum rate of quantum communication required for this task, evaluated in the limit in which Alice and Bob share asymptotically many copies of the state $\rho_{AB}$. It is defined analogously to $R^q(D)$ (see Section~\ref{quantum}), except that the encoding and decoding maps are 
now given by
$$\mathcal{E}_{n}:\mathcal{D}(\mathcal{H}_{A}^{\otimes n})\rightarrow
\mathcal{D}({\widetilde{\mathcal{H}}_{Q^{n}}}),
$$
and
\[
\mathcal{D}_{n}:\mathcal{D}({\widetilde{\mathcal{H}}_{Q^{n}}} \otimes \cH_{B}^{\otimes n})\rightarrow
\mathcal{D}(\mathcal{H}_{A}^{\otimes n}),
\]
where $\cH_{B}^{\otimes n}$ denotes the Hilbert space associated with Bob's QSI.

Theorem~\ref{thm:QSI-QRD-no-disturb} of Section~\ref{qsi} gives an expression for $R^q_{qsi}(D)$. Before going over to it, we briefly recall an important protocol of quantum information theory, namely, quantum state redistribution \cite{DY08,YD09}. After doing so,
 we then employ it in Section~\ref{qrst} to develop a quantum reverse Shannon theorem in the presence of QSI---the main tool that we use to prove Theorem~\ref{thm:QSI-QRD-no-disturb}.
\medskip

\subsection{Quantum State Redistribution} 

Quantum state redistribution is an important protocol in quantum information theory \cite{DY08,YD09} and is defined as follows. Alice and Bob share many copies of a tripartite state $\rho_{ABC}$,
 where Alice holds the systems labeled by $A$ and $C$, and Bob holds the systems labeled by $B$.
 Let the state $\rho_{ABC}$ be purified by a reference system $R$, the pure state being denoted as $\psi_{ABCR}$. The task is for Alice to
 transfer the systems labeled by $C$ to Bob,
 while keeping the overall purification $\psi_{ABCR}$ approximately unchanged (possibly with the help of prior shared entanglement). The quantity of interest is the minimum rate of quantum communication from Alice to Bob needed to accomplish this task. The rate is evaluated in the limit of asymptotically many copies of the state $\rho_{ABC}$ that is initially shared between Alice and Bob. 

Let $Q$ and $E$ denote the rates of quantum communication and entanglement consumption,\footnote{A negative entanglement consumption rate implies that entanglement is instead generated! The reader should keep this in mind any time we refer to the ``entanglement consumption rate'' of a protocol.} respectively, required to achieve quantum state redistribution. Devetak and Yard~\cite{DY08} proved that the corresponding resource inequality is given as follows:
\be\label{resource}
\psi_{AC|B|R} + Q[q\to q] + E[qq] \ge \psi_{A|CB|R},
\ee
if and only if $Q$ and $E$ satisfy the following inequalities:
\begin{align}
Q & \ge \tfrac{1}{2} I(R;C|B)_\psi \nonumber \\
Q + E & \ge H(C|B)_\psi.\label{ineq}
\end{align}
The meaning of the resource inequality in \eqref{resource} is that, for $n$ large enough, $Q$
qubits of quantum communication and
$E_1$ ebits of entanglement are
sufficient to transfer all $n$ of the $C$ systems from Alice to Bob, while
generating an additional $E_2$
ebits of entanglement, such that $E = E_1 - E_2$. Moreover, the fidelity of this protocol is equal to one in the
asymptotic limit, and 
$\psi_{AC|B|R}$ and $\psi_{A|CB|R}$ denote the states before and after the protocol because Alice begins by holding the systems labeled by $AC$ and ends by holding only the systems labeled by $A$.
From \reff{resource}-\reff{ineq}, we infer that if 
$\frac{1}{2} I(R;C|B)_\psi >  H(C|B)_\psi$ then the protocol redistributes the system $C$ to Bob as well as generates entanglement. In this case, the resource inequality takes the form:
\begin{multline}
\psi_{AC|B|R} +  \tfrac{1}{2} I(R;C|B)_\psi [q\to q] \\ \ge 
\psi_{A|CB|R} + \left( \tfrac{1}{2} I(R;C|B)_\psi  -  H(C|B)_\psi\right)[qq],
\end{multline}
which can be equivalently written as
\begin{multline}
\label{resource-ent} 
\psi_{AC|B|R} +  \tfrac{1}{2} I(R;C|B)_\psi [q\to q] \\ \ge \psi_{A|CB|R} +  \tfrac{1}{2} \left( I(B;C)_\psi  -  I(A;C)_\psi\right)[qq],
\end{multline}
which follows from the definitions of the conditional mutual information and the conditional entropy, duality of conditional entropy (Lemma~\ref{dual_cond_ent}), and the fact that $\psi_{ABCR}$ is a pure state.

If, in contrast, $\tfrac{1}{2} I(R;C|B)_\psi <  H(C|B)_\psi$, then entanglement is consumed in order to achieve state redistribution, and the resource inequality can be written as 
\begin{multline}
\label{resource-ent2}
\psi_{AC|B|R} +  \tfrac{1}{2} I(R;C|B)_\psi [q\to q] \\
+ \tfrac{1}{2} \left(I(A;C)_\psi-  I(B;C)_\psi  \right)[qq] \ge \psi_{A|CB|R}\,\,.
\end{multline}

From \reff{resource-ent} and \reff{resource-ent2} it follows that if Alice and Bob have no prior shared entanglement, then they could still achieve their
task of state redistribution: if $\frac{1}{2} I(R;C|B)_\psi \geq  H(C|B)_\psi$ then they also generate entanglement, whereas if $\frac{1}{2} I(R;C|B)_\psi \le  H(C|B)_\psi$ then they need to invest quantum communication at a rate of $\approx H(C|B)_\psi$ qubits per copy of the source in order to generate the entanglement that the protocol corresponding to \reff{resource-ent2} requires.

\subsection{Quantum Reverse Shannon Theorem with Quantum Side Information}
\label{qrst}

As mentioned above, before moving on to quantum rate distortion theorems with QSI, we prove two
quantum reverse Shannon theorems with quantum side information. 

Quantum reverse Shannon theorems \cite{BDHSW09,BCR09} deal with the simulation of noisy quantum channels between two parties (Alice and Bob, say), with the aid of noiseless resources, such as
 prior shared entanglement and quantum communication. Here we consider the situation in which Bob has quantum side information as an auxiliary resource, which he can employ in this simulation task.

In particular, we consider Alice and Bob to share many (say, $n$) copies of a bipartite state $\rho_{AB}$, the systems $A$ being with Alice and $B$ being with Bob, the latter acting as the quantum side information. In addition, Alice and Bob can share entanglement with each other. The aim is for Alice and Bob to simulate many instances of a noisy channel $\cN_{A \rightarrow B'}$, such that Bob receives the output systems. The quantities of interest are the minimum rates of quantum communication and entanglement consumption required for this purpose.

We consider two different scenarios. In the first, which is referred to as a {\em{feedback simulation}}, the environments of the simulated channels are required to be in Alice's possession. In contrast, in the second scenario, which is referred to as a {\em{non-feedback simulation}}, no such requirement is imposed. The minimum rates of quantum communication and entanglement consumption that are required 
in these two scenarios are given by Theorems~\ref{thm:QRST-QSI-feedback} and 
\ref{thm:QRST-QSI-non-feedback}, respectively.


These
theorems are generalizations of Theorems~3a, 3b, and 3c of Ref.~\cite{BDHSW09}
and are interesting in their own right. Furthermore, both theorems are useful in establishing quantum rate distortion theorems that
exploit quantum side information.

\begin{theorem}
[Feedback QRST with QSI]%
\label{thm:QRST-QSI-feedback}
If Alice and Bob share many copies of a state $\rho_{AB}$,
then they can achieve a feedback simulation of many instances of a noisy channel $\cN_{A\rightarrow B'}$
if and only if the rates of quantum communication and entanglement consumption 
are in the following rate region:%
\begin{align}
Q+E  &  \geq H\left(  B^{\prime}|B\right)  _{\omega}%
,\label{eq:feedback-QRST-QSI-1}\\
Q  &  \geq\tfrac{1}{2}I\left(  R;B^{\prime}|B\right)  _{\omega},
\label{eq:feedback-QRST-QSI-2}%
\end{align}
where%
\[
\omega_{RB^{\prime}B}\equiv\mathcal{N}_{A\rightarrow B^{\prime}}\left(
\phi_{RAB}^{\rho}\right)  ,
\]
$\phi_{RAB}^{\rho}$ is a purification of $\rho_{AB}$. Equivalently, we can
write the rate of quantum communication as a function of the entanglement
consumption rate $E$ as follows:%
\[
Q_{f,qsi}\left(  E\right)  =\max\left\{  \tfrac{1}{2}I\left(  R;B^{\prime
}|B\right)  _{\omega},\ H\left(  B^{\prime}|B\right)  _{\omega}-E\right\}  .
\]
(Recall that $E$ can be either positive or negative depending on whether the protocol
consumes or generates entanglement, respectively.)
The subscript $f$ denotes that the rate corresponds to a feedback simulation.
In particular, if there is no entanglement available ($E=0$), then the optimal
rate of quantum communication is equal to%
\[
Q_{f, qsi}(0) =\max\left\{  \tfrac{1}{2}I\left(  R;B^{\prime}|B\right)  _{\omega
},\ H\left(  B^{\prime}|B\right)  _{\omega}\right\}  .
\]

\end{theorem}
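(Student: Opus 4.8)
The plan is to reduce both directions of the theorem to the quantum state redistribution protocol of \reff{resource}--\reff{ineq}, which already supplies a tight characterization in exactly the form we need. The key observation is that, writing $U^{\cN}_{A\rightarrow B'E}$ for an isometric extension of the channel to be simulated, the ideal output is the pure state $\phi'_{RB'EB} \equiv U^{\cN}_{A\rightarrow B'E}(\phi^\rho_{RAB})$, whose marginal on $RB'B$ is precisely $\omega_{RB'B}$. A feedback simulation is then nothing but the redistribution of the system $B'$ from Alice to Bob, with Alice retaining the environment $E$ and Bob retaining his side information $B$ --- i.e.\ state redistribution under the identification $C \leftrightarrow B'$, $A \leftrightarrow E$, $B \leftrightarrow B$, $R \leftrightarrow R$.

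For achievability, I would have Alice apply $U^{\cN}_{A\rightarrow B'E}$ locally to each of her $n$ copies of $\rho_{AB}$, producing the state $\phi'^{\otimes n}$ in which Alice holds $B'^n E^n$ and Bob holds $B^n$. She then runs quantum state redistribution to transfer $B'^n$ to Bob while keeping $E^n$. By \reff{resource}--\reff{ineq} with $C\mapsto B'$, this succeeds with asymptotically vanishing error precisely when $Q \geq \tfrac12 I(R;B'|B)_\omega$ and $Q+E \geq H(B'|B)_\omega$. Since Alice keeps $E^n$, the environments of the simulated channels remain in her possession, so this is a feedback simulation, and the final state on $R^nB'^nB^n$ is $\approx \omega^{\otimes n}$, as required.

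For the converse, I would turn any feedback simulation into a state redistribution protocol of the same rates. Given such a simulation $\mathcal{P}$, Alice --- starting instead from $\phi'^{\otimes n}$ with $B'^nE^n$ on her side --- first applies the local inverse isometry $(U^{\cN})^\dagger$ to recover $A^n$ and the state $(\phi^\rho_{RAB})^{\otimes n}$, then runs $\mathcal{P}$; the end state has $B'^n$ at Bob and $E^n$ at Alice, i.e.\ $B'$ has been redistributed. Both added steps are local, so the pair $(Q,E)$ is unchanged, and the converse half of the state-redistribution characterization forces $Q \geq \tfrac12 I(R;B'|B)_\omega$ and $Q+E \geq H(B'|B)_\omega$. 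Rewriting this region by solving for $Q$ gives $Q_{f,qsi}(E)=\max\{\tfrac12 I(R;B'|B)_\omega,\,H(B'|B)_\omega - E\}$, and setting $E=0$ yields the stated expression for $Q_{f,qsi}(0)$.

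The main obstacle I anticipate is making the converse reduction fully rigorous in the approximate, asymptotic regime: I must verify that an $\eps$-accurate feedback simulation yields an $\eps$-accurate redistribution of $\phi'$, so that the converse of \reff{resource}--\reff{ineq} genuinely applies, and, crucially, that it is the feedback requirement --- the environment ending up with Alice --- that licenses the reduction to \emph{full} state redistribution, which reproduces the entire global purification $\phi'$. Without that requirement the global state need not be recovered and the reduction breaks, which is precisely why the non-feedback case (Theorem~\ref{thm:QRST-QSI-non-feedback}) has a different rate region. As an alternative to the equivalence argument, a direct entropic converse is available, mirroring the converse of Theorem~\ref{thm:trade-off-unassisted-assisted-qrd}: bound $nQ$ using the dimension inequality $I(R^n;W|B^nT_B)\leq 2\log\dim\cH_W$, combine it with superadditivity of the quantum mutual information and duality of conditional entropy (Lemma~\ref{dual_cond_ent}), and control the disturbance terms with the Alicki-Fannes inequality (Lemma~\ref{Thm:AF}).
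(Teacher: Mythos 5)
Your proposal is correct, and its achievability half coincides with the paper's: apply $U^{\cN}_{A\to B'E}$ locally, then run Devetak--Yard state redistribution to move $B'^n$ to Bob while Alice retains $E^n$, giving exactly the region \reff{eq:feedback-QRST-QSI-1}--\reff{eq:feedback-QRST-QSI-2}. Your converse, however, takes a genuinely different route. The paper never invokes the optimality of state redistribution; it proves the (harder) non-feedback converse directly --- isometric extensions of the encoder and decoder, two applications of Uhlmann's theorem exploiting the near-non-disturbance of $R^nB^n$, then regularized entropic bounds via the Alicki--Fannes inequality and data processing --- and then obtains the feedback converse as a corollary, by observing that the feedback requirement forces the decoder environment $E_2$ to be in product with $R^nE_1B'^nB^n$, which single-letterizes those bounds. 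You instead reduce the feedback simulation problem to state redistribution itself: prepending Alice's free local inversion $(U^{\cN})^\dagger$ converts any $\eps$-accurate feedback simulation into an $\eps$-accurate redistribution of $B'$ for the pure state $U^{\cN}_{A\to B'E}(\phi^\rho_{RAB})$, so the ``only if'' half of the characterization \reff{resource}--\reff{ineq} (which the paper quotes as an if-and-only-if, so it is citable) yields the single-letter bounds at once. Your route is shorter and conceptually sharper --- it isolates exactly why the feedback condition produces a single-letter region, and correctly identifies why the same reduction breaks for Theorem~\ref{thm:QRST-QSI-non-feedback}, where the global purification need not be recovered --- at the cost of using the state-redistribution converse as a black box and of not generalizing to the non-feedback theorem, for which the paper's Uhlmann-based machinery (developed once, and serving both theorems) is still required. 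The technicalities you flag are real but routine: besides the $\eps$-bookkeeping, note that a feedback simulation only guarantees that Alice's retained system is isometrically related to $E^n$ (possibly padded with leftover entanglement), so Alice should append one further free local isometry before the Devetak--Yard converse is invoked.
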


\begin{theorem}
[Non-Feedback QRST with QSI]%
\label{thm:QRST-QSI-non-feedback} If Alice and Bob share  many copies of a state $\rho_{AB}$,
then the minimum rates of quantum communication and entanglement consumption 
that they need for a non-feedback simulation of many instances of a noisy channel $\cN_{A\rightarrow B'}$ are given by the union of the following rate regions:%
\begin{align}
Q+E &  \geq\frac{1}{k}H\left(  B^{\prime \, k}E_{B}|B^{k}\right)  _{\omega
},\label{eq:non-feedback-QRST-QSI-1}\\
Q &  \geq\frac{1}{2k}I\left(  R^{k};B^{\prime \, k}E_{B}|B^{k}\right)  _{\omega
},\label{eq:non-feedback-QRST-QSI-2}%
\end{align}
where $k$ is an arbitrary positive integer and the union is with respect to
all states $\omega$ of the following form:%
\begin{equation}
\omega_{R^{k} E_A E_B B^{\prime \, k} B^k}\equiv V_{E^{k}\rightarrow E_{A}E_{B}}\left(  \left(
U^{\mathcal{N}}_{A\rightarrow B^{\prime}E}\left(  \phi_{RAB}^{\rho}\right)
\right)  ^{\otimes k}\right)  ,\label{eq:code-state-QRST-QSI}%
\end{equation}
$\phi_{RAB}^{\rho}$ is a purification of $\rho_{AB}$, $U^{\mathcal{N}%
}_{A\rightarrow B^{\prime}E}$ is some isometric extension of the channel
$\mathcal{N}_{A\rightarrow B^{\prime}}$, and $V_{E^{k}\rightarrow E_{A}E_{B}}$
is an arbitrary isometry that splits the $k$ environment systems $E^{k}$
into two parts $E_{A}$ and $E_{B}$. Equivalently, we can write the rate of
quantum communication as a function of the entanglement consumption rate $E$
as follows:%
\begin{multline}
Q_{qsi}\left(  E\right)  =\liminf_{k\rightarrow\infty,\exists V}
\max\bigg\{  \frac{1}{2k}I\left(  R^{k};B^{\prime \,
k}E_{B}|B^{k}\right)  _{\omega},
\\ \ \frac{1}{k}H\left(  B^{\prime \, k}E_{B}%
|B^{k}\right)  _{\omega}-E\bigg\}  .
\end{multline}
In particular, if there is no entanglement available ($E=0$), then the optimal
rate of quantum communication is equal to%
\begin{multline}
Q_{qsi}(0)=\liminf_{k\rightarrow\infty,\exists V}
\max\bigg\{  \frac{1}{2k}I\left(  R^{k};B^{\prime \, k}E_{B}|B^{k}\right)
_{\omega},\\ \ \frac{1}{k}H\left(  B^{\prime \, k}E_{B}|B^{k}\right)  _{\omega
}\bigg\}  .
\end{multline}

\end{theorem}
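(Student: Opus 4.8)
The plan is to obtain Theorem~\ref{thm:QRST-QSI-non-feedback} from the feedback version, Theorem~\ref{thm:QRST-QSI-feedback}, in exactly the way Theorems~3b and~3c of \cite{BDHSW09} are obtained from their feedback counterpart, the one new ingredient being that every entropic quantity is now conditioned on Bob's side information $B^k$. The union over block length $k$ and over splitting isometries $V$ is forced because the conditional quantum mutual information $I(R^k;B'^k E_B|B^k)_\omega$ is not known to be additive, so no single-letter collapse is available and the regularized/union form is the natural statement.

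For achievability I would fix a block length $k$ and a splitting isometry $V_{E^k\to E_A E_B}$, and then regard the composite isometry $V_{E^k\to E_A E_B}\circ (U^{\mathcal{N}}_{A\to B'E})^{\otimes k}$ as the isometric extension of an \emph{extended} channel $\widetilde{\mathcal{N}}_{A^k\to B'^k E_B}$ whose environment is $E_A$. Feedback-simulating $\widetilde{\mathcal{N}}$ in the presence of Bob's QSI $B^k$ via Theorem~\ref{thm:QRST-QSI-feedback} (which itself rests on quantum state redistribution \cite{DY08}) costs rates $\tfrac1k H(B'^k E_B|B^k)_\omega$ for $Q+E$ and $\tfrac1{2k} I(R^k;B'^k E_B|B^k)_\omega$ for $Q$, with $\omega$ the state in \eqref{eq:code-state-QRST-QSI}. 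Since a feedback simulation of $\widetilde{\mathcal{N}}$ delivers $B'^k E_B$ to Bob with the correct joint correlations to $R^k B^k$, and since a \emph{non}-feedback simulation of $\mathcal{N}^{\otimes k}$ imposes no constraint on the location of the environment, Bob may simply retain (or discard) the extra $E_B$ systems; the residual state on $R^k B'^k B^k$ is then exactly that of $k$ uses of $\mathcal{N}$. Taking the union over all $k$ and all $V$ yields the stated achievable region.

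For the converse I would begin from the most general non-feedback protocol on $n$ copies and dilate the encoder and decoder to isometries by Stinespring, just as in the proof of Theorem~\ref{thm:trade-off-unassisted-assisted-qrd}. Writing $W$ for the communicated register and $T_A,T_B$ for the shared entanglement, I would take $E_A$ to be Alice's residual/environment systems and $E_B$ to be Bob's residual environment after he outputs $B'^n$. Because a good simulation forces the reduced state on $R^n B'^n B^n$ to be $\varepsilon$-close to $\omega_{RB'B}^{\otimes n}$, Uhlmann's theorem \cite{U73} supplies an isometry $V_{E^n\to E_A E_B}$ carrying the canonical environment of $\omega^{\otimes n}$ into $E_A E_B$, so that the true global state is close to one of the prescribed form \eqref{eq:code-state-QRST-QSI} with $k=n$. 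The two bounds then follow by an entropy-counting argument that parallels the converse of Theorem~\ref{thm:trade-off-unassisted-assisted-qrd}, with all entropies conditioned on $B^n$: the chain rule together with $\log\dim\mathcal{H}_{T_B}\le nE$ gives $n(Q+E)\ge H(B'^n E_B|B^n)$, while doubling $H(W|B^n)$ and applying the data-processing inequality (Lemma~\ref{dataproc}) and purity of the global state gives $2nQ\ge I(R^n;B'^n E_B|B^n)$. The Alicki--Fannes inequality (Lemma~\ref{Thm:AF}) absorbs the $\varepsilon$-dependence into an $O(n\varepsilon')$ correction that vanishes in the limit.

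I expect the main obstacle to be the converse, and within it the genuinely quantum step: because purifications of $\omega^{\otimes n}$ are unique only up to an isometry on the purifying register, one cannot simply assert that the protocol produces a state of the form \eqref{eq:code-state-QRST-QSI}, but must invoke Uhlmann's theorem to extract $V$ and then use continuity to show that the conditional entropies evaluated on the true state agree with those on \eqref{eq:code-state-QRST-QSI} up to vanishing error. The delicate bookkeeping is keeping $B^n$ in the conditioning at every line of the entropy manipulations, since the decoder may act jointly on $W$, $T_B$, and the QSI $B^n$; this is precisely the feature absent from the no-side-information arguments and from the classical Wyner--Ziv analysis \cite{WZ76}.
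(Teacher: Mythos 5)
Your proposal is correct and follows essentially the same approach as the paper: achievability via quantum state redistribution (your reduction to the feedback theorem applied to the extended channel $\widetilde{\mathcal{N}}_{A^k\to B^{\prime k}E_B}$ with environment $E_A$ is exactly the paper's protocol, which sets $C=B^{\prime}E_B$ in the Devetak--Yard resource inequality), and a converse that dilates encoder and decoder, uses the negligible-disturbance condition together with Uhlmann's theorem to bring the protocol's global state close to a state of the form (\ref{eq:code-state-QRST-QSI}) with $E_A=E_1$, $E_B=E_2$, and then runs the same entropy accounting (data processing, dimension bounds, Alicki--Fannes). The only substantive difference is that the paper invokes Uhlmann twice---first replacing the decoder by an isometry $V_{E_1WT_B\rightarrow E_1E_2B^{\prime n}}$ that does not act on $B^n$, and then mapping $E^n$ into $E_1E_2$---whereas you invoke it once, directly on the closeness of the reduced states on $R^nB^{\prime n}B^n$; both routes are sound, because the identities $\sigma_{R^nB^n}=\tau_{R^nB^n}=\left(\phi^{\rho}_{RB}\right)^{\otimes n}$ needed in the entropy chains hold exactly in either case.
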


\medskip

\begin{proof}
[Proof of Theorems~\ref{thm:QRST-QSI-feedback} and
\ref{thm:QRST-QSI-non-feedback}]We prove the two theorems using similar
arguments. The achievability parts of both of the above theorems follow
directly by applying the protocol of Devetak and Yard for quantum
state redistribution \cite{DY08,YD09,PhysRevA.78.030302}. We prove
the achievability part of the non-feedback theorem first and then 
argue how the feedback version {{is a special case of it}}. 

To start with, Alice, Bob, and the reference share $n$ copies of the state $\phi_{RAB}^{\rho}$, which is a
purification of the state $\rho_{AB}$. Alice locally applies an isometric
extension $U^{\mathcal{N}}_{A\rightarrow B^{\prime}E}$ of the channel
$\mathcal{N}_{A\rightarrow B^{\prime}}$ to each system $A$ in her possession,
and then applies an environment-splitting
isometry $V_{E\rightarrow E_{A}E_{B}}$ to each system $E$ that results.
At this
point, the three parties share $n$ copies of the following pure state:%
\[
\omega_{RE_{A}E_{B}B^{\prime}B}\equiv V_{E\rightarrow E_{A}E_{B}}\left(
U^{\mathcal{N}}_{A\rightarrow B^{\prime}E}\left(  \phi_{RAB}^{\rho}\right)
\right)  ,
\]
where the reference has $R$, Alice has $E_{A}E_{B}B^{\prime}$, and Bob has
$B$. Alice would like to transmit all of her $E_{B}B^{\prime}$ systems to Bob,
and she can do this by using the protocol of quantum state redistribution. 

By setting $C = B'E_B$, $A=E_A$, and $\psi_{RACB} = \omega_{RE_{A}E_{B}B^{\prime}B}$ in \reff{resource}-\reff{resource-ent2}, we infer that the following rate region is achievable with
quantum state redistribution:%
%
%
%
\begin{align*}
Q  &  \geq\tfrac{1}{2}I\left(  R;B^{\prime}E_{B}|B\right)  _{\omega},\\
Q+E  &  \geq H\left(  B^{\prime}E_{B}|B\right)  _{\omega}.
\end{align*}
%
%
Now, a protocol that achieves the regularized rate region in
\reff{eq:non-feedback-QRST-QSI-1}-\reff{eq:non-feedback-QRST-QSI-2}, is 
very similar, but Alice and Bob instead act on blocks of $k$
states at a time. That is, they share $n$ copies of the state $\left(
\phi_{RAB}^{\rho}\right)  ^{\otimes k}$ (if they are allowed access to an
arbitrary number of shares of this state, then they can block them in this
way). Alice applies $n$ instances of the isometry $\left(  U^{\mathcal{N}%
}_{A\rightarrow B^{\prime}E}\right)  ^{\otimes k}$ to her systems $A^n$
and then applies an environment
splitting isometry $V_{E^{k}\rightarrow E_{A}E_{B}}$ to each $E^{k}$ resulting
from the previous step. By the same arguments as given above, the rate region
in \reff{eq:non-feedback-QRST-QSI-1}-\reff{eq:non-feedback-QRST-QSI-2} is
achievable, where the division by $k$ is needed to obtain the rates.


The achievability of the feedback protocol in
Theorem~\ref{thm:QRST-QSI-feedback} follows as a special case of the above. In
particular, there is no splitting of the environment into two parts, so that
Alice merely redistributes the $B^{\prime}$ system to Bob. Thus, the rate region in
\reff{eq:feedback-QRST-QSI-1}-\reff{eq:feedback-QRST-QSI-2} is achievable.
Furthermore, there is no need to double-block the protocol as above because our
converse theorem for this case demonstrates that it is not necessary to do so.

\begin{figure}[ptb]
\begin{center}
\includegraphics[
width=3.5in
]{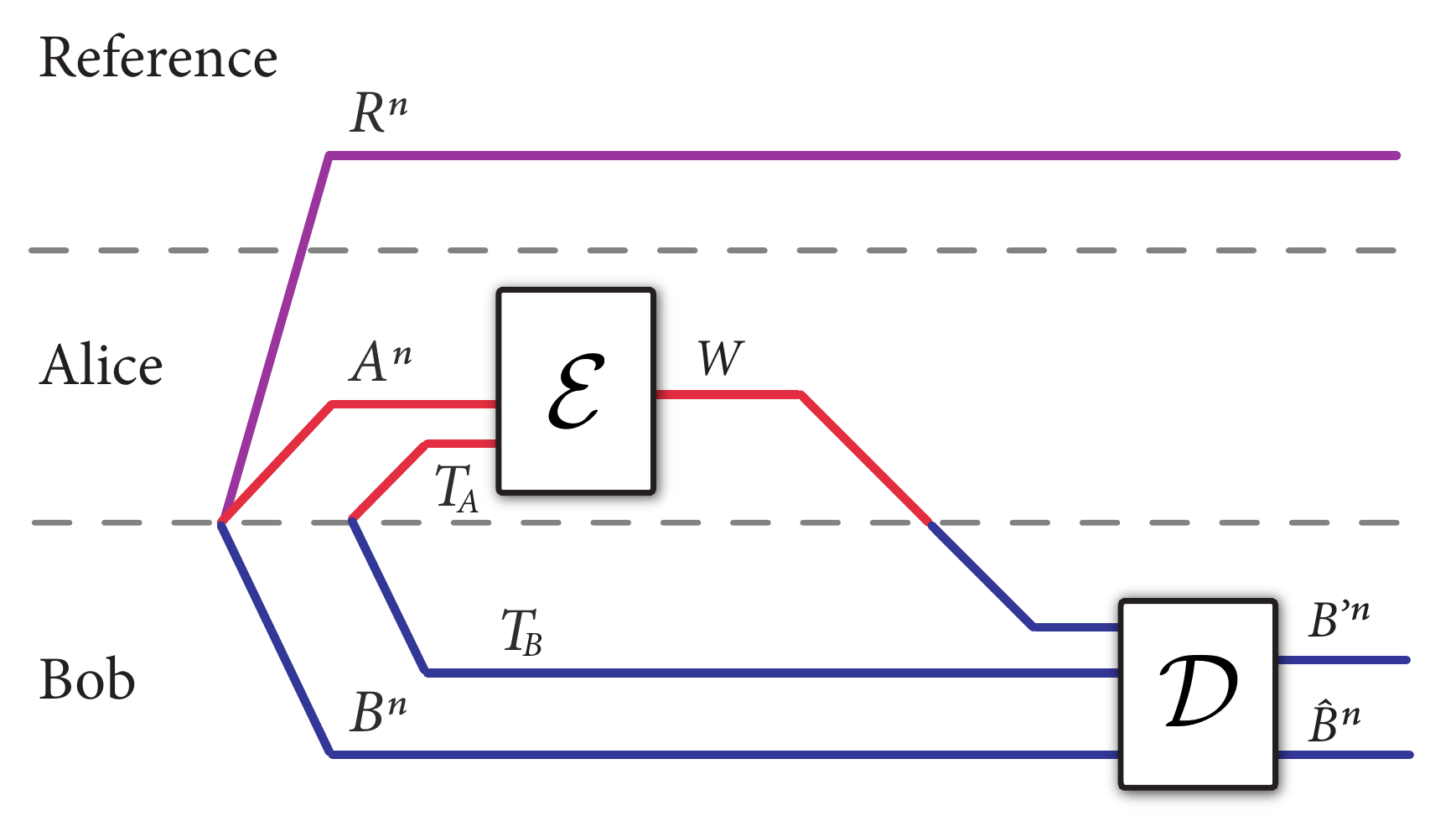}
\end{center}
\caption{A general protocol for simulating a quantum channel when quantum side information and shared
entanglement are available.}
\label{fig:QRST-QSI}%
\end{figure}

We now prove the converse part of Theorem~\ref{thm:QRST-QSI-non-feedback}, that is, we
establish that the bounds \reff{eq:non-feedback-QRST-QSI-1}-\reff{eq:non-feedback-QRST-QSI-2} 
hold for any protocol that results in a non-feedback simulation of 
the noisy channel $\cN_{A \to B'}$. The
most general protocol begins with the reference, Alice, and Bob sharing $n$
copies of $\phi_{RAB}^{\rho}$, 
where $n$ is some arbitrarily large positive
integer. Alice and Bob also share some entangled state $\Phi_{T_{A}T_{B}}$\ on
systems $T_{A}$ and $T_{B}$, which they can use to help them in their task. In
particular, the Schmidt rank of this entangled state is equal to $2^{nE}$, so
that $E$ quantifies the rate of entanglement consumption. Alice performs some
encoding map $\mathcal{E}$\ on systems $A^{n}$ and $T_{A}$ which produces a
system $W$ as output. She sends system $W$ to Bob, who then performs a
decoding map $\mathcal{D}$\ on systems $W$, $T_{B}$, and $B^{n}$. This
decoding map has two outputs $B^{\prime n}$ and $\hat{B}^{n}$, where
$B^{\prime n}$ approximates the output of the channel simulation and $\hat
{B}^{n}$ represents an approximation of Bob's quantum side information. If the
protocol is any good for accomplishing the task of a non-feedback channel
simulation, then the output of this simulation protocol and the output of the
ideal protocol should be asymptotically indistinguishable as $n$ becomes
large. That is, for any arbitrary $\eps >0$, for $n$ large enough, it should hold that
\begin{multline}
\bigg\Vert \mathcal{D}_{WT_{B}B^{n}\rightarrow B^{\prime n}\hat{B}^{n}}\left(
\mathcal{E}_{A^{n}T_{A}\rightarrow W}\left(  \left(  \phi_{RAB}^{\rho}\right)
^{\otimes n}\otimes\Phi_{T_{A}T_{B}}\right)  \right)  \\ -\left(  \mathcal{N}%
_{A\rightarrow B^{\prime}}\left(  \phi_{RAB}^{\rho}\right)  \right)  ^{\otimes
n}\bigg\Vert _{1}\leq\eps.
\label{eq:nonfeedback-QRST-QSI-good-simulation-cond}%
\end{multline}
Figure~\ref{fig:QRST-QSI} illustrates the protocol discussed above.

A useful observation for proving the converse is that an arbitrary
encoding map $\mathcal{E}$ and a decoding map $\mathcal{D}$ can be simulated by a
protocol involving only isometric operations. In particular, we can replace
the encoding $\mathcal{E}$ with an isometric extension $U^{\mathcal{E}}$\ that
has as output the original output~$W$ and an environment system $E_{1}$. Let
$\sigma\equiv \sigma_{RE_{1}WT_{B}B^{n}}$ denote the pure state 
shared between the reference, Alice, and Bob after the
action of $U^{\mathcal{E}}$. We can also replace the decoding map 
$\mathcal{D}$ with an isometric extension of it that has as output the
original outputs $B^{\prime n}\hat{B}^{n}$ and an additional environment
system~$E_{2}$. Let $\omega_{RE_{1}E_{2}B^{\prime n}\hat{B}^{n}}$ denote the
pure state resulting from applying an isometric extension of the decoder to
the state $\sigma$.

Due to monotonicity of the trace distance under partial trace,
the condition in
(\ref{eq:nonfeedback-QRST-QSI-good-simulation-cond}) implies that the
inequality in (\ref{trace}) holds,%
\begin{figure*}
\be\label{trace}
\left\Vert \text{Tr}_{E_{1}E_{2}B^{\prime n}}\left\{  U^{\mathcal{D}}%
_{WT_{B}B^{n}\rightarrow E_{2}B^{\prime n}\hat{B}^{n}}\left(  U^{\mathcal{E}%
}_{A^{n}T_{A}\rightarrow WE_{1}}\left(  \left(  \phi_{RAB}^{\rho}\right)
^{\otimes n}\otimes\Phi^{T_{A}T_{B}}\right)  \right)  \right\}  -\left(
\phi_{RB}^{\rho}\right)  ^{\otimes n}\right\Vert _{1}\leq\eps, 
\ee
\end{figure*}
with the partial trace of the states in \reff{eq:nonfeedback-QRST-QSI-good-simulation-cond} being taken over the system $B^{\prime n}$.

Since $\omega_{RE_{1}E_{2}B^{\prime n}\hat{B}^{n}}$ is a purification of the
first term in the trace distance in \reff{trace}, and $\sigma_{RE_{1}WT_{B}B^{n}}$ is a
purification of the second term in the trace distance, Uhlmann's theorem \cite{U73,J94} implies the existence of an isometry $V\equiv V_{E_{1}WT_{B}\rightarrow E_{1}%
E_{2}B^{\prime_{n}}}$ such that the trace distance between $V(\sigma_{RE_{1}WT_{B}B^{n}})$ and
$\omega_{RE_{1}E_{2}B^{\prime n}\hat{B}^{n}}$ is no larger than $2\sqrt
{\epsilon}$. Let $\omega^{\prime}$ denote the state resulting from applying
$V$ to $\sigma_{RE_{1}%
WT_{B}B^{n}}$. Thus, the original decoder can be simulated by the isometry
$V$ which does not act on
Bob's quantum side information (we should expect for this to be possible,
given that the condition in
(\ref{eq:nonfeedback-QRST-QSI-good-simulation-cond}) implies that Bob's
QSI\ should not be disturbed too much). We also observe that $\omega^{\prime}$
is $\eps$-close in trace distance to a state of the form in
(\ref{eq:code-state-QRST-QSI}) because $U^{\mathcal{E}}$ and $V$ do not act
on the systems $R^n B^n$. By applying Uhlmann's theorem once again and
the triangle inequality, we conclude that there exists some isometry
$U\equiv U_{E^{n}\rightarrow E_{1}E_{2}}$ such that when $U^{\dag}$\ is applied to
$\omega^{\prime}$, the resulting state is close in trace distance to $\left(
U^{\mathcal{N}}_{A\rightarrow B^{\prime}E}\left(  \phi_{RAB}^{\rho}\right)
\right)  ^{\otimes n}$. Figure~\ref{fig:QRST-QSI-uhlmann} summarizes the
observations made in this paragraph and the previous one.

\begin{figure*}[ptb]
\begin{center}
\includegraphics[
width=6.6in
]{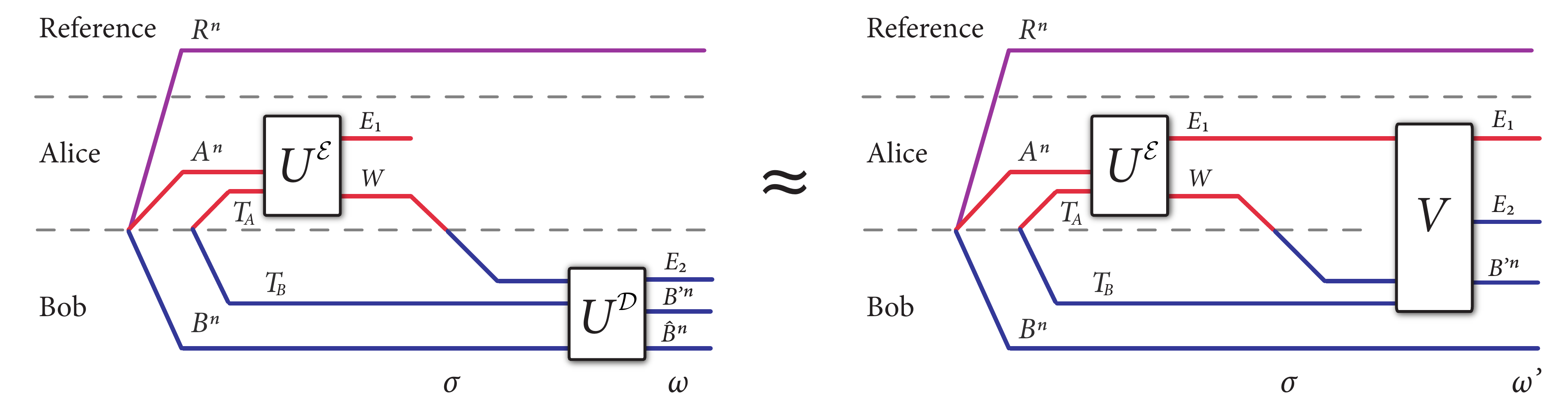}
\end{center}
\caption{The protocol on the left is a simulation of the general protocol in Figure~\ref{fig:QRST-QSI},
in which Alice acts with an isometric extension~$U^{\mathcal{E}}$ of her encoder~$\mathcal{E}$ and Bob acts with an isometric extension~$U^\mathcal{D}$ of his decoder~$\mathcal{D}$. Due to the fact that the simulation causes only
a negligible disturbance to the state of $R^n B^n$, by Uhlmann's theorem, there exists an isometry $V_{E_{1}WT_{B}\rightarrow E_{1}E_{2}B^{\prime_{n}}}$ such that the final state on the left is approximately equal to the final state on the right.}
\label{fig:QRST-QSI-uhlmann}%
\end{figure*}

%
%
Consider the following state:%
\begin{equation}
\tau\equiv U_{E^{n}\rightarrow E_{1}E_{2}}\left(  U^{\mathcal{N}%
}_{A\rightarrow B^{\prime}E}\left(  \phi_{RAB}^{\rho}\right)  \right)
^{\otimes n}, \label{eq:tau-QRST-QSI-1}%
\end{equation}
where $ \phi_{RAB}^{\rho}$ is a purification of $\rho_{AB}$, $U^{\cN}_{A\to B'E}$ is a Stinespring isometry of the noisy channel $\cN$ which is to be simulated, and $U_{E^{n}\rightarrow E_{1}E_{2}}$ is an isometry. This state is of the form as given in (\ref{eq:code-state-QRST-QSI}).
We proceed to find a lower bound on the
optimal rate $Q$ of quantum communication needed for the channel simulation as
follows:%
\begin{align}
nQ \equiv  \log \left({\rm{dim}} \cH_W\right) &  \geq H\left(  W\right)  _{\sigma}\nonumber\\
&  =H\left(  WT_{B}B^{n}\right)  _{\sigma}-H\left(  B^{n}T_{B}|W\right)
_{\sigma}\nonumber\\
&  =H(E_{2}B^{\prime n}\hat{B}^{n})_{\omega}-H\left(  B^{n}T_{B}|W\right)
_{\sigma}\nonumber\\
&  \geq H(E_{2}B^{\prime n}B^{n})_{\omega^{\prime}}-n\eps^{\prime
}-H\left(  B^{n}\right)  _{\sigma}\nonumber \\
& \,\,\,\,-H\left(  T_{B}\right)  _{\sigma}\nonumber\\
&  =H(E_{2}B^{\prime n}B^{n})_{\omega^{\prime}}-n\eps^{\prime}-H\left(
B^{n}\right)  _{\omega^{\prime}} \nonumber \\
& \,\,\,\, -H\left(  T_{B}\right)  _{\sigma}\nonumber\\
&  \geq H(E_{2}B^{\prime n}|B^{n})_{\omega^{\prime}}-nE-n\eps^{\prime
}\nonumber\\
&  \geq H(E_{2}B^{\prime n}|B^{n})_{\tau}-nE-2n\eps^{\prime}%
.\label{eq:1st-bound-QRST-QSI-non-f}%
\end{align}%
The first equality is an entropy
identity. The second equality follows because entropy is invariant under the
action of an isometry (in this case, the isometry is $U^{\mathcal{D}}$). The
second inequality follows from Uhlmann's theorem (as mentioned above), the
Alicki-Fannes' inequality (continuity of entropy)\ with an appropriate choice
of $\eps^{\prime}$ (a similar convention to what we had in our previous
converse theorems), and from subadditivity of entropy:%
\[
H\left(  B^{n}T_{B}|W\right)  _{\sigma}\leq H\left(  B^{n}T_{B}\right)
_{\sigma}\leq H\left(  B^{n}\right)  _{\sigma}+H\left(  T_{B}\right)
_{\sigma}.
\]
The third equality follows because the map $V^{E_{1}WT_{B}\rightarrow
E_{1}E_{2}B^{\prime_{n}}}$ does not act on the $B^{n}$ system. The third
inequality follows because $H\left(  T_{B}\right)  _{\sigma}\leq nE$. The
final inequality follows from another application of the Alicki-Fannes'
inequality to the state $\tau$ defined in (\ref{eq:tau-QRST-QSI-1}).

We prove the second bound in (\ref{eq:non-feedback-QRST-QSI-2}):%
\begin{align}
I\left(  R^{n};B^{\prime n}B^{n}E_{2}\right)  _{\tau}  & \leq I\left(
R^{n};B^{\prime n}B^{n}E_{2}\right)  _{\omega}+n2\varepsilon^{\prime
}\nonumber\\
& = I\left(  R^{n};B^{n}WT_{B}\right)  _{\sigma}+n2\varepsilon^{\prime
}\nonumber\\
& \leq n2Q+I\left(  R^{n};B^{n}T_{B}\right)  _{\sigma}+n2\varepsilon^{\prime
}\nonumber\\
& =n2Q+I\left(  R^{n};B^{n}\right)  _{\sigma}+n2\varepsilon^{\prime
}\nonumber\\
& =n2Q+I\left(  R^{n};B^{n}\right)  _{\tau}+n2\varepsilon^{\prime
}\label{eq:chain-for-second-bound}%
\end{align}
The first inequality results from the fact that $\tau$ is $\varepsilon$-close
to $\omega^{\prime}$, $\omega^{\prime}$ is $\varepsilon$-close to $\omega$,
and from applying the Alicki-Fannes inequality. The second inequality follows
from quantum data processing. The third inequality follows from the following
property of quantum conditional mutual information:%
\begin{align*}
I\left(  A;BC\right)    & =I\left(  A;B\right)  +I\left(  A;C|B\right)  \\
& =I\left(  A;B\right)  +H(A|B)-H\left(  A|BC\right)  \\
& \leq I\left(  A;B\right)  +H\left(  A\right)  +H\left(  A|D\right)  \\
& \leq I\left(  A;B\right)  +2\log\left\vert A\right\vert ,
\end{align*}
where the entropies are evaluated on a tripartite state $\rho_{ABC}$ purified
by some state on a purifying system $D$. The first equality in the chain in
(\ref{eq:chain-for-second-bound}) follows because $T_{B}$ is product with
systems $R^{n}$ and $B^{n}$ for the state $\sigma$. The above inequalities
then imply that%
\[
n2Q\geq I\left(  R^{n};B^{\prime n}E_{2}|B^{n}\right)  _{\tau}-n2\varepsilon
^{\prime},
\]
by finally applying the chain rule for quantum mutual information.

To obtain a converse proof for Theorem~\ref{thm:QRST-QSI-feedback}, we can
exploit the above converse with just one further observation. Since a feedback
simulation requires Alice to possess the full environment of the simulation,
the final state on $R^{n}E_{1}B^{\prime n}B^{n}$ must be a
pure state and $E_{1}$ must be unitarily related to the $E^{n}$ system of
$\left(  U^{\mathcal{N}}_{A\rightarrow B^{\prime}E}\left(  \phi_{RAB}^{\rho
}\right)  \right)  ^{\otimes n}$. Thus, it must be the case that the system
$E_{2}$ is product with $R^{n}E_{1}B^{\prime n}B^{n}$. This final observation
leads to a single-letterization of the above bounds as follows:%
\begin{align*}
nQ  &  \geq H(E_{2}B^{\prime n}|B^{n})_{\tau}-nE-2n\eps^{\prime}\\
&  =H(B^{\prime n}|B^{n})_{\tau}-nE-2n\eps^{\prime}\\
&  =n\left[  H(B^{\prime}|B)_{\mathcal{N}_{A\rightarrow B^{\prime}}\left(
\rho_{AB}\right)  }-E-2\eps^{\prime}\right]  ,
\end{align*}
and similarly,%
\begin{align*}
n2Q  &  \geq I\left(  E_{2}B^{\prime n};R^{n}|B^{n}\right)  _{\tau}%
-2n\eps^{\prime}\\
&  =I\left(  B^{\prime n};R^{n}|B^{n}\right)  _{\tau}-2n\eps^{\prime}\\
&  =n\left[  I\left(  B^{\prime};R|B\right)  _{\mathcal{N}_{A\rightarrow
B^{\prime}}\left(  \psi_{RAB}^{\rho}\right)  }-2\eps^{\prime}\right]  .
\end{align*}
The main step in the above equalities is to exploit the observation that
$E_{2}$ must be product with the other systems for a feedback simulation.
\end{proof}

\subsection{On a General Quantum Reverse Shannon Theorem with QSI}
With the above tensor-power quantum reverse Shannon theorem with QSI in hand,
one might be tempted to pursue a general form of this theorem that holds whenever the input
to the channel is a general, non-IID state entangled with a system available at the receiver's end
(non-IID input and quantum side information). From the above theorem and the techniques developed in
Refs.~\cite{BDHSW09,BCR09}, we suspect that it is possible
to show that the following rate of quantum communication
is necessary and sufficient for simulating an IID channel acting on a general input entangled with
a system at the receiving end, whenever unlimited entanglement in any form is available between the
sender and receiver:
\begin{equation}
\tfrac12 \max_{\phi_{RAB}}I\left(  R;B^{\prime
}|B\right)  _{\mathcal{N}_{A\rightarrow B^{\prime}}\left(  \phi\right)  } . \label{eq:conjectured-QRST-QSI-rate}
\end{equation}
Also, it is known from Refs.~\cite{BDHSW09,BCR09} that the following rate of quantum communication
is necessary and sufficient for simulating an IID channel acting on a general input (neglecting any quantum side information), whenever unlimited entanglement in any form is available between the
sender and receiver:
$$
\tfrac12 \max_{\psi_{RA}}I\left(  R;B^{\prime}\right)  _{\mathcal{N}_{A\rightarrow
B^{\prime}}\left(  \psi\right)  }.
$$
The following theorem clarifies that these two rates are in fact equal,
implying that pursuing a general quantum reverse Shannon theorem with QSI is a pointless task if the conjectured rate in \eqref{eq:conjectured-QRST-QSI-rate} is correct
(at least for a feedback simulation). The reason that such a relation should hold
is that the theorems from Refs.~\cite{BDHSW09,BCR09} are simulating an IID channel with respect to the diamond norm,\footnote{See, e.g., Refs.~\cite{BDHSW09,BCR09} for a definition of the diamond norm.} 
which is known to be robust under tensoring with other systems on which the channel does not act.
\begin{theorem}
The following identity holds%
\[
\max_{\psi_{RA}}I\left(  R;B^{\prime}\right)  _{\mathcal{N}_{A\rightarrow
B^{\prime}}\left(  \psi\right)  }=\max_{\phi_{RAB}}I\left(  R;B^{\prime
}|B\right)  _{\mathcal{N}_{A\rightarrow B^{\prime}}\left(  \phi\right)  },
\]
where each maximization is over pure states.
\end{theorem}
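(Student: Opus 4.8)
The plan is to prove the identity by establishing the two inequalities separately, with the $\geq$ direction being immediate and the $\leq$ direction relying only on the chain rule and non-negativity of quantum mutual information; no heavy machinery should be needed, in keeping with the operational intuition (diamond-norm robustness) already noted before the statement.

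First I would dispose of the easy direction, $\max_{\psi_{RA}} I(R;B^{\prime}) \leq \max_{\phi_{RAB}} I(R;B^{\prime}|B)$. Given any pure state $\psi_{RA}$, I simply adjoin a trivial (one-dimensional) system $B$, forming the pure state $\phi_{RAB} = \psi_{RA}\otimes\ket{0}\bra{0}_B$. Since conditioning on a trivial system has no effect, $I(R;B^{\prime}|B)_{\mathcal{N}(\phi)} = I(R;B^{\prime})_{\mathcal{N}(\psi)}$, so every value attained on the left is also attained on the right, giving this inequality.

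For the reverse inequality I would fix an arbitrary pure state $\phi_{RAB}$ and set $\omega_{RB^{\prime}B} \equiv \mathcal{N}_{A\rightarrow B^{\prime}}(\phi_{RAB})$. The key step is the chain rule $I(RB;B^{\prime})_\omega = I(B;B^{\prime})_\omega + I(R;B^{\prime}|B)_\omega$, which together with the non-negativity of $I(B;B^{\prime})_\omega$ yields $I(R;B^{\prime}|B)_\omega \leq I(RB;B^{\prime})_\omega$. I would then regard the composite system $RB$ as a single enlarged reference $\tilde{R}$: since $\phi_{RAB}$ is pure, it is a pure bipartite state $\tilde{\psi}_{\tilde{R}A}$ between $\tilde{R} = RB$ and the channel input $A$, and $\mathcal{N}$ acts only on $A$. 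Hence $I(RB;B^{\prime})_\omega = I(\tilde{R};B^{\prime})_{\mathcal{N}(\tilde{\psi})}$, which is of exactly the form appearing on the left-hand side.

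The one point requiring care — and the step I expect to be the main (if minor) obstacle — is that the enlarged reference $\tilde{R} = RB$ may have larger dimension than a canonical purifying reference for $A$, so it is not literally one of the $\psi_{RA}$ ranged over on the left. I would resolve this by recalling that, for a pure input, $I(\mathrm{ref};B^{\prime})$ depends only on the reduced channel input $\rho_A = \tr_R \psi_{RA}$ (and on $\mathcal{N}$): any two purifications of the same $\rho_A$ are related by an isometry on the reference, and the quantum mutual information is invariant under such isometries. Thus $I(\tilde{R};B^{\prime})_{\mathcal{N}(\tilde{\psi})}$ equals $I(R;B^{\prime})$ evaluated on any standard purification of $\rho_A = \tr_{RB}\phi_{RAB}$, and is therefore at most $\max_{\psi_{RA}} I(R;B^{\prime})_{\mathcal{N}(\psi)}$. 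Combining the two inequalities gives the claimed equality.
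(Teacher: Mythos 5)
Your proof is correct and follows essentially the same route as the paper's: the easy direction by tensoring the optimal $\psi_{RA}$ with a pure product state on $B$ (the paper uses an arbitrary pure $\varphi_B$ where you use a trivial system), and the hard direction via the chain rule $I(RB;B^{\prime}) = I(B;B^{\prime}) + I(R;B^{\prime}|B)$, non-negativity of $I(B;B^{\prime})$, and regarding $RB$ as a purifying reference for the channel input. Your extra step reconciling the dimension of the enlarged reference $RB$ through isometric equivalence of purifications is a point the paper passes over silently, but it is the same argument in substance.
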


\begin{proof}
We first prove the following inequality:%
\[
\max_{\psi_{RA}}I\left(  R;B^{\prime}\right)  _{\mathcal{N}_{A\rightarrow
B^{\prime}}\left(  \psi\right)  }\leq\max_{\phi_{RAB}}I\left(  R;B^{\prime
}|B\right)  _{\mathcal{N}_{A\rightarrow B^{\prime}}\left(  \phi\right)  }.
\]
Let $\psi_{RA}^{\ast}$ be the state that achieves the maximum of the LHS. Then
the state $\psi_{RA}^{\ast}\otimes\varphi_{B}$ (for any pure state
$\varphi_{B}$) leads to $\mathcal{N}_{A\rightarrow B^{\prime}}\left(
\psi_{RA}^{\ast}\right)  \otimes\varphi_{B}$ at the channel output, and it is
a special pure state included in the maximization on the RHS, so that%
\begin{align*}
I\left(  R;B^{\prime}\right)  _{\mathcal{N}_{A\rightarrow B^{\prime}}\left(
\psi^{\ast}\right)  }  & =I\left(  R;B^{\prime}|B\right)  _{\mathcal{N}%
_{A\rightarrow B^{\prime}}\left(  \psi^{\ast}\otimes\varphi\right)  }\\
& \leq\max_{\phi_{RAB}}I\left(  R;B^{\prime}|B\right)  _{\mathcal{N}%
_{A\rightarrow B^{\prime}}\left(  \phi\right)  }.
\end{align*}
We now prove the other inequality:%
\begin{equation}
\max_{\psi_{RA}}I\left(  R;B^{\prime}\right)  _{\mathcal{N}_{A\rightarrow
B^{\prime}}\left(  \psi\right)  }\geq\max_{\phi_{RAB}}I\left(  R;B^{\prime
}|B\right)  _{\mathcal{N}_{A\rightarrow B^{\prime}}\left(  \phi\right)
}.\label{eq:other-maximization}%
\end{equation}
Let $\phi_{RAB}^{\ast}$ be the pure state that achieves the maximum on the
RHS. Then we have that%
\begin{align*}
I\left(  R;B^{\prime}|B\right)  _{\mathcal{N}_{A\rightarrow B^{\prime}}\left(
\phi^{\ast}\right)  }  & =I\left(  RB;B^{\prime}\right)  _{\mathcal{N}%
_{A\rightarrow B^{\prime}}\left(  \phi^{\ast}\right)  } \\
& \,\,\, \,\,-I\left(  B;B^{\prime
}\right)  _{\mathcal{N}_{A\rightarrow B^{\prime}}\left(  \phi^{\ast}\right)
}\\
& \leq I\left(  RB;B^{\prime}\right)  _{\mathcal{N}_{A\rightarrow B^{\prime}%
}\left(  \phi^{\ast}\right)  }\\
& \leq\max_{\psi_{RA}}I\left(  R;B^{\prime}\right)  _{\mathcal{N}%
_{A\rightarrow B^{\prime}}\left(  \psi\right)  }.
\end{align*}
The first equality follows from the chain rule for quantum mutual information,
and the first inequality follows because $I\left(  B;B^{\prime}\right)  \geq
0$. The final inequality follows because $\phi^{\ast}$ is some pure bipartite
state with respect to the cut $RB|A$, where the systems $R$ and $B$
purify the input to the channel, and this is of course a particular kind of
state in the maximization on the LHS\ of (\ref{eq:other-maximization}).
\end{proof}

\subsection{Quantum Rate Distortion with Quantum Side Information}
\label{qsi}

Having established the quantum reverse Shannon theorem with QSI, we now prove
a theorem characterizing the quantum rate distortion function
in the presence of QSI, which is denoted as  $R^q_{qsi}(D)$ and was introduced 
at the beginning of Section~\ref{sec-qrd-qsi}.
%

\begin{theorem}
\label{thm:QSI-QRD-no-disturb} Consider a bipartite state $\rho_{AB}$, obtained
by the action of an isometry on the source state of a memoryless quantum 
information source. Suppose Alice has the system $A$ and Bob has the system $B$, the latter acting as QSI.
Let $\phi^\rho_{RAB}$ be a purification of $\rho_{AB}$. Then 
for any given distortion $D\geq0$, 
the quantum rate distortion function with QSI, evaluated under the condition 
that the protocol causes only a negligible
disturbance to the joint state of the $BR$ systems, is given by%
\begin{equation}
R^{q}_{qsi}\left(  D\right)  =\lim_{k\rightarrow\infty}\frac{1}{k}%
\min_{\mathcal{N}^{\left(  k\right)  }\ :\ \overline{d}\left(  \rho,\,\mathcal{N}^{\left(  k\right)  }\right)  \leq D}I_{p}\left(  \left(
\rho_{AB}\right)  ^{\otimes k},\cN^{(k)}\right) \label{eq:QRD-QSI}%
\end{equation}
where%
\begin{align*}
I_{p}\left(  \rho_{AB},\ \mathcal{N}_{A\rightarrow B^{\prime}}\right)   &
\equiv \inf_{V_{E\rightarrow E_{A}E_{B}}}
\max\bigg\{ \frac{1}{2}I\left(R;B^{\prime}E_{B}|B\right)_{\omega},\\
& \ H\left(  B^{\prime}E_{B}|B\right)_{\omega}\bigg\},
\end{align*}
with
$\omega_{RE_{A}E_{B}B^{\prime}B}  \equiv V_{E\rightarrow E_{A}E_{B}}
\left(\ U^{\mathcal{N}}_{A\rightarrow B^{\prime}E}\left( \phi^{\rho}_{RAB}\right)\right)$. 
In the above, 
$\cN_{A \to B'}$ and $\cN^{(k)}: \cD(\cH_{A^k}) \to \cD(\cH_{B^k})$ are 
CPTP maps, $U^{\mathcal{N}}_{A\rightarrow B^{\prime}E}$ is an isometric extension
of $\cN_{A \to B'}$, and $V_{E\rightarrow E_{A}E_{B}}$ is an environment-splitting isometry.
\end{theorem}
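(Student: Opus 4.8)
The plan is to read both directions off the non-feedback quantum reverse Shannon theorem with QSI (Theorem~\ref{thm:QRST-QSI-non-feedback}), specialized to no entanglement assistance ($E=0$). The organizing observation is that $I_p$ is the information quantity governing the rate $Q_{qsi}(0)$ that Theorem~\ref{thm:QRST-QSI-non-feedback} charges for simulating $\cN_{A\to B'}$ with Bob's QSI and no shared entanglement; consequently the regularized right-hand side of \reff{eq:QRD-QSI} is nothing but the regularized cost of such a channel simulation, minimized over all channels meeting the distortion constraint. The whole argument then parallels the unassisted case of \cite{DHW11}, with Schumacher compression replaced by the QSI reverse Shannon theorem.

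For achievability I would fix, for each block length $k$, a CPTP map $\cN^{(k)}$ attaining the minimum in \reff{eq:QRD-QSI}, so that $\overline{d}(\rho,\cN^{(k)})\leq D$, and then apply the achievability half of Theorem~\ref{thm:QRST-QSI-non-feedback} (at $E=0$) to simulate it. For $n$ large this gives a protocol communicating at a rate per source symbol no greater than $\tfrac{1}{k}I_p((\rho_{AB})^{\otimes k},\cN^{(k)})$, whose output on the reference, side-information, and reconstruction systems is $\epsilon$-close in trace distance to the genuine channel output $(\cN^{(k)}((\phi_{RAB}^{\rho})^{\otimes k}))^{\otimes n/k}$. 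This one closeness statement supplies both properties the QSI problem demands: since distortion is a linear (hence trace-norm continuous) functional of the output state, the average distortion is at most $D+o(1)$; and tracing out the reconstruction systems leaves the $R^nB^n$ marginal $\epsilon$-close to $(\phi_{RB}^{\rho})^{\otimes n}$, which is exactly the negligible-disturbance condition on Bob's QSI. Letting $k\to\infty$ and minimizing over distortion-constrained channels yields the right-hand side of \reff{eq:QRD-QSI} as an upper bound on $R^q_{qsi}(D)$.

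For the converse I would take the most general protocol---Alice encodes $A^n$ into a register $W$ with $\log\dim\cH_W=nQ$, sends $W$ to Bob, and Bob decodes $(W,B^n)$ into a reconstruction $B^{\prime n}$ while leaving his QSI intact---so that by hypothesis the $R^nB^n$ marginal remains $\epsilon$-close to $(\phi_{RB}^{\rho})^{\otimes n}$ and the induced map obeys the distortion constraint. I would then run the converse of Theorem~\ref{thm:QRST-QSI-non-feedback} essentially verbatim: pass to isometric extensions of the encoder and decoder, invoke Uhlmann's theorem---licensed by the negligible-disturbance bound---to produce a decoder isometry that acts trivially on $B^n$, and apply Uhlmann a second time together with the triangle inequality to bring the global state within $2\sqrt{\epsilon}$ of a code state $\tau$ of the form \reff{eq:code-state-QRST-QSI}, whose channel is the protocol's induced map $\cN^{(n)}$ and whose environment is split by an isometry $U_{E^n\to E_1E_2}$ manufactured along the way. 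The two entropic bounds of that converse, at $E=0$, read $nQ\geq H(B^{\prime n}E_2|B^n)_{\tau}-2n\epsilon^{\prime}$ and $2nQ\geq I(R^n;B^{\prime n}E_2|B^n)_{\tau}-2n\epsilon^{\prime}$, and combine into $Q\geq\tfrac{1}{n}I_p((\rho_{AB})^{\otimes n},\cN^{(n)})-2\epsilon^{\prime}$; here $I_p$ may replace the $\tau$-integrand because the infimum over environment-splitting isometries in its definition can only lower the value relative to the particular splitting $U$ that the protocol produced. Since $\cN^{(n)}$ meets the distortion constraint, I would lower bound by the minimum over all such channels and then send $n\to\infty$ and $\epsilon^{\prime}\to0$ to recover the regularized formula.

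The hard part is the Uhlmann step of the converse. The defining feature of the QSI setting is that the decoder is free to process Bob's side information $B^n$, whereas the entropic bounds I need are stated for a code state whose environment-splitting isometry touches only the channel's own environment $E^n$ and \emph{not} $B^n$. Converting the disturbance bound on the $R^nB^n$ marginal into the existence of a genuinely QSI-preserving decoder isometry is precisely where the negligible-disturbance hypothesis earns its keep, and it is the ingredient---absent from the classical Wyner-Ziv theory \cite{WZ76}---that makes the argument irreducibly quantum. A subsidiary chore is to shepherd the Alicki-Fannes continuity corrections through the chain of $\epsilon$- and $2\sqrt{\epsilon}$-close states so that they coalesce into the single $\epsilon^{\prime}$ that vanishes in the limit.
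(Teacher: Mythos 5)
Your proposal is correct and follows essentially the same route as the paper: achievability by fixing the minimizing map $\mathcal{N}^{(k)}$ and invoking the non-feedback QRST with QSI (Theorem~\ref{thm:QRST-QSI-non-feedback}) at $E=0$, and a converse that passes to isometric extensions, uses the negligible-disturbance hypothesis via Uhlmann's theorem to replace the decoder by one acting trivially on $B^n$, re-derives the two entropic bounds of (\ref{eq:1st-bound-QRST-QSI-non-f}) and (\ref{eq:chain-for-second-bound}) without entanglement, and then takes the infimum over environment splittings and the minimum over distortion-constrained channels. Your identification of the Uhlmann step as the point where the QSI-preservation hypothesis is consumed matches the paper's own emphasis, so no further comparison is needed.
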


\begin{proof}
The achievability part of this theorem follows easily from
Theorem~\ref{thm:QRST-QSI-non-feedback}. In particular, we just fix the map
that achieves the minimum in (\ref{eq:QRD-QSI}), and it easily follows that
performing the protocol from Theorem~\ref{thm:QRST-QSI-non-feedback} leads to
a state on $R^{n}$ and $B^{\prime n}$ that has distortion no larger than $D$
with the original state on $R^{n}$ and $A^{n}$. We can block the protocol to
achieve the regularized formula.

We now prove the converse. As before, the most general protocol begins with the
reference, Alice, and Bob sharing $n$ copies of the state $\phi^{\rho}_{RAB}$.
Alice performs some
encoding map $\mathcal{E}$\ on the systems $A^{n}$, obtaining a quantum system
$W$. She sends system$~W$ to Bob using noiseless qubit channels. Bob 
feeds this system, and his quantum side information $B^{n}$, into a
decoding map $\mathcal{D}$, which produces as output systems $B^{\prime n}$
and $\hat{B}^{n}$. We demand that the distortion of the state of systems
$R^{n}B^{\prime n}$ with respect to the state of $R^{n}A^{n}$ at the start of the protocol be no larger than~$D$. Also, we demand that the decoder causes only
an asymptotically negligible disturbance of the joint state of 
the reference and the quantum side
information, in the sense that for any $\eps >0$, for $n$ large enough:%
\begin{multline}
\bigg\Vert \text{Tr}_{B^{\prime n}}\left\{  \mathcal{D}^{WB^{n}\rightarrow
B^{\prime n}\hat{B}^{n}}\left(  \mathcal{E}^{A^{n}\rightarrow W}\left(
\left(  \phi^{\rho}_{RAB}\right)  ^{\otimes n}\right)  \right)  \right\}
\\-\left(  \phi^{\rho}_{RB}\right)  ^{\otimes n}\bigg\Vert _{1}\leq
\eps.\label{eq:do-not-disturb}%
\end{multline}

Like our other converses, the key to this proof is the realization that the above general
protocol can be simulated by one in which we exploit an isometric extension of
the encoder $U^{\mathcal{E}}$, which maps $A^{n}$ to $W$ and an environment
$E_{1}$. Let $\sigma$ denote the overall state after $U^{\mathcal{E}}$ acts
(so that $\sigma_{R^{n}E_{1}WB^{n}}$ is a pure state). We also exploit an
isometric extension $U^{\mathcal{D}}$ of the decoder, which maps $WB^{n}$ to
$B^{\prime n}\hat{B}^{n}$ and an environment $E_{2}$. Let $\omega\equiv\omega_{R^{n}E_{1}E_{2}B^{\prime n}\hat{B}^{n}}$ denote the
overall state after $U^{\mathcal{D}}$ acts. We proceed with bounding the rate $Q$ of
noiseless quantum communication by following the same steps as in
(\ref{eq:1st-bound-QRST-QSI-non-f}) and (\ref{eq:chain-for-second-bound}),
but ignoring the entanglement assistance:%
\bea
nQ &  \geq & H(E_{2}B^{\prime n}|B^{n})_{\omega^{\prime}}-n\epsilon^{\prime},\label{aa}\\
nQ &  \geq &\frac{1}{2}I\left(  E_{2}B^{\prime n};R^{n}|B^{n}\right)
_{\omega^{\prime}}-n\epsilon^{\prime}.%
\label{bb}
\eea
Putting things together, we obtain the following lower bound on the quantum
communication rate:%
\begin{align*}
nQ &  \geq\max\left\{  \frac{1}{2}I\left(  E_{2}B^{\prime n};R^{n}%
|B^{n}\right)  _{\omega^{\prime}},\ H(E_{2}B^{\prime n}|B^{n})_{\omega
^{\prime}}\right\}  -n\eps^{\prime}\\
&  \geq
\inf_{V^{E^{n}\rightarrow E_{1}E_{2}}}\max\bigg\{  \frac{1}{2}I\left(
E_{2}B^{\prime n};R^{n}|B^{n}\right)  _{\omega^{\prime}},\\
& \,\,\,\,\,\ \ \ \ \ \ \ \ \ \ \ \ H(E_{2}B^{\prime
n}|B^{n})_{\omega^{\prime}}\bigg\}  -n\eps^{\prime}\\
&  \geq\min_{\mathcal{N}^{\left(  n\right)  }\ :\ d\left(  \rho^{\otimes
n},\mathcal{N}^{\left(  n\right)  }\right)  \leq D}I_{p}\left(  \left(
\rho_{AB}\right)  ^{\otimes n},\mathcal{N}^{\left(  n\right)
}\right)  -n\eps^{\prime,}%
\end{align*}
where $\cN^{(n)}: \cD(\cH_{A^n}) \to \cD(\cH_{B^n})$ denotes a CPTP map.
The first inequality follows by combining \reff{aa} and \reff{bb}. The second
inequality follows by performing an optimization over all possible splits of the
environment. The final inequality results from minimizing $I_{p}$ over all
maps $\mathcal{N}^{\left(  n\right)  }$ that act only on $A^{n}$ and meet the
distortion criterion.
\end{proof}


\subsection{Entanglement-Assisted Quantum Rate Distortion with Quantum Side
Information}
\label{ent-ass-qsi}

If in addition to Bob having quantum side information, Alice and Bob have
prior shared entanglement (over systems $T_A$ and $T_B$) as an auxiliary resource, then the corresponding rate 
distortion function, for any given distortion $D$ is denoted as $R^{q}_{ea,qsi}(D)$.
It is defined analogously as in \reff{ea-enc}-\reff{ea-dec}, except that Bob can use the QSI in his decompression
task. Hence the decompression map is given by
$$\mathcal{D}_{n} : \mathcal{D}({\widetilde{\mathcal{H}}_{Q^{n}}} \otimes  \cH_{T_B} \otimes \cH_{B^n})\rightarrow
\mathcal{D}(\mathcal{H}_{A}^{\otimes n}).
$$
We prove the following theorem:

\begin{theorem}
\label{thm:EA-QSI-QRD}

Suppose Alice and Bob share entanglement and a state $\rho_{AB}$
(obtained from a memoryless quantum information source), such that the system
$A$ is with Alice and $B$ is with Bob. Let $\phi^{\rho}_{RAB}$ be
a purification of $\rho_{AB}$. Then the quantum
rate distortion function, $R^{q}_{ea,qsi}(D)$,
evaluated under the condition that the protocol
causes only a negligible disturbance to the joint state of $BR$ is given by%
\begin{equation}
R^{q}_{ea,qsi}\left(  D\right)  =\frac{1}{2}\left[\min_{\mathcal{N}\ :\ d\left(
\rho,\mathcal{N}\right)  \leq D}I\left( R;B^{\prime}|B\right)  _{\sigma}\right]
\label{eq:EA-QSI-QRD}%
\end{equation}
where the state $\sigma$ is defined as:%
\[
\sigma_{RB^{\prime}B}\equiv\mathcal{N}_{A\rightarrow B^{\prime}}\left(
\phi^{\rho}_{RAB}\right).
\]
\end{theorem}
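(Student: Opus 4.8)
The plan is to prove achievability using the feedback quantum reverse Shannon theorem with QSI (Theorem~\ref{thm:QRST-QSI-feedback}), and to prove the converse by adapting the converse of Theorem~\ref{thm:QSI-QRD-no-disturb}, exploiting the entanglement assistance to collapse the regularized bound there to a single letter. The conceptual point is that entanglement removes the entropy bound and the environment-splitting that obstruct single-letterization in the unassisted case, leaving only a conditional mutual information that behaves well under superadditivity.

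For achievability, I would fix a CPTP map $\mathcal{N}_{A\to B'}$ attaining the minimum in (\ref{eq:EA-QSI-QRD}), i.e.\ one with $d(\rho,\mathcal{N})\le D$ and $I(R;B'|B)_\sigma$ minimal. With unlimited shared entanglement the constraint $Q+E\ge H(B'|B)_\omega$ of Theorem~\ref{thm:QRST-QSI-feedback} is never binding, so a feedback simulation of $\mathcal{N}$ is achievable at quantum communication rate $Q=\tfrac12 I(R;B'|B)_\omega=\tfrac12 I(R;B'|B)_\sigma$. Running this simulation on $n$ copies produces a state on $R^nB'^nB^n$ asymptotically equal to $\sigma^{\otimes n}$, which has distortion $d(\rho,\mathcal{N})\le D$ on each copy and satisfies $\mathrm{Tr}_{B'}\sigma=\rho_{RB}$, so that the negligible-disturbance condition (\ref{eq:do-not-disturb}) holds automatically. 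Since the feedback theorem is single-letter (no double-blocking is needed), this already gives $R^q_{ea,qsi}(D)\le \tfrac12\min_{\mathcal{N}:d\le D}I(R;B'|B)_\sigma$.

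For the converse I would run the same reductions as in the converse of Theorem~\ref{thm:QSI-QRD-no-disturb}: replace encoder and decoder by isometric extensions with environments $E_1,E_2$, and invoke the disturbance condition (\ref{eq:do-not-disturb}) together with Uhlmann's theorem to produce an isometry $V$ that does not act on Bob's QSI $B^n$, so that the protocol state is close to one in which the effective map $\mathcal{N}^{(n)}$ acts only on $A^n$. Carrying the entanglement system $T_B$ through the mutual-information chain (\ref{eq:chain-for-second-bound}), using that $T_B$ is in product with $R^nB^n$ and that $I(R^n;W|B^nT_B)\le 2H(W)\le 2nQ$, yields the single bound $2nQ\ge I(R^n;B'^nE_2|B^n)_{\omega'}-2n\eps'$. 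Because entanglement leaves only this mutual-information bound (there is no $\max$ with $H(B'E_B|B)$ and no residual environment split $E_B$ as in Theorem~\ref{thm:QSI-QRD-no-disturb}), I can discard $E_2$ by nonnegativity of conditional mutual information, $I(R^n;B'^nE_2|B^n)\ge I(R^n;B'^n|B^n)$.

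The heart of the argument, and the step I expect to be the main obstacle, is single-letterizing $I(R^n;B'^n|B^n)_{\omega'}$. Since $\omega'|_{R^nB^n}=\rho_{RB}^{\otimes n}$ is a tensor power and $\mathcal{N}^{(n)}$ touches only $A^n$, I would establish superadditivity of the conditional mutual information in this setting: $H(R^n|B^n)=\sum_i H(R_i|B_i)$ holds with equality, while subadditivity of entropy followed by strong subadditivity gives $H(R^n|B'^nB^n)\le\sum_i H(R_i|B'_iB_i)$, so that $I(R^n;B'^n|B^n)\ge\sum_i I(R_i;B'_i|B_i)$. Each marginal map $\mathcal{N}_i$ has distortion $d_i$ with $\tfrac1n\sum_i d_i\le D$, so $I(R_i;B'_i|B_i)\ge 2f(d_i)$, where $f(D')\equiv\tfrac12\min_{\mathcal{N}:\,d(\rho,\mathcal{N})\le D'}I(R;B'|B)_\sigma$ is the expression on the RHS of (\ref{eq:EA-QSI-QRD}). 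Finally, convexity of $f$ in $D'$ (via the time-sharing/distortion-channel argument of Remark~\ref{rem:convexity} and Lemma~14 of \cite{DHW11}) together with Jensen's inequality and the fact that $f$ is non-increasing gives $\tfrac1n\sum_i f(d_i)\ge f(D)$, yielding the matching lower bound $Q\ge f(D)-\eps'$. The delicate points are the bookkeeping of the $E_2$ and $T_B$ systems in the entanglement-assisted chain and the verification that it is precisely this superadditivity-plus-convexity combination that upgrades the regularized unassisted formula to a single-letter one in the presence of entanglement.
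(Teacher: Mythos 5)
Your proposal is correct, and its overall architecture matches the paper's: achievability via state redistribution (your invocation of Theorem~\ref{thm:QRST-QSI-feedback} is equivalent to the paper's direct use of the Devetak--Yard protocol, since that theorem's achievability part is itself proved by state redistribution, and with unlimited entanglement only the bound $Q\geq\tfrac12 I(R;B'|B)_\omega$ binds), and a converse resting on the disturbance condition, Uhlmann's theorem, superadditivity, and convexity plus Jensen. Where you genuinely depart from the paper is in the organization of the converse. The paper works on the actual protocol state $\omega$: it first derives $2nQ\geq I(B'^n\hat{B}^n;R^n)_\omega - I(R^n;\hat{B}^n)_\omega - n\eps'$, then single-letterizes using the superadditivity of quantum mutual information (Lemma~\ref{super}) together with the fact that the state on $R^n\hat{B}^n$ is only $\eps$-close to a tensor power (Lemma~10 of \cite{WHBH12}), obtaining $\sum_k I(B_k';R_k|\hat{B}_k)_\omega$, and only then applies Uhlmann's theorem to replace the map with one acting on Alice's side alone. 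You instead apply Uhlmann first (as in the converse of Theorem~\ref{thm:QRST-QSI-non-feedback}), obtaining a state $\omega'$ generated by an effective map touching only $A^n$ (with $\Phi_{T_AT_B}$ absorbed as an ancilla of that map), so that the $R^nB^n$ marginal of $\omega'$ is exactly $\rho_{RB}^{\otimes n}$; this makes your superadditivity step elementary --- $H(R^n|B^n)_{\omega'}$ is exactly additive, and $H(R^n|B'^nB^n)_{\omega'}\leq\sum_i H(R_i|B_i'B_i)$ follows from the chain rule and strong subadditivity --- avoiding both Lemma~\ref{super} and the approximate-product-state lemma. The trade-off: the paper's ordering keeps all estimates on the true protocol state and confines the Uhlmann argument to a single late step, while yours spends one global Uhlmann application up front to purchase an exact product structure that makes the single-letterization cleaner. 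Your endgame (discarding $E_2$ by positivity of conditional mutual information, bounding each $I(R_i;B_i'|B_i)$ below by twice the single-letter function at distortion $d_i$ via the marginal channels $\mathcal{F}_n^{(i)}$, then convexity and monotonicity of that function with Jensen's inequality) coincides with the paper's.
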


\begin{proof}
The achievability part follows easily from the protocol for quantum state
redistribution. Fix $\mathcal{N}$ to be the CPTP map which achieves the
minimum in (\ref{eq:EA-QSI-QRD}) for a given distortion $D$. It is
straightforward to compute this minimum because the information quantity
$I\left(  B^{\prime};R|B\right)  _{\sigma}$ is convex in the map $\mathcal{N}$.
This follows easily from the identity $I\left(  B^{\prime};R|B\right)  _{\sigma
}=I\left(  B^{\prime}B;R\right)  _{\sigma}-I\left(  B;R\right)  _{\sigma}$, and
the fact that the map, $\cN$, acts only on the system $A$. 
The reference, Alice, and Bob share $n$
copies of $\phi_{RAB}^{\rho}$.
First Alice acts on her system $A^n$ with many instances of an isometric extension
$U^{\mathcal{N}}_{A\rightarrow B^{\prime}E}$ of the map 
$\mathcal{N}_{A\rightarrow B^{\prime}}$. Let 
$\left(\phi^{\sigma}_{RB^{\prime}EB}\right)^{\otimes n}$ denote the
resulting pure state, with the systems $B^{\prime n}E^n$ being in Alice's possession. Note that since $\cN$ is chosen to be the CPTP map which meets the distortion constraint $d(\rho, \cN) \le D$, the rate-distortion task is completed if the systems $B^{\prime n}$ are transmitted to Bob faithfully in the asymptotic limit ($n \to \infty$). This is accomplished by using the protocol of quantum state redistribution. From \reff{resource-ent} it follows that the
relevant resource inequality is given by:%
\begin{multline}
\langle\phi^{\sigma}_{R|B^{\prime}E|B}\rangle+\tfrac{1}{2}I\left( R; B^{\prime
}|B\right)  _{\phi^{\sigma}}\left[  q\rightarrow q\right]  +\tfrac{1}%
{2}I\left( E; B^{\prime}\right)  _{\phi^{\sigma}}\left[  qq\right]
\\
\geq\langle\phi^{\sigma}_{R|E|B^{\prime}B}\rangle+\tfrac{1}{2}I\left(
B;B^{\prime}\right)  _{\phi^{\sigma}}\left[  qq\right]  .
\end{multline}
Clearly, with unlimited entanglement, the
protocol accomplishes the state redistribution task with a rate of quantum communication given by \reff{eq:EA-QSI-QRD}.

We now prove the converse part of the above theorem. This proof bears some
similarities with our converse proof of Theorem~3~of Ref.~\cite{DHW11} and
with the converse proof of Theorem~6 of \cite{DHWW12}. The most general protocol begins with the reference, Alice, and Bob
sharing the state $\left(  \phi^{\rho}_{RAB}\right)  ^{\otimes n}$, and Alice
and Bob sharing entanglement in the systems $T_{A}$ and $T_{B}$, respectively
(such that the dimensions of $\cH_{T_A}$ and $\cH_{T_B}$ are no larger than $2^{nE}$).
Alice then acts with some encoding map $\mathcal{E}$\ on $A^{n}$ and $T_{B}$,
producing a system $W$. Let $\sigma$ denote the state shared by Alice, Bob and the reference after the encoding. She
sends $W$ to Bob, who then acts with a decoding map $\mathcal{D}$\ on $W$, his
share $T_{B}$ of the entanglement, and his quantum side information $B^{n}$ to
produce systems $B^{\prime n}$ and $\hat{B}^{n}$. Let $\omega$ denote the
final shared state. Without loss of generality, we can simulate the above protocol by considering an
isometric extension of the encoder that produces systems $W$ and $E$ as
output. We demand that the protocol causes only an asymptotically negligible
disturbance to the state on the reference and Bob's systems, in the sense that,
for any $\eps >0$ and $n$ large enough, the inequality in (\ref{trace2}) should hold.%
\begin{figure*}
\be\label{trace2}
\left\Vert \text{Tr}_{B^{\prime n}}\left\{  \mathcal{D}_{W T_B B^n\to B'^n\hat{B}^n}\left(  \mathcal{E}_{A^nT_A\to W}%
\left(  \left(  \phi^{\rho}_{RAB}\right)  ^{\otimes n}\otimes\Phi_{T_{A}T_{B}%
}\right)  \right)  \right\}  -\left(  \phi^{\rho}_{RB}\right)  ^{\otimes
n}\right\Vert _{1}\leq\eps,
\ee
\end{figure*}
Then the rate of quantum communication $Q\equiv(1/n) \log\left({\rm{dim }}\cH_W\right)$, needed for the rate-distortion coding task, satisfies the following bound:%
\be
2nQ \geq nR^{q}_{ea,qsi}\left(  D\right)  -3n\eps^{\prime},
\ee
for any $\eps^{\prime} >0$ and $n$ large enough. 
It is proved by employing standard entropic identities and inequalities, e.g. the quantum-data processing inequality, the Alicki-Fannes inequality \cite{AF04}, and the superadditivity of the quantum mutual information (Lemma~\ref{super}). It also relies on the fact that the state $R^{n}\hat{B}^{n}$ is $\eps$-close in trace distance to a
tensor-product state. For the sake of completemess, we have included the proof in Appendix~\ref{app-ea-converse}.
\end{proof}

\section{Conclusion}

\label{sec:conclusion}
We have extended quantum rate distortion theory by considering auxiliary resources
that might be available to the sender and receiver. The first setting we considered
is quantum rate distortion coding with the help of a classical side channel. Our result
is that the regularized entanglement of formation characterizes the quantum rate
distortion function, extending earlier work of Devetak and Berger \cite{Devetak:2002it}.
We also combined this bound with our entanglement-assisted bound from Ref.~\cite{DHW11}
to obtain the best known bounds on the quantum rate distortion function for an isotropic
qubit source. The second setting we considered is quantum rate distortion coding with quantum side
information available to the receiver. Before proving results in this setting, we proved
a quantum reverse Shannon theorem with quantum side information (for
tensor-power input states), which naturally extends the 
quantum reverse Shannon theorem (for tensor-power inputs) in Ref.~\cite{BDHSW09}. The achievability part of this
theorem relies on the quantum state redistribution protocol \cite{DY08,YD09}, while the converse
relies on the fact that the protocol can cause only a negligible disturbance to the state of the
reference and Bob's quantum side information. This result naturally leads to quantum rate-distortion theorems
with quantum side information, with or without entanglement assistance.

All of our proofs rely on one particular approach to quantum rate distortion theory: exploiting
a quantum reverse Shannon theorem for the task of quantum rate distortion coding. It would be
a breakthrough for this theory if one could develop a different approach that leads to better
characterizations of lossy quantum data compression tasks, beyond the ones presented here.

\section*{Acknowledgements}

The authors are grateful to Mario Berta, Patrick Hayden, and Ke Li for useful discussions,
and to Jianxin Chen for his help with generating the entanglement of purification 
plot in Figure~\ref{fig:RD-compare}. We are also grateful for feedback of the
anonymous referees.

MMW\ acknowledges support
from the Centre de Recherches Math\'{e}matiques at the University of Montreal.
MH received support from the Chancellor's postdoctoral research fellowship,
University of Technology Sydney (UTS), and was also partly supported by the
National Natural Science Foundation of China (Grant No.~61179030) and the
Australian Research Council (Grant No.~DP120103776).
AW acknowledges support from the European Commission (STREP ``QCS'' and
Integrated Project ``QESSENCE''), the ERC (Advanced Grant ``IRQUAT''),
a Royal Society Wolfson Merit Award and a Philip Leverhulme Prize.
The Centre for Quantum Technologies is funded by the Singapore
Ministry of Education and the National Research Foundation as part
of the Research Centres of Excellence programme.

\appendix

\section{Appendix: Alternate Proof of Convexity of the Quantum Rate Distortion Function}

\label{app:convexity-QRD}

First recall that the quantum rate
distortion function has the following characterization:%
\[
R^{q}\left(  D\right)  =\lim_{k\rightarrow\infty}\frac{1}{k}\left[
\min_{\mathcal{N}^{\left(  k\right)  }\ :\ \overline{d}(\rho,\mathcal{N}%
^{\left(  k\right)  })\leq D}\ E_{p}\left(  \rho^{\otimes k},\mathcal{N}%
^{\left(  k\right)  }\right)  \right]  .
\]
We would like to show that the expression on the RHS above is convex in $D$:%
\begin{equation}
R^{q}\left(  D_{\lambda}\right)  \leq\lambda R^{q}\left(  D_{1}\right)
+\left(  1-\lambda\right)  R^{q}\left(  D_{2}\right)
,\label{eq:QRD-EoP-convex}%
\end{equation}
where $D_{\lambda}\equiv\lambda D_{1}+\left(  1-\lambda\right)  D_{2}$. We
will choose $k_{1}$ and $k_{2}$ to be large integers such that%
\[
\frac{k_{1}}{k_{1}+k_{2}}\approx\lambda.
\]
(We can actually just take $k_{1}=\left\lceil ak_{2}\right\rceil $ for
$a=\left[  \frac{1}{\lambda}-1\right]  ^{-1}$, so that there is just one
integer to consider.) For $k_{i}$ and $D_{i}$ where $i\in\left\{  1,2\right\}
$, let $\mathcal{N}_{i}^{\left(  k_{i}\right)  }$ be the CPTP\ map that
minimizes%
\[
\min_{\mathcal{N}^{\left(  k_{i}\right)  }\ :\ \overline{d}(\rho
,\mathcal{N}^{\left(  k_{i}\right)  })\leq D_{i}}E_{p}\left(  \rho^{\otimes
k_{i}},\mathcal{N}^{\left(  k_{i}\right)  }\right)  .
\]
Thus, whenever the distortion measure under consideration is linear and
averaged, we have that the distortion caused by the map $\mathcal{N}%
_{1}^{\left(  k_{1}\right)  }\otimes\mathcal{N}_{2}^{\left(  k_{2}\right)  }$
is approximately equal to $D_{\lambda}$. Then we have that%
\begin{align*}
\frac{1}{k_{1}+k_{2}}  & \left[  \min_{\mathcal{N}^{\left(  k_{1}%
+k_{2}\right)  }\ :\ \overline{d}(\rho,\mathcal{N}^{\left(  k_{1}%
+k_{2}\right)  })\leq D}E_{p}\left(  \rho^{\otimes\left(  k_{1}+k_{2}\right)
},\mathcal{N}^{\left(  k_{1}+k_{2}\right)  }\right)  \right]  \\
& \leq\frac{1}{k_{1}+k_{2}}E_{p}\left(  \rho^{\otimes\left(  k_{1}%
+k_{2}\right)  },\mathcal{N}_{1}^{\left(  k_{1}\right)  }\otimes
\mathcal{N}_{2}^{\left(  k_{2}\right)  }\right)  \\
& \leq\frac{1}{k_{1}+k_{2}}\left[  E_{p}\left(  \rho^{\otimes k_{1}%
},\mathcal{N}_{1}^{\left(  k_{1}\right)  }\right)  +E_{p}\left(  \rho^{\otimes
k_{2}},\mathcal{N}_{2}^{\left(  k_{2}\right)  }\right)  \right]  \\
& =\frac{k_{1}}{k_{1}+k_{2}}\left(  \frac{1}{k_{1}}\right)  E_{p}\left(
\rho^{\otimes k_{1}},\mathcal{N}_{1}^{\left(  k_{1}\right)  }\right)
\\
& \,\,\,+\frac{k_{2}}{k_{1}+k_{2}}\left(  \frac{1}{k_{2}}\right)  E_{p}\left(
\rho^{\otimes k_{2}},\mathcal{N}_{2}^{\left(  k_{2}\right)  }\right)  \\
& \approx\lambda\ \frac{1}{k_{1}}E_{p}\left(  \rho^{\otimes k_{1}}%
,\mathcal{N}_{1}^{\left(  k_{1}\right)  }\right)  \\
& \,\,\,+\left(  1-\lambda\right)
\ \frac{1}{k_{2}}E_{p}\left(  \rho^{\otimes k_{2}},\mathcal{N}_{2}^{\left(
k_{2}\right)  }\right).
\end{align*}
The important second inequality follows from subadditivity of the entanglement
of purification. Since the above relation holds for every choice of $k_{2}$
and $k_{1}=\left\lceil ak_{2}\right\rceil $ and the corresponding minimizing
maps $\mathcal{N}_{1}^{\left(  k_{1}\right)  }$ and $\mathcal{N}_{2}^{\left(
k_{2}\right)  }$, it follows that it holds in the limit, implying
(\ref{eq:QRD-EoP-convex}) as desired. Clearly, this argument is very similar
to the observation regarding time-sharing of rate distortion codes in Remark~\ref{rem:convexity}.

\section{Appendix: Converse Proof of Theorem~\ref{thm:EA-QSI-QRD}}

\label{app-ea-converse}
Here we give the details of the proof of the converse part of Theorem~\ref{thm:EA-QSI-QRD}, which deals with entanglement-assisted quantum rate distortion with quantum side information. Continuing from \reff{trace2}, we obtain a 
lower bound on the optimal rate $Q$ of quantum communication needed for this
 \begin{align*}
 2nQ & \equiv 2 \log \left({\rm{dim}}\cH_W\right) \\
 &  \geq2H\left(  W\right) _{\sigma} \\
 &  =H\left(  W\right) _{\sigma}  +H\left(  R^{n}T_{B}B^{n}E\right) _{\sigma} \\
 &  =H\left(  W\right) _{\sigma}  +H\left(  R^{n}T_{B}B^{n}E\right) _{\sigma}  -H\left(
 WR^{n}T_{B}B^{n}E\right) _{\sigma} \\
 &  =I\left(  W;R^{n}T_{B}B^{n}E\right) _{\sigma} \\
 &  \geq I\left(  W;R^{n}T_{B}B^{n}\right)  _{\sigma}\\
 &  =I\left(  WT_{B}B^{n};R^{n}\right)  _{\sigma}+I\left(  W;T_{B}B^{n}\right)
 _{\sigma}-I\left(  R^{n};T_{B}B^{n}\right)  _{\sigma}
   \end{align*}
  \begin{align*}
 &  =I\left(  WT_{B}B^{n};R^{n}\right)  _{\sigma}+I\left(  W;T_{B}B^{n}\right)
 _{\sigma}-I\left(  R^{n};B^{n}\right)  _{\sigma}\\
 &  \geq I\left(  WT_{B}B^{n};R^{n}\right)  _{\sigma}-I\left(  R^{n}%
 ;B^{n}\right)  _{\sigma}\\
 &  \geq I(B^{\prime n}\hat{B}^{n};R^{n})_{\omega}-I(R^{n};\hat{B}^{n}%
 )_{\omega}-n\eps^{\prime}.%
\label{bdd1}
 \end{align*}
In the above, the states $\sigma$ and $\omega$ are as defined above \reff{trace2}.
 The first inequality follows because the entropy of a system is never larger
 than the logarithm of its dimension. The first equality follows because the
 entropy of the marginals of a pure bipartite state are equal. The second
 equality follows because the entropy of a pure state is equal to zero, so that
 $H\left(  WR^{n}T_{B}B^{n}E\right)  =0$. The third equality is an identity.
 The second inequality follows from quantum data processing. The fourth
 equality follows from an identity for the quantum mutual information. The
 fifth equality follows because $R^{n}B^{n}$ and $T_{B}$ are in a product
 state, so that $I\left(  R^{n};T_{B}B^{n}\right)  _{\sigma}=I\left(
 R^{n};B^{n}\right)  _{\sigma}$. The third inequality follows because $I\left(
 W;T_{B}B^{n}\right)  _{\sigma}\geq0$. The last inequality follows from quantum
 data processing and the assumption that that protocol causes only a negligible
 disturbance to the state of the reference and the receiver (the term
 $\eps^{\prime}$ arises from an application of the Alicki-Fannes'
 inequality, where $\eps^{\prime}$ is a function of $\eps$ such that
 $\lim_{\eps\rightarrow0}\eps^{\prime}\left(  \eps\right)  =0$).

Let $B^{\prime}_k$, $k=1,2,\ldots, n$ denote the subsystems (with Hilbert space $\cH_{B^{\prime}}$) constituting the system 
$B^{\prime n}$. Similarly, let ${\hat{B}}_{k}$ and $R_{k}$ ($k=1,2,\ldots, n$) denote the
corresponding subsystems of ${\hat{B}}^n$ and  ${{R}}^n$ respectively. Then
 \begin{align*}
 {\hbox{RHS of \reff{trace2}}} \,&  \geq\sum_{k}\left[  I(B_{k}^{\prime}\hat{B}_{k};R_{k})_{\omega}%
 -I(R_{k};\hat{B}_{k})_{\omega}\right]  -2n\eps^{\prime}\\
 & = \sum_k I(B_k^\prime ;R_k|{\hat{B}}_k)_\omega - 2n \eps^\prime.
\end{align*}

 The first inequality follows from superadditivity of quantum mutual
 information (Lemma~\ref{super}) and from the fact that the
 state on $R^{n}\hat{B}^{n}$ is $\eps$-close in trace distance to a
 tensor-product state (see Lemma~10 of Ref.~\cite{WHBH12}). The second equality
 follows from the identity $I(B_{k}^{\prime}\hat{B}_{k};R_{k})_{\omega}
 -I(R_{k};\hat{B}_{k})_{\omega}=I(B_{k}^{\prime};R_{k}|\hat{B}_{k})_{\omega}$.

At this point, we proceed with an argument similar to that in the converse
proof of Theorem~6 of \cite{DHWW12}. 
Since the state on $R^{n}\hat{B}^{n}$ is not disturbed too much, by Uhlmann's theorem 
we know that there exists 
a CPTP map acting Alice's system alone, such that the conditional mutual information of the state $\omega^{\prime}$ resulting from this map is a lower bound on the 
conditional mutual information $I(B_{k}^{\prime};R_{k}|\hat{B}_{k})_{\omega}$
for every $k\in \{1,2,\ldots, n\}$:%
\[
\geq\sum_{k}I(B_{k}^{\prime};R_{k}|\hat{B}_{k})_{\omega^{\prime}}%
-3n\eps^{\prime}.
\]
Continuing, we have the same set of inequalities as in the last part of the
proof of Theorem~6 of \cite{DHWW12} (with $R_{ea,qsi}$
replacing $R^{qc}_{qsi}$):%
\begin{align*}
&  \geq\sum_{k}R^{q}_{ea,qsi}\left(  d\left(  \rho,\mathcal{F}_{n}^{\left(
k\right)  }\right)  \right)  -3n\eps^{\prime}\\
&  =n\sum_{k}\frac{1}{n}R^{q}_{ea,qsi}\left(  d\left(  \rho,\mathcal{F}%
_{n}^{\left(  k\right)  }\right)  \right)  -3n\eps^{\prime}\\
&  \geq nR^{q}_{ea,qsi}\left(  \sum_{k}\frac{1}{n}d\left(  \rho,\mathcal{F}%
_{n}^{\left(  k\right)  }\right)  \right)  -3n\eps^{\prime}\\
&  \geq nR^{q}_{ea,qsi}\left(  D\right)  -3n\eps^{\prime}.
\end{align*}

\bibliographystyle{IEEEtran}
\bibliography{Ref}

\end{document}